\newcommand{\norm}[1]{}
\spnewtheorem{construct}{Construction}{\bfseries}{\itshape}
\spnewtheorem*{thm}{Theorem}{\bfseries}{\itshape}
\spnewtheorem*{lem}{Lemma}{\bfseries}{\itshape}
\newcommand{\hil}[1]{\ensuremath{\mathcal{#1}}}
\newcommand{\hildd}[1]{\ensuremath{\mathscr{#1}}}
\newcommand{\pos}[1]{\ensuremath{\mathrm{Pos}(#1)}}
\newcommand{\idt}{\ensuremath{\mathds{1}}}
\newcommand{\ttt}{\ensuremath{\texttt{t}}}
\newcommand{\ek}{\ensuremath{\textsf{ek}}}
\newcommand{\ck}{\ensuremath{\textsf{ck}}}
\newcommand{\nk}{\ensuremath{\textsf{k}}}
\newcommand{\TLP}{\ensuremath{\textsf{TLP}}}
\newcommand{\OWF}{\ensuremath{\textsf{pq-OWF}}}
\newcommand{\TD}{\ensuremath{\textsf{TD}}}
\newcommand{\TdVK}{\ensuremath{\textsf{TD-DVK}}}
\newcommand{\QROM}{\ensuremath{\textsf{QROM}}}
\newcommand{\QPKE}{\ensuremath{\textsf{QPKE}}}
\newcommand{\BQSM}{\ensuremath{\textsf{BQSM}}}
\newcommand{\Hy}{\ensuremath{\textsf{H}}}
\newcommand{\advs}{\ensuremath{\adv_{\texttt{s}}}}
\newcommand{\ot}{\ensuremath{\mathcal{O}_{\texttt{s}}}}
\newcommand{\s}{\ensuremath{\texttt{s}}}
\renewcommand{\qed}{{\hfill{$\blacksquare$}}}
\begin{document}
\title{How to Sign Quantum Messages}

\author{Mohammed Barhoush \and Louis Salvail}
\institute{Universit\'e de Montr\'eal (DIRO), Montr\'eal, Canada\\
\email{mohammed.barhoush@umontreal.ca}\ \ \ \email{salvail@iro.umontreal.ca}}%

\maketitle

\begin{abstract}
Signing quantum messages has long been considered impossible even under computational assumptions. In this work, we challenge this notion and provide three innovative approaches to sign quantum messages that are the first to ensure authenticity with public verifiability. Our contributions can be summarized as follows:
\begin{enumerate}
    \item \emph{Time-Dependent Signatures:} We introduce the concept of time-dependent ($\TD$) signatures, where the signature of a quantum message depends on the time of signing and the verification process depends on the time of the signature reception. We construct this primitive assuming the existence of post-quantum secure one-way functions ($\OWF$s) and time-lock puzzles ($\TLP$s).

    \item \emph{Dynamic Verification Keys:} By utilizing verification keys that evolve over time, we eliminate the need for $\TLP$s in our construction. This leads to $\TD$ signatures from $\OWF$s with dynamic verification keys. 
    
    \item \emph{Signatures in the BQSM:} We then consider the bounded quantum storage model, where adversaries are limited with respect to their quantum memories. We show that quantum messages can be signed with information-theoretic security in this model.
\end{enumerate}

Moreover, we leverage $\TD$ signatures to achieve the following objectives, relying solely on $\OWF$s:
\begin{itemize}
    \item \emph{Authenticated Quantum Public Keys:} We design a public-key encryption scheme featuring authenticated quantum public-keys that resist adversarial tampering. 
    \item \emph{Public-Key Quantum Money:} We present a novel $\TD$ public-key quantum money scheme.
\end{itemize}
\end{abstract}

\newpage
%\tableofcontents
%\setcounter{secnumdepth}{3}
% \setcounter{tocdepth}{3}
%\newpage
 %   \tableofcontents 
  %  \newpage

\section{Introduction}

Given the consistent advancements in quantum computing, it is expected that future communications will feature quantum computers transmitting over quantum channels. A fundamental question naturally arises: how can quantum data be securely transmitted within the emerging quantum internet? One option is for users to share secret keys through secure channels. Yet, this option quickly becomes unwieldy as the number of users grows or when secure channels are not readily available. Another option is to rely on quantum key distribution \cite{BB84}, but this too is inefficient for large-scale applications as it requires several rounds of interaction with each user. In contrast, classical information can be encrypted and authenticated non-interactively via public channels. Can a similar feat be achieved quantumly? 

Towards this goal, several works have shown how to encrypt quantum information from standard assumptions \cite{ABF16,AGM21}. Yet, somewhat surprisingly, signing quantum information has been shown to be impossible \cite{BCG02,AGM21}. On the other hand, classical digital signature schemes have been crucial cryptographic primitives - realizing a range of applications including email certification, online transactions, and software distribution.
%Informally, the reason for this is that authentication in the quantum setting requires the message to be encrypted. This is problematic because any verification procedure should reveal some information about the message which seemingly implies that public verification is unattainable.

As a result of the impossibility, researchers have focused on an alternative approach to quantum authentication called \emph{signcryption}. In this setting, the sender uses the recipient's public encryption key to encrypt a message before signing it. Only the recipient can verify that the signature is authentic by using their secret decryption key which means that signcryption does not allow for public verifiability -- only a single receipient can verify. Such schemes clearly rely on assumptions for public-key encryption such as trapdoor functions. Until this point, it was widely believed that signcryption is the only way to evade the impossibility result. In fact, Alagic, Gagliardoni, and Majenz \cite{AGM21} carried out an in-depth analysis on the possibility of signing quantum information and concluded that \emph{``signcryption provides {the only way} to achieve integrity and authenticity without a pre-shared secret"}. In this work, we address and revisit the questions:

\begin{center}
\emph{Are there alternative methods to authenticate quantum information without a pre-shared secret?}
\end{center}

Interestingly, this question has important implications in quantum public key encryption ($\QPKE$). Traditionally, classical public-key encryption (PKE) can not be built from one-way functions \cite{IR89} and requires stronger assumptions such as trapdoor functions. However, the works \cite{GSV23,C23} show that PKE can be achieved from post-quantum secure classical one-way functions ($\OWF$) if the public-keys are quantum! Yet, these constructions face a serious problem: authenticating a quantum state is difficult. This issue is not addressed in these works; as a result, these constructions need to assume secure quantum channels for key distribution which is quite a strong physical setup assumption given the goal of encryption is to establish secure communication over insecure channels. On the other hand, there are well-established methods to distribute classical keys, for instance through the use of classical digital signatures. Such procedures are referred to as \emph{public-key infrastructure}. 

Ideally, we aim to establish encryption based on public-key infrastructure and one-way functions. More simply, we want to authenticate the quantum keys in $\QPKE$ schemes using classical certification keys. Prior to this work, the only way to do this would be through the use of signcryption. But, then we are back to relying on assumptions of classical PKE; not to mention the inconveniences of having to encrypt each public-key specific to the recipient. In particular, the following is another critical question addressed in this work:

\begin{center}
\emph{Is $\QPKE$ with {publicly-verifiable quantum public-keys} possible from $\OWF$s?}
\end{center}

Another relevant application of quantum signatures pertains to quantum money. Quantum money was first introduced by Weisner \cite{W83} in 1969 as a method to prevent counterfeiting by utilizing the unclonability of quantum states. There are two main categories of quantum money: private-key and public-key. Public-key systems rely on physical banknotes that are hard to counterfeit and are verifiable with a public-key. Meanwhile, private-key systems rely on a third party that oversees transactions and approves or declines payments. The main advantage of public-key systems is that transactions do not require interaction with a third party. As a result, such systems are more feasible for large-scale applications and provide more privacy to users. 

Unfortunately, public-key systems are more difficult to realize. On one hand, several works have constructed private-key quantum money \cite{W83,JLS18,RS19} from standard assumptions. On the other hand, public-key quantum money has only been constructed assuming indistinguishability obfuscation \cite{AC12,Z21,S22} or from new complex mathematical techniques and assumptions \cite{FGH12,KA22,KSS21} which we are only beginning to understand and, some of which, have been shown to be susceptible to cryptoanalytic attacks \cite{BDG23}. Sadly, all existing constructions for indistinguishability obfuscation are built on assumptions that are post-quantum insecure \cite{AJL19,JLS21,JLS22} or on new assumptions \cite{GP21,WW21} that have been shown to be susceptible to cryptoanalytic attacks \cite{HJL21}. 

As a result, public-key quantum money remains an elusive goal. Prompted by this predicament, Ananth, Hu, and Yuen \cite{AHY23} recently conducted a comprehensive investigation into public-key quantum money, culminating in a strong negative result. In particular, they showed that public-key quantum money is impossible to build using any collusion-resistant hash function in a black-box manner under certain conditions. This discussion raises the following important question addressed in this work:

\begin{center}
\emph{Is public-key quantum money possible from standard computational assumptions?}
\end{center}

\subsection{Summary of Results}

The impossibility of signing quantum information was first discussed by Barnum, Crepeau, Gottesman, Smith, and Tapp \cite{BCG02} and later established more rigorously by Alagic, Gagliardoni, and Majenz \cite{AGM21}. Informally, their rationale was that any verification algorithm, denoted as $V$, which can deduce the quantum message (or some information thereof) from a signature, can be effectively inverted $V^\dagger$ to sign a different message.

%Intuitively, the latter work that gains some non-trivial information about an authenticated quantum message can alter the authenticated state with a non-negligible success probability. The authors reasoned that any verification method should reveal some non-negligible information regarding the message; hence, public verification is unattainable. 

The central innovation of this work lies in recognizing that, in specific scenarios, this inversion attack consumes a prohibitive amount of resources. This study explores two variations of this concept, with the resource factor taking the form of either time or quantum space, as we clarify in the ensuing discussion. 

The first approach to signing quantum information is to vary the signing and verification procedures with respect to time. We term this approach \emph{time-dependent ($\TD$) signatures} and present two constructions realizing this primitive. The first construction relies on the assumption of \emph{time-lock puzzles} ($\TLP$) \cite{RSW96,BGJ16} and $\OWF$s. Here, the $\TLP$s ensure that the verification procedure demands prohibitive time to inverse. We note that $\TLP$s can be built from $\OWF$s \cite{BGJ16,CFH21} in the quantum random oracle model ($\QROM$) \cite{BD11}, where all parties are given quantum oracle access to a truly random function.
%This cryptographic primitive provides encryption that hides the message for a time $t$ from any QPT adversary but allows for decryption in time $T\approx t$. Additionally, encryption is faster than decryption - explicitly, the encryption circuit has depth $\log(T)$ and size $T$. Hence, to sign a quantum message at time $T$, we sample a random key $\nk$ and authenticate the message in a symmetric key fashion using $\nk$. Next, the authenticated message is sent along with a $\TLP$ encrypting $(\nk,T)$ along with a classical signature of the tuple

The second construction also enforces a verification process that is time-consuming to invert. However, in this case, this enforcement is achieved more directly by delaying the announcement of the verification key. Specifically, we authenticate and encrypt messages in a symmetric-key fashion and announce the key later. By the time the key is revealed, allowing users to validate old signatures, it is too late to exploit the key for forging new signatures. As a result, the verification key must be continually updated to allow for new signature generation. We denote schemes with dynamic verification keys as \emph{$\TdVK$ signatures} for brevity. An attractive aspect of $\TdVK$ signatures is that they can be based solely on $\OWF$s, yielding surprisingly powerful applications from fundamental assumptions. 

We demonstrate how to utilize $\TdVK$ signatures to build more secure $\QPKE$ schemes where the quantum public-keys are signed with $\TdVK$ signatures. This approach allows basing $\QPKE$ on $\OWF$s and public-key infrastructure, i.e. enabling encryption with tamper-resilient quantum public-keys. Furthermore, we employ $\TdVK$ signatures to construct the first public-key quantum money scheme based on a standard computational assumption namely $\OWF$s. The verification key in this setting is dynamic, preventing a completely offline money scheme. Fortunately, we are able to mitigate this issue and obtain a completely offline public-key quantum money scheme by utilizing our $\TD$ signature scheme, albeit by relying on $\TLP$s.

Our second strategy for signing quantum information involves a verification process that necessitates an impractical amount of quantum memory to invert. To achieve this, we need to assume the adversary's quantum memory (qmemory) size is limited leading us to the \emph{bounded quantum storage model} ($\BQSM$) \cite{dfss05}. We construct a signature scheme for quantum messages in this model that requires users to have $\ell^2$ qubits of quantum memory, where $\ell$ is the size of the quantum message to be signed, and is information-theoretically secure against adversaries with $\s$ quantum memory where $\s$ can be set to any fixed value that is polynomial with respect to the security parameter. Note that $\s$ is not related to $\ell$ and only has an effect on the length of the quantum transmissions involved. 

\subsection{Future Work}

Time acts a form of synchronization between players in a protocol. This work demonstrates the power of utilizing time in cryptography. Specifically, it shows how incorporating time-dependence can significantly aid in the construction of fundamental cryptographic primitives such as $\QPKE$, quantum signatures, and public-key quantum money. This is particularly encouraging given the simplicity and security of implementing time-dependence in practice. Therefore, an interesting avenue for future work is to consider what other cryptographic primitives can benefit from this approach. 

Another possible topic is the relation between the ``time-dependent model'' and alternative models in cryptography. For instance, in the well-studied \emph{common reference string model}, a trusted setup provides all parties with the same string. Time is somewhat related to a form of dynamic common reference string that is completely non-random. A more rigorous abstraction of the time-dependent model might help us understand alternative methods to sign quantum messages. 

Finally, we believe that the $\BQSM$ has not received sufficient attention. Given the practical challenges of storing and operating on quantum states, this model is very well motivated. However, there have been few studies on cryptographic primitives within this model. We hope that this work will encourage further research in this area.

\subsection{Related Work}

To the best of our knowledge, this is the first work to sign quantum messages without using signcryption. As previously discussed, while signcryption is effective in its own right, it lacks the feature of public verifiability and entails a form of interaction, requiring the distribution of public encryption keys to the signer. Consequently, this is the first approach ensuring the authenticity of quantum messages while maintaining public verifiability.

Meanwhile, the topic of public-key quantum money has been extensively studied. Nevertheless, to our knowledge, this is the first work to achieve this highly elusive goal from standard assumptions, albeit with a time dependence factor. However, it is relevant to acknowledge the work of Roberts and Zhandry in the creation of \emph{franchised quantum money} \cite{Z212}. Their franchised approach provides an inventive way to avoid some of the challenges of public-key quantum money by providing users with distinct verification keys. Notably, their scheme is based on $\OWF$s, although, it faces some serious drawbacks. Specifically, it requires secure private channels to each and every user at the start in order to distribute the secret verification keys. Moreover, the scheme only supports a limited number of participants and becomes vulnerable if a sufficiently large group of users were to collude.

Finally, no prior works presented methods to authenticate quantum public-keys in $\QPKE$. However, concurrently and independently of our work, Kitagawa, Morimae, Nishimaki, and Yamakawa \cite{KMN23} explored the concept of $\QPKE$ with tamper-resilient quantum public-keys, achieving a similar goal. They introduced a security notion akin to IND-CKPA and build a scheme satisfying this notion through a different approach without relying on time. Specifically, they rely on recently discovered properties of quantum states \cite{AAS20} to create tamper-resilient quantum public keys. While their scheme is CKPA-secure under the assumption of $\OWF$s, it does not allow users to verify or detect modified keys, potentially leading to invalid encryptions that exhibit decryption errors. Their work concludes with two open questions: 
\begin{enumerate}
    \item Is it possible to design a CKPA-secure $\QPKE$ scheme without decryption errors?
    \item Can a CKPA-secure $\QPKE$ scheme be built based on potentially weaker assumptions than $\OWF$s?
\end{enumerate}

Both questions are answered affirmatively in our work by incorporating time-dependence, as detailed in Sec.~\ref{sec:QPKE}. In Constructions \ref{con:Asy} \& \ref{con:pure-state}, keys are publicly verifiable and thus ciphertexts exhibit no decryption errors, addressing the first question. Furthermore, Construction \ref{con:pure-state} is CKPA-secure and is based on PRFSs, addressing the second question. Our approach also has the advantage of being more generally applicable to any quantum public-key scheme and achieving a stronger level of security, demonstrating the effectiveness of time as a cryptographic tool.    

{Another related notion is signatures with certified deletion \cite{MPY23}. In this protocol, the receiver of a signature can perform an operation to delete the signature, producing a certificate that verifies the deletion. In our case, the signature is invalidated after a certain period regardless of the receiver's actions. Note that the  techniques of certified deletion cannot be used to sign a quantum message, as otherwise, it would imply a standard signature scheme of quantum messages. 
}
\section{Technical Overview}

We now describe each contribution in more detail. 

\paragraph{Time-Dependent Signatures.}

A $\TD$ signature scheme is the same as a standard signature scheme except that the signing and verification algorithms depend on an additional variable: \emph{time}. Henceforth, we let $\ttt$ denote the current time and it is assumed that all parties are aware of the current time up to some small error. {This may implicitly assume a trusted third party for clock synchronization.} More formally, $\TD$ signatures are defined as follows:
{
\begin{definition}[Time-Dependent Signatures]
A \emph{$\TD$ signature scheme} $\Pi$ over quantum message space $\hildd{Q}$ consists of the following algorithms: 
\begin{itemize}
    \item $\textsf{KeyGen}(1^\lambda)$: Outputs $(\sk,\vk)$, where $\sk$ is a secret key and $\vk$ is a verification key.
    \item $\textsf{Sign}(\sk,\ttt, \phi):$ Outputs a signature $\sigma$ for $\phi \in \hildd{Q}$ based on the time $\ttt$ using $\sk$. 
    \item $\textsf{Verify}(\vk, \ttt, \sigma')$: Verifies the signature ${\sigma}'$ using $\vk$ and time $\ttt$. Correspondingly outputs a message $\phi'$ or $\perp$. 
\end{itemize}
\end{definition}
}
We now describe the construction for a $\TD$ signature in based on $\TLP$s and $\OWF$s.

$\TLP$s is a cryptographic primitive that enables hiding a message for a time $t$, but allows for decryption in a similar time $t'\approx t$. %Encryption should be much more efficient than decryption -- specifically, the encryption circuit should have depth $\log(t')$ and size $t'$. 
{
\begin{definition}[Time-Lock Puzzles (Informal)]
    A $\TLP$ consists of two algorithms: 
\begin{enumerate}
        \item $\textsf{Puzzle.Gen}(1^\lambda,t, s):$ Generates a classical puzzle $Z$ encrypting the solution $s\in \{0,1\}^\lambda$ with difficulty parameter $t$.
        \item $\textsf{Puzzle.Sol}(Z):$ Outputs a solution $s$ for the puzzle $Z$. 
    \end{enumerate}
    Security informally requires that no QPT adversary can decipher the  solution $s$ of a puzzle $Z$ in time $t$.
\end{definition} }
    
Bitansky, Goldwasser, Jain, Paneth, Vaikuntanathan, and Waters \cite{BGJ16} showed that $\TLP$ can be constructed, assuming the existence of $\OWF$s, from any non-parallelizing language. This result holds true even in the quantum realm. It is worth noting that non-parallelizing languages have been derived from a random oracle \cite{CFH21}. 

We provide a brief description of our construction for $\TD$ signatures. We denote the algorithms for a one-time symmetric authenticated encryption scheme on quantum messages as $({\textsf{1QGen}},{\textsf{1QEnc}},{\textsf{1QDec}})$, which exists unconditionally \cite{BCG02}. {Such a protocol allows for the secure authenticated encryption of a single quantum message.} To sign a quantum message $\phi$, we sample a key $\nk\leftarrow {\textsf{1QGen}}(1^\lambda)$ and authenticate the message as $\tau\leftarrow {\textsf{1QEnc}}(\nk, \phi)$. Following this, we generate a $\TLP$ $Z$ requiring 1 hour to solve and whose solution is the message $(\nk,T, \textsf{sig})$, where $T$ corresponds to the time at the moment of signing and $\textsf{sig}$ is a signature of $(\nk,T)$ under a classical signature scheme. Note that signatures on classical messages exist assuming $\OWF$s \cite{R90}\footnote{This reduction is classical but holds in the quantum setting as well.} Consequently, the signature of $\phi$ is $(\tau, Z)$. 

Assume that a receiver gets the signature at time $T'$. To verify, the receiver solves the puzzle $Z$ to learn $(\nk,T, \textsf{sig})$. If the signature $\textsf{sig}$ is valid and the time of reception is close to $T$, i.e. $T'$ is within half an hour from $T$, then the verifier outputs ${\textsf{1QDec}}(\nk,\tau)$ to retrieve $\phi$. However, it is crucial to understand that the verifier can no longer use the pair $(\nk,T)$ to sign a new message because by the time the puzzle is unlocked, it has already become obsolete! Specifically, the time elapsed is at least $\ttt\geq T+1$, leading future verifiers to reject a message signed under time $T$.  

Note that an alternative approach to using $\TLP$s would be to simply send the encoded message $(\nk,T, \textsf{sig})$ after some delay. Hence, a signature would then consist of two parts. We do not consider this solution further as it follows in the same way, albeit we need to use delays in the signature transmission. Later, we discuss a solution that does not rely on $\TLP$s or on delays.

Notice how the use of $\TLP$ (or delays) gives us the ability to add intricacy to the verification process, and this is precisely what is needed to circumvent the impossibility result. 

\paragraph{The Issue with $\TLP$s.}

Our construction of $\TD$ signatures is seemingly hindered by its reliance on $\TLP$. Formally, a $\TLP$ does not actually guarantee that no algorithm can decrypt without sufficient time; rather, it ensures that no algorithm can decrypt without sufficient algorithmic depth (CPU-time). We would like to translate this CPU-time into real-time. However, we have limited control over this conversion due to the variations in computer speeds. For instance, silicon gates are slower than gallium arsenide gates. 

On a positive note, exact equivalence between CPU-time and real-time is not necessary. As long as there exists a bound on CPU-time based on real-time, then using this bound we can ensure security of our $\TD$ signatures. Explicitly, we modify the verification process to only pass if the time of reception is close to the time of generation, but the $\TLP$ requires much longer to solve depending on this bound. This prevents even a fast computer from unlocking the puzzle before it must send it, while still allowing slower computers to validate signatures, albeit with some delay. 

However, such practical investigations are out of the scope of this work but can be explored more rigorously in future work. In this work, we only focus on the theoretical aspect and assume that CPU-time and real-time are equivalent.

\paragraph{Time-Dependent Signatures with Dynamic Verification Keys.}

We now show how we can discard $\TLP$s, giving $\TD$ signatures in the plain model assuming only $\OWF$s. In this construction, we introduce a novel element: dynamic verification keys that evolve with time. First, note that post-quantum pseudorandom functions (\textsf{pq-PRF}) and symmetric key quantum authenticated encryption scheme (\textsf{QAE}) on quantum messages have both been derived from $\OWF$s \cite{Z12,AGM18}. Let $({\textsf{QEnc}},{\textsf{QDec}})$ denote the algorithms for such an \textsf{QAE} scheme and let $F$ be a \textsf{pq-PRF}.

Here is how our construction unfolds: we choose a time-independent secret key $\sk$ and generate temporary signing keys $\nk_i\coloneqq F(\sk,i)$ to be used within the time interval $[t_i,t_{i+1})$, {where $t_i\coloneqq i\cdot \delta$ for some fixed time $\delta>0$}. To sign a message $\phi$ within this time frame, we simply output $\tau_i \leftarrow {\textsf{QEnc}}(\nk_i, \phi)$. At time $t_{i+1}$, we announce the verification key $\vk_i\coloneqq (t_i,t_{i+1},\nk_i)$. This allows any recipient that received $\tau_i$ to decrypt and verify the message. Users are required to take note of the time they receive a signature and store it until the verification key is unveiled. 

Importantly, an adversary cannot predict a signing key $\nk_i$ by the security of \textsf{pq-PRF}s. Consequently, they cannot forge a ciphertext in the time frame $[t_i,t_{i+1})$ by the security of the \textsf{QAE} scheme. Later, at time $t_{i+1}$, any attempt to forge a signature using $\nk_i$ becomes invalid since users will then verify using newer keys.

%While the verification key changes with time, this approach is considerably more practical to implement in comparison to signcryption, where every recipient needs to individually generate and securely announce their public encryption keys. Furthermore, $\TD$ signatures can be achieved from $\OWF$s, whereas signcryption relies on classical PKE assumptions. Most importantly, $\TD$ signatures allow for public verification, enabling various applications. 

On the downside, $\TD$ signatures come with an expiration period, meaning they must be utilized within a relatively short time frame from their generation. Note that the same issue exists in our first construction as well. This limitation may restrict their utility in certain applications compared to traditional signatures. However, in some scenarios, this limitation can also be seen as an advantage. In particular, our signatures satisfy a form of ``disappearing'' security which was heavily studied in the bounded memory (classical and quantum) models. Specifically, in our schemes, an adversary is incapable of forging a signature of a message $\phi$ even if it received a signature of $\phi$ earlier!

\paragraph{Signatures in the Bounded Quantum Storage Model.}

Next, we introduce an alternative approach to sign quantum information within the $\BQSM$ without relying on time. In this model, an adversary $\advs$ possesses unlimited resources at all times except at specific points, where we say the \emph{qmemory bound applies}, reducing $\advs$'s quantum memory to a restricted size of $\texttt{s}$ qubits\footnote{{ Note that in the $\BQSM$, schemes rely on two parameters: adversary's memory $\texttt{s}$ and the security parameter $\lambda$. The two parameters are not necessarily dependent, but in our schemes they are because the length of transmissions depends on $\texttt{s}$ and correctness requires this to be polynomial in $\lambda$.}}. The adversary is again unrestricted after the memory bound applies and is never restricted with respect to its computational power. 

Our construction is inspired by the \textsf{BQS} signature scheme for classical messages presented by Barhoush and Salvail \cite{BS23}. We first briefly describe their construction. In their scheme, the public-key is a large quantum state and each user can only store a subset of the quantum key. Consequently, different users end up with distinct subkeys. This idea is realized through a primitive called \emph{program broadcast} which was achieved unconditionally in the $\BQSM$ \cite{BS23} and is defined in Sec.~\ref{sec:def bqs}\footnote{In essence, a $\nk$-time program broadcast of a function $P$ permits a polynomial number of users to each obtain a single evaluation of $P$ on their chosen inputs while simultaneously preventing any single user from learning more than $\nk$ evaluations of $P$.}. The secret key is an appropriate large function $P_{\pk}$ and the public-key consists of a quantum program broadcast of $P_{\pk}$. Each user acquires a distinct evaluation of $P_{\pk}$ which serves as a \emph{verification subkey}. The broadcast is terminated after all users have received key copies and before the signing process.  

To facilitate signing, which must be verifiable by all users, one-time programs are utilized -- this powerful primitive can be built unconditionally in the $\BQSM$ \cite{BS23}\footnote{Informally, a one-time program can be used to learn a single evaluation of a program without revealing any additional information about the program.}. We provide a simplified description as the actual construction is somewhat technical. 

Split the output of $P_{\pk}(x)=(P_1(x),P_2(x))$ in half. The signature of a message $m$ is a one-time program that on an input of the form $(x,P_1(x))$, for some $x$, outputs $(P_2(x),m)$. Consider an adversary $\adv$ that wants to forge a signature to deceive a user, lets call them Bob, who possesses $(x_B,P_{\pk}(x_B))$ from the broadcast. Bob would evaluate the forged program on $(x_B,P_1(x_B))$ but there is a very low probability the output will contain $P_2(x_B)$ as $\adv$ is unlikely to have learned $P_{\pk}(x_B)$ from the broadcast. 

The problem is that $\adv$ can take a valid signature, edit it, and forward it to Bob. Even though $\adv$ is limited with respect to its qmemory and cannot store signatures, this attack can be performed on the fly. One-time programs are potent tools due to their capacity to conceal information, but this attribute also renders it challenging for Bob to detect if a received signature has been tampered with. Exploiting this vulnerability, an adversary can modify a signature of $m$ to sign a new message $m'$. Unfortunately, Barhoush and Salvail's work \cite{BS23} did not resolve this issue, which consequently restricts their security to scenarios where the adversary is not allowed to query the signature oracle when it attempts to forge a signature and can only query it prior. Practically speaking, there is no way to enforce such a requirement. 

Another issue with their construction is that the public-keys are quantum and yet are not certified. Their work assumes the keys can be distributed authentically to each and every user. Note that this issue cannot be solved by applying $\TD$ signatures to sign the keys because $\TD$ signatures can only be verified after some delay and users in the $\BQSM$ cannot store the quantum signatures. To sum up, there are 3 issues with their construction which we address in this work:

\begin{enumerate}
    \item Can only be used to sign classical messages.
    \item Security is only ensured against adversaries with restricted access to the signing oracle. 
    \item Assumes the secure distribution of quantum public-keys. 
\end{enumerate}

We developed a solution that resolves these issues and achieves quantum unforgeability with certified quantum public-keys against adversaries with unrestricted access to the signing oracle. Our construction is quite technical so we refer the reader to Sec.~\ref{sec:bqsm sig} for more details.

\paragraph{Application: Public-Key Quantum Money.}

We present two public-key quantum money schemes which are derived as applications of $\TD$ and $\TdVK$ signature. In particular, the banknote generation and verification algorithms depend on time so we term our schemes \emph{$\TD$ public-key quantum money}. The first scheme we describe, derived from $\TdVK$ signatures, accomplishes semi-online public-key quantum money assuming $\OWF$s. Meanwhile, the second scheme we describe, derived from $\TD$ signatures, achieves offline public-key quantum money assuming $\OWF$s and $\TLP$s.    

We now provide a simplified overview of our first scheme. We let $({\textsf{1QEnc}},{\textsf{1QDec}})$ denote a one-time \textsf{QAE} scheme and let $F$ denote a \textsf{pq-PRF}. For any $i\in \mathbb{N}$, the bank generates the key $\nk_i\coloneqq F(\sk,i)$ using a secret key $\sk$ and the \textsf{pq-PRF}. To create a quantum banknote at time in $[t_j,t_{j+1})$, the bank samples a random string $y$, and generates a set of $n$ keys $y_i\coloneqq F(\nk_i,y)$ for $i\in [j,j+n]$, {where $n$ is a parameter that can be set by the bank depending on an efficiency/lifespan tradeoff.} Subsequently, the bank prepares the quantum state $\ket{0}$ and, for each key $(y_i)_{i\in [j:j+n]}$, encrypts the state successively using the one-time encryption scheme. The resulting banknote is of the form $\ket{\$}  \coloneqq (y, {\textsf{1QEnc}}_{y_{j}}( {\textsf{1QEnc}}_{y_{j+1}}(...{\textsf{1QEnc}}_{y_{j+n}}(\ket{0})...)$.

Assume Alice receives the banknote at a time in $[t_j,t_{j+1})$. At time $t_{j+1}$, the bank publicly announces $\nk_j$. Alice computes $y_j=F(\nk_j,y)$ and validates the banknote by applying ${\textsf{1QDec}}_{y_j}$. More generally, any user can validate the banknote as long as it is received before time $t_{j+n}$ by performing verification with the appropriate key. However, after time $t_{j+n}$, all the keys used in generating $\ket{\$}$ would have been revealed rendering the banknote unverifiable. Hence, it is imperative to return the banknote to the bank before this expiration. 

We term our scheme ``semi-online'' as users do not need to interact with a bank as in online schemes but they must receive key announcements to validate transactions which prevents complete isolation as in the offline setting. Importantly, our cash system is designed to be scalable and applicable on a macro-level, as it does not rely on a third party to validate each and every transaction as in private-key and online schemes. Nevertheless, a scheme allowing users to operate in a completely offline manner provides further advantages.

Fortunately, our second approach achieves this goal. However, security is only guaranteed assuming $\TLP$s and $\OWF$s. The construction is very similar to the one just described, except the keys $y_j,y_{j+1},...,y_{j+n}$ are encoded in a $\TLP$, allowing users to deduce $y_i$ only at time $t_i$. Here, the $\TLP$ effectively performs the same function as the dynamic verification keys. The puzzle is signed classically ensuring authenticity with a time-independent verification key. Thus, this approach provides a completely offline public-key scheme. 

Our money schemes have two drawbacks: the banknotes expire and the transactions take up to $\delta\coloneqq t_{i+1}-t_i $ time to validate. The first issue can be mitigated by increasing the number of authentication layers in a banknote (choosing a larger value for $n$). However, this would require users to allocate more quantum memory. The second issue can be addressed by announcing keys more frequently (decreasing $\delta$) but this would shorten the lifespan of the banknotes. Ideally, the bank should find a compromise between these two issues. Note that traditional money also has an expiration date and large transactions, even with cryptocurrencies, need to be validated with proper time-consuming procedures. Therefore, these issues are inherent in current financial systems. 

To sum up, we believe the drawbacks of our construction are a reasonable compromise to achieve a publicly-verifiable currency that resists counterfeiting from standard assumptions. To our knowledge, this is the first work to achieve this challenging goal. 

\paragraph{Application: Encryption with Authenticated Public Keys.}

Our final contribution is to present a $\QPKE$ scheme that incorporates publicly verifiable quantum public-keys. Traditionally, $\QPKE$ schemes require that the quantum keys are distributed securely to all user \cite{BS23,GSV23,C23}, making them susceptible to attacks that interfere with the key distribution process. We introduce a security notion that we term \emph{indistinguishability under adaptive chosen quantum key and ciphertext attack} (IND-qCKCA2). In this setting, an adversary can tamper with the quantum public-keys, but it is assumed that classical certification keys can be distributed authentically. More explicitly, the adversary can \emph{choose} the quantum key used by the experiment for encryption but cannot tamper with the classical certification keys, reflecting a realistic threat scenario where classical information can be reliably announced through public-key infrastructure, but quantum information cannot. Additionally, the adversary has quantum access to the encryption and decryption oracles (except on the challenge ciphertext) as in the standard IND-qCCA2 experiment \cite{BZ133,GSM20}. 

To achieve qCKCA2-security, we essentially sign the public-keys using $\TD$ signatures. Surprisingly, the adversary may still be able to mount quantum attacks by storing a state entangled with a user's public-key copy. This stored state can be exploited to extract information on the user's encryptions. To deal with this attack, we encode quantum keys in BB84 states that satisfy an unclonability property known as the monogamy of entanglement \cite{TFKW13}, preventing such splitting attacks. 

We apply $\TdVK$ signatures to the public-keys in two ways to give two constructions. The first construction relies on many-time $\TdVK$ signatures based on $\OWF$s. Since the verification key evolves with time, the resulting signed keys also change and are thus mixed-state. Note that other works also utilize mixed-state quantum keys \cite{BS23,KMN23}. We apply this approach to a $\QPKE$ scheme that is inspired by the construction given in \cite{GSV23} to get the first qCKCA2-secure $\QPKE$ scheme from $\OWF$s. 

Next, by constraining the distribution of the quantum keys to some limited time frame, we show how to upgrade a $\QPKE$ scheme to generate publicly verifiable pure-state quantum keys without requiring any additional computational assumptions. The idea is to encrypt all pure-state keys with a one-time authenticated symmetric key encryption \cite{BCG02}. Then, after all users have received their encrypted key copies, the secret encryption key is revealed, enabling users to validate their keys. After the key is revealed, the distribution terminates, thus characterizing it as a limited-time key distribution. As a direct application to the scheme in \cite{GSV23}, we achieve $\QPKE$ satisfying indistinguishability under chosen key and plaintext attack (IND-CKPA) with limited-time pure-state public-keys from pseudorandom function states (PRFS) with super-logarithmic input size. This is noteworthy because PRFS is potentially a weaker assumption than $\OWF$. More specifically, $\OWF$s imply PRFSs with super-logarithmic input-size \cite{AQY22} and there exists an oracle separation between the two assumptions \cite{K21}. See Sec.~\ref{sec:QPKE} for more details. 

\section{Preliminaries}
\subsection{Notation}
We denote Hilbert spaces by calligraphic letters, such as $\hildd{H}$, and the set of positive semi-definite operators in $\hildd{H}$ as $\pos{\hildd{H}}$. If a Hilbert space $\hildd{X}$ holds a classical value, then we denote this variable as $\text{X}$.

The density matrix of a quantum state in a register $E$ is denoted as $\rho_E$ and, generally, lowercase Greek letters such as $\sigma, \phi, \tau$ denote density matrices of quantum states. The trace norm is denoted as $\|\rho\|\coloneqq \trace{\sqrt{\rho \rho^\dagger}}$, the trace distance as $\delta(\rho,\sigma)\coloneqq \frac{1}{2}\| \rho-\sigma\|$, and the fidelity by $\delta_F$. We say two states in registers $A$ and $B$, respectively, are \emph{$\epsilon$-close} denoted as $A\approx_{\epsilon}B$ if $\delta(\rho_A,\rho_B)\leq \epsilon$ and are \emph{$\epsilon$-independent} if $\rho_{AB}\approx_{\epsilon}\rho_{A}\otimes \rho_{B}$. Note that two states are negligibly close if and only if their fidelity is negligibly close to 1 \cite{FV99}. We let $\langle \phi, \rho\rangle$ denote a sequential transmission where $\phi$ is first sent, and then $\rho$ is sent after. 

For the complex space $\mathbb{C}^2$, the computational basis $\{|0\rangle,|1\rangle\}$ is denoted as $+$, while the diagonal basis $\{|0\rangle_{\times},|1\rangle_{\times}\}$ is denoted as $\times$ where $|0\rangle_{\times}\coloneqq \frac{|0\rangle +|1\rangle}{\sqrt{2}}$ and $|1\rangle_{\times}\coloneqq \frac{|0\rangle -|1\rangle}{\sqrt{2}}$. The pair of basis $\{+,\times\}$ is often called the BB84 basis \cite{BB84}. 

We say $x\leftarrow X$ if $x$ is chosen from the values in $X$ according to the distribution $X$. If $X$ is a set, then $x\leftarrow X$ denotes an element chosen uniformly at random from the set. We say $A$ is \emph{QPT} if it is a quantum polynomial-time algorithm. We let $A^{qP}$ denote an algorithm with access to $q$ queries to an oracle of $P$, $A^P$ denote access to a polynomial number of oracle queries, and $A^{\ket{P}}$ denote access to a polynomial number of quantum oracle queries. For any probabilistic algorithm $G(x)$ which makes $p$ coin tosses, we let $G(x;y)$ be the deterministic version of $G$ where the coin toss outcomes are determined by $y\in \{0,1\}^p$.   

Also, we let $[n]\coloneqq [0,1,...,n-1]$ and $\textsf{negl}$ denote any function that is asymptotically smaller than the inverse of any polynomial. For simplicity, we let $\{0,1\}^{m,n}$ denote the set $\{0,1\}^m\times \{0,1\}^n$. By abuse of notation, given a string $x$ and matrix $M$, then, for simplicity, we let $M\cdot x$ denote the string representation of the vector $M\cdot \vec{x}$.

\subsection{Algebra}
The following algebraic result from \cite{BS23} states that it is difficult to predict the output of a large matrix given a few evaluations. 

\begin{lemma}[\cite{BS23}]
\label{Raz 2}
Let $M$ be an arbitrary $\ell \times n$ binary matrix. Let $A$ be an algorithm that is given as input: $(a_1,b_1),...,(a_{m},b_{m})$ and $(\hat{a}_1,\hat{b}_1),...,(\hat{a}_{p},\hat{b}_{p})$  where $a_i, \hat{a}_i \in \{0,1\}^n$, $b_i,\hat{b}_i\in \{0,1\}^{\ell}$, $m<n$, $p$ is a polynomial in $\ell$, $b_i=M\cdot a_i$ and $\hat{b}_i\neq M\cdot \hat{a}_i$. Then the following statements hold:
\begin{enumerate}
    \item For any vector $a\in \{0,1\}^n$ not in the span of $\{a_1,...,a_m\}$, if $A$ outputs a guess $b'$ of $b\coloneqq M\cdot a$, then $\Pr{[b'=b]}\leq O(2^{-\ell})$.
    
    \item Let $x_0,x_1\in \{0,1\}^n$ be any two distinct vectors not in the span of $\{a_1,...,a_m\}$. Choose $r\leftarrow \{0,1\}$. If $A$ is additionally given $x_0,x_1, y_r$ where $y_r\coloneqq M\cdot x_r$ and outputs a guess $r'$, then $\Pr{[r'=r]}\leq \frac{1}{2}+ O(2^{-\ell})$.
\end{enumerate}
\end{lemma}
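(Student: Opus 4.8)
The plan is to reduce both statements to a counting argument over the affine subspace of matrices consistent with the positive examples. Fix the positive examples $(a_1,b_1),\dots,(a_m,b_m)$ with $b_i = M\cdot a_i$. Let $V = \mathrm{span}\{a_1,\dots,a_m\} \subseteq \{0,1\}^n$, a space of dimension $d \le m < n$. The key observation is that the constraints $M\cdot a_i = b_i$ only pin down the action of $M$ on $V$: for any vector $a \notin V$, as $M$ ranges over all $\ell\times n$ binary matrices consistent with the positive examples, the value $M\cdot a$ is uniformly distributed over $\{0,1\}^\ell$. This is because we can write $a = a_V + a_\perp$ where the component $a_\perp$ outside $V$ can be mapped to an arbitrary vector by adjusting the columns of $M$ corresponding to a basis completion of $V$, and these adjustments are independent of the constraints from the $a_i$'s.

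For part 1, I would argue from the adversary's viewpoint. First handle the case with no negative examples ($p=0$): conditioned on the input, the target $b = M\cdot a$ is uniform over $\{0,1\}^\ell$ and independent of everything $A$ sees, so any guess $b'$ succeeds with probability exactly $2^{-\ell}$. To incorporate the $p$ negative examples $(\hat a_j, \hat b_j)$ with $\hat b_j \ne M\cdot\hat a_j$: these only \emph{remove} possibilities, i.e. they condition on a set of events each of which, individually, has probability $1 - 2^{-\ell}$ (or is vacuous if $\hat a_j \in V$, in which case the constraint is either automatically satisfied or makes the whole scenario impossible — but we are told the input is consistent, so it is satisfied). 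One must be a little careful because the $\hat a_j$ need not be independent of $a$ or of each other, but the union bound suffices: conditioned on the positive examples alone, $\Pr[b' = b] \le 2^{-\ell}$, and conditioning further on the (consistent) negative examples can increase this by at most a factor related to $1/\Pr[\text{negatives consistent}]$. Since $p$ is polynomial in $\ell$ and each negative constraint fails with probability only $2^{-\ell}$, a union bound gives $\Pr[\text{all negatives consistent}] \ge 1 - p\cdot 2^{-\ell} \ge 1/2$ for large $\ell$, so $\Pr[b'=b] \le 2^{-\ell}/(1 - p 2^{-\ell}) = O(2^{-\ell})$.

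For part 2, the same analysis applies with $x_0, x_1 \notin V$ distinct. Conditioned on the positive examples, the pair $(y_0, y_1) = (M\cdot x_0, M\cdot x_1)$ is uniform over $\{0,1\}^\ell \times \{0,1\}^\ell$ (since $x_0, x_1$ and, crucially, $x_0 - x_1$ — which is also nonzero and, if $x_0 - x_1 \notin V$, uniformly scrambled — span a 2-dimensional complement, so the joint distribution factorizes; if $x_0 - x_1 \in V$ then $y_0$ and $y_1$ differ by the fixed known value $M\cdot(x_0-x_1)$ but $y_0$ alone is still uniform and independent of $r$'s effect). Either way, seeing $y_r$ for a uniformly random hidden $r$ reveals nothing about $r$ in the information-theoretic sense, conditioned on the positives, so the optimal guessing probability is $1/2$; folding in the negative examples via the same union-bound argument degrades this to $1/2 + O(2^{-\ell})$.

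The main obstacle I anticipate is handling the conditioning on the negative examples cleanly, since the vectors $\hat a_j$ (and $x_0, x_1$) are not promised to be independent of anything — they could be chosen adversarially based on the positive examples. The right way to manage this is to fix \emph{all} of the adversary's input (positives, negatives, and in part 2 the pair $x_0, x_1$ and the bit $r$) and only then take the probability over the residual randomness of $M$ restricted to the complement of $V$; one shows the target value is uniform over $\{0,1\}^\ell$ on the unconditioned complement-randomness, then that conditioning on the polynomially-many negative constraints — each an event of probability $\ge 1 - 2^{-\ell}$ over that residual randomness — cannot blow up the guessing probability by more than a $1 + O(p\,2^{-\ell})$ factor. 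I would also double-check the edge cases where some $\hat a_j$ or $x_i$ accidentally lands in $V$, but these are handled by the consistency promise on the input.
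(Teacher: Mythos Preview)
The paper does not actually prove this lemma: it is stated in the preliminaries with a citation to \cite{BS23} and no proof is given. So there is nothing in the paper to compare your argument against.

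That said, your approach is the natural one and is essentially sound. A couple of small points worth tightening. First, the lemma's phrase ``arbitrary $\ell\times n$ binary matrix'' is misleading; the statement only makes sense if the probability is over a uniform $M$ (equivalently, over the residual randomness of $M$ conditioned on the positive constraints), which is exactly the model you adopt and the one used in the paper's application. Second, in part~2 you should make the case $x_0 + x_1 \in V$ fully explicit: there $y_0$ and $y_1$ differ by a known constant $c$, but since $y_0$ is uniform over $\{0,1\}^\ell$, the observed value $y_r$ is uniform for either choice of $r$, and the two conditional distributions of the input given $r=0$ versus $r=1$ are identical --- so no distinguisher exists even before worrying about the negatives. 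Your conditioning bound $\Pr[E\mid N]\le \Pr[E]/\Pr[N]$ together with the union bound $\Pr[N]\ge 1 - p\,2^{-\ell}$ handles the negative examples correctly in both parts; just be sure to quantify over \emph{all} deterministic strategies of $A$ (any randomized $A$ is a mixture of deterministic ones) so that the $2^{-\ell}$ bound in the unconditioned case is really a worst-case guarantee.
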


\subsection{Min-Entropy}

We recall the notion of min-entropy of a classical variable conditioned on a quantum state. The reader is referred to \cite{KRS09} for more detailed definitions.   

\begin{definition}[Conditional Min-Entropy]
\label{def:cond entropy}
Let $\rho_{XB}\in \pos{\hil{H}_X\otimes \hil{H}_B}$ be classical on $\hil{H}_X$. The min-entropy of $\rho_{XB}$ given $\hil{H}_B$ is given by
\begin{align*}
    H_{\infty}(X|B)_{\rho} \coloneqq - \lg p_{\text{guess}}(X|B)_\rho
\end{align*}
where $p_{\text{guess}}(X|B)_\rho$ is the maximal probability to decode $X$ from $B$ with a POVM on $\hil{H}_B$.  
\end{definition}

\begin{definition}[Smooth Min-Entropy]
Let $\epsilon \geq 0$ and $\rho_{XB}\in \pos{\hil{H}_X\otimes \hil{H}_B}$ be classical on $\hil{H}_X$. The $\epsilon$-smooth min-entropy of $\rho_{XB}$ given $\hil{H}_B$ is given as 
\begin{align*}
    H_{\infty}^{\epsilon}(X|B)_{\rho}\coloneqq \sup_{\overline{\rho}}H_{\infty}(X|B)_{\overline{\rho}}
\end{align*}
where the supremum is taken over all density operators $\overline{\rho}_{XB}$ acting on $\hil{H}_X\otimes \hil{H}_B$ such that $\delta(\overline{\rho}_{XB},\rho_{XB})\leq \epsilon$. 
\end{definition}

The min-entropy satisfies the following chain rule.

\begin{lemma}[Chain Rule for Min-Entropy \cite{RK04}]
\label{chain}
Let $\epsilon\geq 0$ and $\rho_{XUW}\in \pos{\hildd{H}_X\otimes \hildd{H}_U\otimes \hildd{H}_W}$ where register $W$ has size $n$. Then,
\begin{align*}
   H_{\infty}^{\epsilon}({X}|UW)_{\rho} \geq  H^{\epsilon}_{\infty}(XE|W)_{\rho}-n .
    \end{align*}
\end{lemma}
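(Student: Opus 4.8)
The plan is to prove the chain rule in the form $H_\infty^\epsilon(X|UW)_\rho \ge H_\infty^\epsilon(XW|U)_\rho - n$ with $n = \lg\dim\hil{H}_W$ (I read the ``$XE$'' on the right-hand side as ``$XW$'', which is consistent with ``$W$ has size $n$'' and the loss term $-n$; the argument is symmetric in $U$ and $W$, so it equally yields the variant with $XU$ and $\lg\dim\hil{H}_U$). Rather than the operational Definition~\ref{def:cond entropy}, I will use the equivalent semidefinite characterization of conditional min-entropy from \cite{KRS09}, namely that $H_\infty(A|B)_\rho$ is the largest $\lambda$ for which there exists a state $\sigma_B$ on $\hil{H}_B$ with $2^{-\lambda}\,\idt_A\otimes\sigma_B \ge \rho_{AB}$; for a classical first register this agrees with Definition~\ref{def:cond entropy}, and for a general (possibly quantum) first register it is the standard extension. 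The proof then has two steps: an explicit-witness argument for the non-smooth inequality, followed by a transfer of the optimal smoothing state of the right-hand side to the left.

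\emph{Non-smooth step.} Put $\lambda \coloneqq H_\infty(XW|U)_\rho$ and let $\sigma_U$ be a witnessing state, so $2^{-\lambda}\,\idt_{XW}\otimes\sigma_U \ge \rho_{XWU}$. Define the state $\tau_{UW} \coloneqq \sigma_U \otimes \tfrac{1}{2^{n}}\idt_W$ on $\hil{H}_U\otimes\hil{H}_W$. Then
\begin{align*}
2^{-(\lambda - n)}\,\idt_X\otimes\tau_{UW} \;=\; 2^{-\lambda}\,\idt_X\otimes\sigma_U\otimes\idt_W \;=\; 2^{-\lambda}\,\idt_{XW}\otimes\sigma_U \;\ge\; \rho_{XWU} \;=\; \rho_{XUW},
\end{align*}
where the last equality only reorders tensor factors. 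Hence $\tau_{UW}$ witnesses $H_\infty(X|UW)_\rho \ge \lambda - n$, i.e. the non-smooth chain rule $H_\infty(X|UW)_\rho \ge H_\infty(XW|U)_\rho - n$, valid for an arbitrary state.

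\emph{Smoothing step.} Let $\bar\rho$ attain the supremum defining $H_\infty^\epsilon(XW|U)_\rho$: it is classical on $\hil{H}_X$, satisfies $\delta(\bar\rho_{XWU},\rho_{XWU}) \le \epsilon$, and $H_\infty(XW|U)_{\bar\rho} = H_\infty^\epsilon(XW|U)_\rho$. Trace distance is invariant under permuting tensor factors, so $\delta(\bar\rho_{XUW},\rho_{XUW}) \le \epsilon$ and $\bar\rho$ is a legitimate competitor in the supremum defining $H_\infty^\epsilon(X|UW)_\rho$. Applying the non-smooth step to $\bar\rho$ gives $H_\infty^\epsilon(X|UW)_\rho \ge H_\infty(X|UW)_{\bar\rho} \ge H_\infty(XW|U)_{\bar\rho} - n = H_\infty^\epsilon(XW|U)_\rho - n$, which is the claim. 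The only point needing care is the definitional reconciliation: invoking the SDP form of \cite{KRS09} is what both legitimizes a possibly quantum $W$ appearing in $H_\infty(XW|U)$ and makes the insertion of $\tfrac{1}{2^n}\idt_W$ transparent; the smoothing step itself is routine precisely because both sides describe the same global state $\rho_{XUW}$, only re-partitioned, so any admissible perturbation for one side is admissible for the other. (Under a sub-normalized smoothing convention one simply renormalizes $\tau_{UW}$, changing nothing.)
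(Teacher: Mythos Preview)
The paper does not supply its own proof of this lemma: it is stated in the preliminaries with a citation to \cite{RK04} and used as a black box, so there is nothing to compare against. Your argument is the standard one---the SDP characterization of $H_\infty$ lets you absorb $\idt_W$ into the conditioning operator at a cost of $\lg\dim\hil{H}_W$, and the smoothing transfers because both sides share the same global state---and it is correct. Your reading of the typo (``$XE$'' for ``$XW$'', or, by the symmetry you note, ``$XU$'') is also the natural one given that the loss term matches the size of $W$.
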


\subsection{Privacy Amplification}
We state the standard approach to amplify privacy through the use of hash functions. 

\begin{definition}
    A class of hash functions $\textsf{H}_{m,\ell}$ from $\{0,1\}^m$ to $\{0,1\}^{\ell}$ is \emph{two-universal} if for $F \leftarrow  \textsf{H}_{m,\ell}$, 
    \begin{align*}
        Pr[F(x)=F(x')]\leq \frac{1}{2^{\ell}}.
    \end{align*} 
\end{definition}

\begin{theorem}[Privacy Amplification \cite{RK04}]
\label{privacy amplification}
Let $\epsilon \geq 0$ and $\rho_{XB}\in \pos{\hildd{H}_X\otimes \hildd{H}_{B}}$ be a cq-state 
where $X$ is classical and takes values in $\{0,1\}^m$. Let $\textsf{H}_{m,\ell}$ be a class of two-universal hash functions and let $F\leftarrow  \textsf{H}_{m,\ell}$. Then, 
\begin{align*}
    \delta(\rho_{F(X)FB},\mathds{1}\otimes \rho_{FB})\leq \frac{1}{2}2^{-\frac{1}{2}(H^{\epsilon}_{\infty}({X}|B)_{\rho}-\ell)}+\epsilon 
\end{align*}
where $\idt$ is uniformly distributed. 
\end{theorem}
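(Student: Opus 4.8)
\emph{Plan.} The statement is the quantum leftover hash lemma, so I would reconstruct the Renner--K\"onig argument: a smoothing step, followed by a second-moment estimate in a reweighted Hilbert--Schmidt norm in which two-universality produces the bound. First the smoothing. By definition of $H_\infty^\epsilon$ there is a state $\bar\rho_{XB}$, classical on $X$, with $\delta(\bar\rho_{XB},\rho_{XB})\le\epsilon$ and $H_\infty(X|B)_{\bar\rho}=H_\infty^\epsilon(X|B)_\rho$. The map that adjoins an independent uniform $F$ and overwrites $X$ by $F(X)$, and the partial trace onto $FB$, are CPTP and hence trace-distance non-increasing; so by the triangle inequality it suffices to establish the $\epsilon=0$ inequality $\delta(\bar\rho_{F(X)FB},\mathds1\otimes\bar\rho_{FB})\le\frac12\,2^{-\frac12(H_\infty(X|B)_{\bar\rho}-\ell)}$, the residual additive error (of order $\epsilon$) being absorbed into the stated $+\epsilon$. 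From now on write $\sigma:=\bar\rho$ and $\sigma_{XB}=\sum_x|x\rangle\langle x|\otimes\omega_x$ with $\omega_x\ge0$.

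\emph{The reweighted norm.} Weight by the ideal state itself, $\gamma:=\mathds1\otimes\sigma_{FB}$ (so $\trace{\gamma}=1$), and apply the weighted-norm inequality $\|A\|_1\le\sqrt{\trace{\gamma}}\,\|\gamma^{-1/4}A\gamma^{-1/4}\|_2$ — which follows from the polar decomposition of $A$ together with Cauchy--Schwarz for the Hilbert--Schmidt inner product, and is valid since $\mathrm{supp}(A)\subseteq\mathrm{supp}(\gamma)$ — to $A=\sigma_{F(X)FB}-\gamma$. Since both $\sigma_{F(X)FB}$ and $\gamma$ are block-diagonal in the $F$-register, $\|\gamma^{-1/4}A\gamma^{-1/4}\|_2^2$ equals the $F$-average of Hilbert--Schmidt norms of $\sigma_B$-reweighted differences; writing $\widetilde{\omega}_x:=\sigma_B^{-1/4}\omega_x\sigma_B^{-1/4}$ and expanding each square, the quadratic part contributes (up to a power of $2^\ell$) the quantity $\sum_{x,x':\,f(x)=f(x')}\trace{\widetilde{\omega}_x\widetilde{\omega}_{x'}}$, while the linear and constant parts together sum to $-2^\ell$.

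\emph{Two-universality and the cancellation.} Taking the expectation over $f$, isolating the diagonal terms $x=x'$, and bounding $\Pr_f[f(x)=f(x')]\le 2^{-\ell}$ for $x\ne x'$ by two-universality (permissible since $\trace{\widetilde{\omega}_x\widetilde{\omega}_{x'}}\ge 0$), the off-diagonal mass is at most $2^\ell\,\trace{(\sum_x\widetilde{\omega}_x)^2}=2^\ell\,\trace{\sigma_B}=2^\ell$, which exactly cancels the $-2^\ell$ above; what survives is $2^{2\ell}\sum_x\trace{\widetilde{\omega}_x^2}=2^{2\ell}\cdot 2^{-H_2(X|B)_\sigma}$, where $H_2(X|B)_\sigma:=-\lg\bigl(\sum_x\trace{\omega_x\sigma_B^{-1/2}\omega_x\sigma_B^{-1/2}}\bigr)$ is the conditional collision entropy. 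Collecting the powers of $2$ yields $\delta(\sigma_{F(X)FB},\mathds1\otimes\sigma_{FB})\le\frac12\,2^{-\frac12(H_2(X|B)_\sigma-\ell)}$. Finally, the pretty-good measurement gives $2^{-H_2(X|B)_\sigma}\le p_{\mathrm{guess}}(X|B)_\sigma$, i.e.\ $H_2(X|B)_\sigma\ge H_\infty(X|B)_\sigma$ in the sense of Definition~\ref{def:cond entropy}; combining with $H_\infty(X|B)_\sigma=H_\infty^\epsilon(X|B)_\rho$ and re-applying the Step-1 triangle inequality finishes the proof.

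\emph{Main obstacle.} The crux is the choice of norm in the second step. The naive bound $\|A\|_1\le\sqrt{\dim}\,\|A\|_2$ is worthless because $\dim$ includes $\dim\hildd{H}_B$, the adversary's memory dimension; reweighting by $\gamma$ (equivalently, conjugating register $B$ by $\sigma_B^{-1/4}$) is exactly what removes that dependence. The delicate bookkeeping is that the two-universality bound on the off-diagonal terms, by itself, would leave behind a constant that after collecting powers of $2^\ell$ does not vanish and swamps the estimate; it is cancelled precisely because we subtracted the ideal state, via $\trace{(\sum_x\widetilde{\omega}_x)^2}=\trace{\sigma_B}=1$. Everything else — CPTP monotonicity in the smoothing step, the polar-decomposition proof of the weighted-norm inequality, and the pretty-good-measurement comparison $H_2\ge H_\infty$ — is routine.
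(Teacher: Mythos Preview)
The paper does not give its own proof of this theorem: it is stated in the preliminaries with a citation to \cite{RK04} and used as a black box. Your proposal is a faithful reconstruction of the standard Renner--K\"onig argument (smoothing, the $\gamma^{-1/4}$-reweighted Hilbert--Schmidt bound, two-universality cancellation, and the $H_2\ge H_\infty$ comparison), and it is correct; there is simply nothing in the paper to compare it against.
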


\subsection{Monogamy of Entanglement}
\label{sec:mon}
We recall the monogamy of entanglement game which provides a way to test the unclonability of quantum states. This will be useful in proving security of our $\QPKE$ scheme. There are different variants of this game but we recall a version similar to the one presented in \cite{TFKW13}.  

\smallskip \noindent\fbox{%
    \parbox{\textwidth}{%
\textbf{Experiment} $\textsf{G}_{\textsf{BB84}}(n)$:
\begin{enumerate}
\item Alice chooses $x\leftarrow \{0,1\}^n$ and $\theta \leftarrow  \{+,\times\}^{n}$.
\item Alice sends the state $\ket{x}_\theta\coloneqq \ket{x_0}_{\theta_0}\ket{x_1}_{\theta_1}...\ket{x_{n-1}}_{\theta_{n-1}}$ to Bob and Charlie. 
\item Bob and Charlie split the state between them. From this point on they are not allowed to communicate.  
\item Alice sends $\theta$ to Bob and Charlie.
\item Bob and Charlie output $x_B$ and $x_C$, respectively. 
\item The output of the experiment is 1 if $x=x_B=x_C$ and 0 otherwise. 
\end{enumerate}}}
\smallskip

\begin{theorem}[Monogamy of Entanglement, Theorem 3 in \cite{TFKW13}]
\label{thm:mono}
For any quantum algorithms Bob and Charlie,
\label{monogamy}
\begin{align*}
    \Pr{[\textsf{G}_{\textsf{BB84}}(n)=1]}\leq 2^{-n/5} .
\end{align*}
\end{theorem}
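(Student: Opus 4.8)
The plan is to prove the monogamy-of-entanglement bound $\Pr[\textsf{G}_{\textsf{BB84}}(n)=1]\le 2^{-n/5}$ by the standard semidefinite-programming / operator-norm argument. I would first recast the game in operator form: after Bob and Charlie perform their splitting map, the joint state is some $\rho_{BC}$, and their guessing strategies are described, for each basis choice $\theta\in\{+,\times\}^n$, by POVMs $\{B^\theta_x\}_x$ on Bob's register and $\{C^\theta_x\}_x$ on Charlie's register. The winning probability, averaged over Alice's uniform choice of $x$ and $\theta$, equals $\frac{1}{2^n}\sum_\theta \operatorname{tr}\!\bigl(\sum_x (\ketbra{x}{x}_\theta \otimes B^\theta_x \otimes C^\theta_x)\,\sigma_{ABC}\bigr)$ where $\sigma_{ABC}$ is the state after Alice's preparation and Bob/Charlie's splitting. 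Since Alice's register is classical and in a product eigenbasis, this collapses to bounding the operator norm of the POVM-element $\Pi \coloneqq \frac{1}{2^n}\sum_\theta \sum_x \ketbra{x}{x}_\theta \otimes B^\theta_x \otimes C^\theta_x$ acting on $\mathcal{H}_B\otimes\mathcal{H}_C$ (one traces out $A$ by plugging in $\ket{x}_\theta$), so the success probability is at most $\bigl\|\,\sum_\theta \frac{1}{2^n}\, E_\theta \,\bigr\|$ where $E_\theta \coloneqq \sum_x (\ketbra{x}{x}_\theta)_B \otimes (B^\theta_x)_B$ —  wait, more carefully, $E_\theta = \sum_x \Pi^\theta_x \otimes B^\theta_x \otimes C^\theta_x$ with $\Pi^\theta_x$ the projector onto $\ket{x}_\theta$.

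The heart of the argument is an operator-norm bound on a sum of (sub)projectors. Each $E_\theta$ is itself a POVM element with $\|E_\theta\|\le 1$, and if the $E_\theta$ were mutually orthogonal one would get $\|\sum_\theta 2^{-n} E_\theta\| \le 2^{-n}$, which is far too strong to be true and hence cannot hold — the $E_\theta$ overlap. The key technical step, following Tomamichel–Fehr–Kaniewski–Wehner, is to choose a clever \emph{ordering} of the $2^n$ strings $\theta$ (e.g.\ via a combinatorial design / the structure of the hypercube) and apply the inequality $\|\sum_i M_i\| \le \max_i \sum_j \|M_i M_j\|^{1/2}$ valid for positive operators, together with the estimate $\|E_\theta E_{\theta'}\| \le 2^{-c\, d(\theta,\theta')}$ where $d(\theta,\theta')$ is the Hamming distance: this overlap decay comes from the fact that $\ket{x}_\theta$ and $\ket{x'}_{\theta'}$ have small inner product on the coordinates where $\theta$ and $\theta'$ differ (each mismatched coordinate contributes a factor $\tfrac{1}{\sqrt2}$, i.e.\ $\tfrac12$ in norm-squared). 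Summing the geometric-type series $\sum_{\theta'} 2^{-c\,d(\theta,\theta')/2}$ over the hypercube and optimizing the constant $c$ yields the bound $2^{-n/5}$.

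The main obstacle is exactly this last combinatorial optimization: one must pick the right interpolation/ordering of the basis strings and the right exponent so that the sum $\max_\theta \sum_{\theta'} \|E_\theta E_{\theta'}\|^{1/2}$ telescopes to something of the form $2^{-\alpha n}$ with $\alpha = 1/5$ (rather than a weaker constant), and verifying the norm inequality $\|E_\theta E_{\theta'}\| \le 2^{-d(\theta,\theta')}$ requires carefully tracking how the measurement operators on Bob's and Charlie's sides both act — the factor $2^{-n/5}$ rather than, say, $2^{-n/2}$ is the price of having \emph{two} adversaries (Bob and Charlie) who must \emph{simultaneously} succeed, which weakens the overlap estimate. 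Since Theorem~\ref{thm:mono} is quoted verbatim from \cite{TFKW13}, I would in the paper simply cite that reference for the full proof; the sketch above records the strategy one would follow to reconstruct it.
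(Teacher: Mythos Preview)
The paper does not prove this statement at all: Theorem~\ref{thm:mono} appears in the preliminaries section and is simply quoted from \cite{TFKW13} without proof, exactly as you yourself conclude in your final sentence. Your sketch of the operator-norm argument is indeed the approach taken in \cite{TFKW13}, so there is nothing to compare against here; for the purposes of this paper the correct ``proof'' is just the citation.
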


\subsection{Pseudorandom Notions}
In this section, we recall the definitions of post-quantum pseudorandom functions (\textsf{pq-PRF}s) and quantum-accessible pseudorandom function state generators (PRFSs). 

\begin{definition}[Post-Quantum Secure Pseudorandom Function]
    Let $PRF=\{PRF_n\}_{n\in \mathbb{N}}$ where $PRF_n:\mathcal{K}_n\times \mathcal{X}_n\rightarrow \mathcal{Y}_n$ is an efficiently computable function. Then, $PRF$ is a post-quantum secure pseudorandom function if for any QPT algorithm $\adv$ there exists negligible function $\textsf{negl}$ such that for any $n\in \mathbb{N}$:
    \begin{align*}
        \lvert \Pr_{k\leftarrow \mathcal{K}_n}[\adv^{PRF_n(\nk,\cdot)}(1^n)=1]-\Pr_{O\leftarrow \mathcal{Y}_n^{\mathcal{X}_n}}[\adv^{O}(1^n)=1]\rvert \leq \negl[n].
    \end{align*}
    where $\mathcal{Y}_n^{\mathcal{X}_n}$ is a function assigning each value in $\mathcal{X}_n$ a random value in $\mathcal{Y}_n$.
\end{definition}

Next, PRFS were first introduced in \cite{AQY22} and the follow-up work \cite{AGQ22} provided a stronger definition that allowed the adversary quantum access to the oracle. We now recall their definition here: 

\begin{definition}[Quantum-Accessible PRFS Generator]
    A QPT algorithm $G$ is a quantum-accessible secure pseudorandom function state generator if for all QPT (non-uniform) distinguishers $\adv$ if there exists negligible function $\textsf{negl}$ such that for any $n\in \mathbb{N}$:
    \begin{align*}
        \lvert \Pr_{k\leftarrow \{0,1\}^{\lambda}}[\adv^{\ket{O_{PRF}(\nk,\cdot)}}(1^n)=1]-\Pr[\adv^{\ket{O_{\textit{Haar}}(\cdot)}}(1^n)=1]\rvert \leq \negl[n].
    \end{align*}
    where, 
    \begin{enumerate}
        \item $O_{PRF}(\nk,\cdot)$, on input a $d$-qubit register $\textit{X}$ does as follows: applies a channel that controlled on register $\textit{X}$ containing $x$, it creates and stores $G_n(\nk,x)$ on a new register $\textit{Y}$ and outputs the state on register $\textit{X}$ and $\textit{Y}$.
        \item $O_{\textit{Haar}}(\cdot)$, modeled as a channel, on input a $d$-qubit register $\textit{X}$, does as follows: applies a channel that, controlled on the register $\textit{X}$ containing $x$, stores $\proj{\theta_x}$ in a new register $\textit{Y}$ where $\ket{\theta}_x$ is sampled from the Haar distribution and outputs the state on register $\textit{X}$ and $\textit{Y}$.
    \end{enumerate}
\end{definition}

\subsection{Signatures on Classical Messages}

We recall the definition of a digital signature scheme on classical messages. 

\begin{definition}[Digital Signatures]
A \emph{digital signature} scheme over classical message space $\hildd{M}$ consists of the following QPT algorithms: 
\begin{itemize}
    \item $\textsf{Gen}(1^\lambda)$: Outputs a secret key $\sk$ and a verification key $\vk$.
    \item $\textsf{Sign}(\sk,\mu):$ Outputs a signature ${\sigma}$ for $\mu \in \hildd{M}$ using $\sk$. 
    \item $\textsf{Verify}(\vk,\mu', {\sigma'})$: Verifies whether ${\sigma'}$ is a valid signature for $\mu' \in \hildd{M}$ using $\vk$ and correspondingly outputs $\top/\perp$.
\end{itemize}
\end{definition}

\begin{definition}[Correctness]
A digital signature scheme is \emph{correct} if for any message $\mu\in \hildd{M}$:
\begin{align*} \Pr{\left[
\begin{tabular}{c|c}
 \multirow{2}{*}{$\textsf{Verify}(\vk,\mu, {\sigma})=\top\ $} &   $(\sk,\vk)\ \leftarrow \textsf{Gen}(1^\lambda)$ \\ 
&  ${\sigma}\ \leftarrow \textsf{Sign}(\sk,\mu)$\\
 \end{tabular}\right]} \geq 1-\negl[\lambda] .
\end{align*}
\end{definition}

We recall the security experiment testing \emph{existential unforgeability} (UF) of a digital signature scheme. 

\smallskip \noindent\fbox{%
    \parbox{\textwidth}{%
\textbf{Experiment} $\textsf{Sign}^{\text{UF}}_{\Pi, \adv}({\lambda})$:
\begin{enumerate}
    \item Sample $(\sk,\vk)\leftarrow \textsf{Gen}(1^\lambda)$.
    \item $\adv$ is given $\vk$ and classical access to the signing oracle $\textsf{Sign}(\sk, \cdot)$. Let $\mathcal{Q}$ denote the set of messages queried.
    \item $\adv$ outputs $(m^*,\sigma^*).$
   \item The output of the experiment is $1$ if
   \[
       m^*\notin \mathcal{Q} \text{ and } \textsf{Verify}(\vk, m^*, \sigma^*)=1 .
   \]
   \item Otherwise, the output is 0. 
\end{enumerate}}}
\smallskip

\begin{definition}
    A digital signature scheme $\Pi$ is \emph{existentially unforgeable ({UF})}, if for any QPT $\adv$,
    \begin{align*} 
        \Pr [\textsf{Sign}^{\textit{UF}}_{\Pi, \adv}({\lambda})=1]\leq \negl[\lambda].
    \end{align*}
\end{definition}

\subsection{Cryptography from OWF}

In this section, we state relevant results obtained from $\OWF$s. Firstly, Zhandry \cite{Z12} showed how to construct \textsf{pq-PRF} from $\OWF$. Next, \cite{CET21} used \textsf{pq-PRF}s to construct an IND-qCPA-secure symmetric encryption and this was later upgraded to IND-qCCA2 \cite{CEV23}. Informally, these security notions are the same as standard IND-CPA and IND-CCA2, except the adversary in the security experiments is given \emph{quantum} access to the oracles (see \cite{CET21,CEV23} for definitions). 

\begin{lemma}[Symmetric Encryption \cite{CET21,CEV23}]
\label{lem:sym}
    Assuming \textsf{pq-PRF}s, there exists an IND-qCCA2-secure symmetric encryption scheme on classical messages with classical ciphertexts. 
\end{lemma}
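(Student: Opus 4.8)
## Proof Proposal for Lemma~\ref{lem:sym}

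The plan is to assemble the claimed IND-qCCA2-secure symmetric encryption scheme by invoking the cited chain of results rather than constructing it from scratch. First I would recall that by Zhandry's result \cite{Z12}, the existence of post-quantum one-way functions yields a post-quantum pseudorandom function family \textsf{pq-PRF}; this is the only place where the $\OWF$ assumption is used, and everything downstream is a transformation on \textsf{pq-PRF}s. So it suffices to build an IND-qCCA2-secure symmetric scheme from a \textsf{pq-PRF}.

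Next I would invoke \cite{CET21}, which shows that a \textsf{pq-PRF} gives an IND-qCPA-secure symmetric-key encryption scheme with classical ciphertexts (e.g.\ a randomized counter-mode or Carter–Wegman style construction, where the pseudorandomness of the PRF on fresh nonces hides the plaintext even against superposition queries). The key point to check here is that the ciphertexts are classical and the scheme remains secure when the adversary's encryption queries are quantum — this is precisely the content of \cite{CET21}, so I would cite it as a black box. Then I would apply the upgrade of \cite{CEV23}, which takes an IND-qCPA-secure scheme and, by appending a post-quantum MAC (itself obtainable from \textsf{pq-PRF}s) or via a suitable authenticated-encryption compiler, promotes it to IND-qCCA2 security while preserving classical ciphertexts. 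Chaining these implications — $\OWF \Rightarrow \textsf{pq-PRF} \Rightarrow$ IND-qCPA $\Rightarrow$ IND-qCCA2, all with classical messages and classical ciphertexts — gives the statement.

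The main obstacle, such as it is, is not a genuine mathematical difficulty but a matter of bookkeeping: one must make sure the intermediate objects line up, namely that the IND-qCPA scheme of \cite{CET21} meets exactly the hypotheses required by the qCCA2 transformation of \cite{CEV23} (for instance, that the transformation does not secretly need quantum ciphertexts or a stronger primitive), and that the post-quantum unforgeability of the MAC used in the upgrade also follows from \textsf{pq-PRF}s in the superposition-query setting. Since all of these are established in the cited works, the ``proof'' is essentially a one-line composition of lemmas, and I would present it as such: \emph{By \cite{Z12}, $\OWF$s imply \textsf{pq-PRF}s; by \cite{CET21}, \textsf{pq-PRF}s imply an IND-qCPA-secure symmetric scheme with classical ciphertexts; by \cite{CEV23}, this can be upgraded to IND-qCCA2, preserving classical ciphertexts.} No display-math or elaborate calculation is needed.
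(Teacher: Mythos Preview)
Your proposal is correct and matches the paper's treatment: the paper gives no proof of this lemma at all, simply citing \cite{CET21,CEV23} as a black box, and your sketch is a reasonable unpacking of that citation chain. One minor redundancy: the lemma's hypothesis is already \textsf{pq-PRF}s, not $\OWF$s, so your first step invoking \cite{Z12} is unnecessary here (the paper separates that implication out in the surrounding prose).
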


We will also need a way to authenticate and encrypt \emph{quantum} messages in a symmetric-key setting. To this end, Alagic, Gagliardoni, and Majenz \cite{AGM18} provided an approach to test \emph{quantum unforgeability} (QUF) and \emph{quantum indistinguishability under adaptive chosen ciphertext attack} (QIND-CCA2) for an authenticated symmetric encryption scheme on quantum messages. Furthermore, they provided a scheme satisfying these notions under \textsf{pq-PRF}s. These experiments are quite technical and involved so we refer the reader to \cite{AGM18} for a detailed description. In Sections \ref{sec:sig def} \&  \ref{sec:bqsm sec}, we present unforgeability experiments adapted to our purposes. 

\begin{lemma}[Symmetric Authenticated Encryption (Theorem 5 in~\cite{AGM18})]
\label{lem:que}
Assuming \textsf{pq-PRF}s, there exists a symmetric key encryption scheme on quantum messages that is QUF and QIND-CCA2. %The key generation algorithm in this scheme simply samples a random binary string of the same length as the security parameter.
\end{lemma}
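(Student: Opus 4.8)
The plan is to bootstrap a \emph{many-time} quantum authenticated encryption scheme out of the \emph{one-time} scheme $({\textsf{1QGen}},{\textsf{1QEnc}},{\textsf{1QDec}})$ of~\cite{BCG02}, which exists unconditionally, by using a \textsf{pq-PRF} $F$ as a nonce-based key-derivation mechanism. Concretely: the secret key is a \textsf{pq-PRF} key $k$; to encrypt a quantum register $\rho$, sample a fresh uniform nonce $r$, set the one-time key $\nk_r\coloneqq F(k,r)$, and output the classical-quantum ciphertext $\langle r,\,{\textsf{1QEnc}}(\nk_r,\rho)\rangle$; to decrypt $\langle r,\sigma\rangle$, recompute $\nk_r=F(k,r)$ and output ${\textsf{1QDec}}(\nk_r,\sigma)$, which already emits $\perp$ on rejection. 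Correctness is immediate from correctness of the one-time scheme, and the whole construction mirrors the familiar ``PRF + one-time primitive'' template used for the classical-message scheme of Lemma~\ref{lem:sym}.

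For QIND-CCA2 I would first swap $F(k,\cdot)$ for a truly random function through the \textsf{pq-PRF} game; the reduction can simulate both the quantum-accessible encryption and decryption oracles since each needs only oracle access to $F$. Once key derivation is a random oracle, distinct nonces give independent uniform one-time keys, so --- after conditioning on the negligible-probability event that the challenge nonce collides with a query nonce --- the challenge ciphertext is exactly a fresh one-time encryption under a key that is information-theoretically hidden from the adversary, and the game reduces to the \emph{statistical} hiding and integrity of $({\textsf{1QEnc}},{\textsf{1QDec}})$. The subtle point is that the decryption oracle is queried in superposition over nonces, so one cannot simply read off which classical nonce a query concerns; closing this gap requires a compressed-oracle / one-way-to-hiding style argument showing that superposition decryption queries leak nothing about the one-time key tied to the challenge nonce, exactly the kind of quantum-CCA2 bookkeeping already invoked in Lemma~\ref{lem:sym}.

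For quantum unforgeability (QUF) I would argue analogously. After replacing $F$ by a random function, take the adversary's forgery and measure its classical nonce register to get $r^{*}$. If $r^{*}$ never appeared as the nonce of an encryption-oracle answer, then $\nk_{r^{*}}$ is uniform and independent of the adversary's entire view, so the one-time authentication property forces ${\textsf{1QDec}}$ to reject except with probability $O(2^{-t})$ in the tag/trap length $t$. If $r^{*}$ equals some previous encryption-query nonce, then exactly one honest ciphertext under $\nk_{r^{*}}$ was ever produced, and one-time unforgeability --- which holds even after releasing a single ciphertext --- bounds the forgery probability; here one must invoke the AGM18-style cipher/plaintext test to formalize what it means for a \emph{quantum} forgery to be ``new'', since equality of quantum ciphertexts cannot be tested against a query log directly.

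I expect the main obstacle to be purely the quantum bookkeeping rather than the high-level structure: (i) setting up the QUF experiment for quantum ciphertexts via the cipher-test/plaintext-test formulation, and (ii) pushing the random-function replacement through oracles the adversary accesses in superposition --- both in the CCA2 decryption oracle and in the nonce-collision analysis --- which needs one-way-to-hiding / compressed-oracle techniques. All of this is in fact carried out in~\cite{AGM18}, whose Theorem~5 is exactly the statement at hand; here we need only record that this off-the-shelf primitive follows from \textsf{pq-PRF}s, hence via Zhandry~\cite{Z12} from \OWF{}s, which is what later sections invoke.
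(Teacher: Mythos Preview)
The paper does not prove this lemma at all: it is stated as a citation of Theorem~5 in~\cite{AGM18}, with the remark that the relevant QUF and QIND-CCA2 experiments ``are quite technical and involved so we refer the reader to~\cite{AGM18} for a detailed description.'' Your proposal is consistent with this --- indeed you explicitly conclude that the result is off-the-shelf from~\cite{AGM18} and that the paper need only record it --- so there is no discrepancy; your sketch of the nonce-plus-one-time-QAE construction and the PRF-to-random-function hybrid simply goes further than the paper does, which offers no proof sketch whatsoever.
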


Also, it is well-known that $\OWF$s imply digital signatures on classical messages. 

\begin{lemma}[Digital Signatures \cite{R90}]\label{lem:ds from owf}
    Digital signatures on classical messages exist assuming $\OWF$s.
\end{lemma}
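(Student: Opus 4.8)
The plan is to assemble the scheme from standard building blocks, each of which follows from $\OWF$s by a black-box reduction, and then to observe that every reduction in the chain relativizes and has an $\OWF$ inversion as its only computational step, so post-quantum security is preserved throughout (consistent with the already-cited \textsf{pq-PRF}s from $\OWF$s).

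\textbf{Step 1: Building blocks from $\OWF$s.} From a one-way function $f$ I would first invoke the standard consequences: a pseudorandom generator (the H{\aa}stad--Impagliazzo--Levin--Luby construction), a \textsf{pq-PRF} (apply Goldreich--Goldwasser--Micali to the PRG), and --- the technically hardest ingredient --- a family of \emph{universal one-way hash functions} (target-collision-resistant hash functions) that compress by one bit, which is Rompel's theorem. Composing a one-bit-compressing UOWHF gives a length-halving one, and iterating (a tree of freshly keyed hashes, or a Merkle--Damg{\aa}rd chain of independently keyed blocks) yields a UOWHF from $\{0,1\}^*$ to $\{0,1\}^n$.

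\textbf{Step 2: One-time signatures, then arbitrary length.} For fixed-length messages in $\{0,1\}^n$ I would use Lamport's scheme from $f$ alone: the secret key is a table $\{x_{i,b}\}_{i\in[n],b\in\{0,1\}}$ of uniform strings, the verification key is $\{f(x_{i,b})\}$, and a signature of $\mu$ reveals $x_{i,\mu_i}$ for every $i$. Any valid forgery on an unqueried message must, in some coordinate, exhibit a preimage under $f$ of an image whose preimage was never opened, which inverts $f$ on a fresh random point; this gives existential unforgeability under a single signing query. To handle arbitrary-length messages, prepend a hash-and-sign step: sample a fresh UOWHF key $k$, sign $(k, H_k(\mu))$ with the one-time scheme, and include $k$ in the signature --- target collision resistance is exactly what is needed, since in the unforgeability game the adversary commits to the queried message before $k$ is drawn.

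\textbf{Step 3: One-time to many-time.} I would use the Merkle tree construction: the verification key is the root of a binary tree whose leaves hold one-time verification keys, each internal node's one-time key signs the concatenation of its two children's keys, and a signature consists of the authentication path together with the leaf's one-time signature of the message. All of the (super-polynomially many) one-time key pairs and internal coins are produced on the fly as $F(\sk,\cdot)$ applied to the node index, where $F$ is the \textsf{pq-PRF}; this derandomizes the signer (no state needed) and lets the reduction answer queries consistently. A hybrid argument first replaces $F$ by a truly random function and then reduces a forgery to breaking one of the polynomially many one-time instances touched by the adversary's queries.

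\textbf{Main obstacle.} The hard part is Step 1, the UOWHF-from-$\OWF$ construction: Rompel's original argument is delicate and was only later given fully rigorous treatments (e.g.\ by Katz--Koo and by Haitner, Holenstein, Reingold, Vadhan, and Wee via the inaccessible-entropy framework). I would cite it as a black box rather than reprove it. For the quantum caveat in the footnote, the point is simply that the HILL, GGM, Rompel, Lamport, hash-and-sign, and Merkle reductions are all black-box and relativize, so a QPT forger against the final scheme yields a QPT inverter of $f$, contradicting post-quantum one-wayness.
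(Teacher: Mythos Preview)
Your sketch is correct and follows the standard textbook route (Lamport one-time signatures, UOWHFs via Rompel, hash-and-sign, Merkle tree with \textsf{pq-PRF}-derived leaves), but you should be aware that the paper does not actually prove this lemma at all: it is stated as a citation to \cite{R90} with no accompanying argument, and the only additional remark is the footnote observing that the classical reduction carries over to the quantum setting. So there is nothing to compare against beyond noting that your outline is precisely the content of the cited reference chain; for the purposes of this paper the lemma is imported as a black box, and your write-up already exceeds what is required.
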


Note also that an information-theoretically secure \emph{one-time} \textsf{QAE} exists unconditionally.

\begin{lemma}[One-Time Symmetric Authenticated Encryption \cite{BCG02}]
\label{lem:one-time enc}
There exists an information-theoretic unforgeable one-time \textsf{QAE} scheme on quantum messages. In the scheme, if $\lambda$ denotes the security parameter, then a $n$-qubit message is encrypted with a $(n+2\lambda )$-bit key and the resulting ciphertext is a $(n+\lambda)$-qubit state.    
\end{lemma}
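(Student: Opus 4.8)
# Proof Proposal for Lemma (One-Time Symmetric Authenticated Encryption)

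The plan is to reconstruct the standard Barnum–Crépeau–Gottesman–Smith–Tapp (BCG+02) construction, which combines the quantum one-time pad with a quantum authentication code based on purity testing, and to track the key and ciphertext sizes carefully to obtain the stated parameters. First I would recall the quantum one-time pad: an $n$-qubit state $\phi$ is encrypted by applying Pauli operators $X^aZ^b$ for uniformly random $a,b\in\{0,1\}^n$, which perfectly hides $\phi$ and costs $2n$ bits of key. To add authentication, I would invoke a purity-testing code (equivalently, a set of stabilizer codes indexed by a key, forming a quantum authentication scheme as in BCG+02): one appends $\lambda$ ancilla qubits in state $\ket{0^\lambda}$ to the message, encodes the resulting $(n+\lambda)$-qubit register under a stabilizer code $Q_s$ selected by a key $s$, and then applies a quantum one-time pad on the $(n+\lambda)$-qubit codeword. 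The ciphertext is therefore an $(n+\lambda)$-qubit state; the key consists of the code index $s$ together with the $2(n+\lambda)$ Pauli-pad bits — but here I would use the observation that for a suitably chosen family (e.g. a Clifford-based or polynomial-based purity-testing code), the code index can be folded into the analysis so that the total key length is $(n+2\lambda)$ bits, matching the statement; concretely one can take the stabilizer/Pauli description so that choosing $s$ plus the pad on the ancilla qubits contributes $2\lambda$ bits and the pad on the $n$ message qubits contributes the remaining bits, or cite BCG+02's explicit count.

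The key steps, in order, are: (1) define $\textsf{1QGen}(1^\lambda)$ to sample the code/key as above; (2) define $\textsf{1QEnc}(\nk,\phi)$ to append ancillas, encode, and one-time-pad; (3) define $\textsf{1QDec}(\nk,\tau)$ to undo the pad, decode, and measure the syndrome/ancilla register, outputting $\perp$ if the ancillas are not found in $\ket{0^\lambda}$ and the decoded message otherwise; (4) prove \emph{correctness}, which is immediate since honest encryption followed by decryption returns the ancillas to $\ket{0^\lambda}$ exactly; (5) prove \emph{information-theoretic unforgeability}: for any (unbounded) adversary that receives $\tau=\textsf{1QEnc}(\nk,\phi)$ (for an adversarially chosen or even entangled $\phi$) and outputs a possibly-modified register $\tau'$, the probability that $\textsf{1QDec}(\nk,\tau')$ accepts but the output state differs noticeably from $\phi$ is negligible in $\lambda$. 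The argument for (5) is the heart of the matter: after averaging over the one-time pad, any adversarial channel on $\tau$ is equivalent to a random Pauli attack on the codeword (by the standard Pauli-twirl argument), and the purity-testing property of the code family guarantees that any nontrivial Pauli error is detected — i.e. flips at least one ancilla out of $\ket{0^\lambda}$ — except with probability $2^{-\Omega(\lambda)}$ over the choice of code. One then concludes by the usual argument that conditioned on acceptance, the output is $\negl[\lambda]$-close (in the appropriate simulation/trace-distance sense) to the original message, including on any reference system entangled with $\phi$.

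The main obstacle I anticipate is step (5), and specifically making the Pauli-twirl reduction rigorous when the message $\phi$ is entangled with an adversary-held reference register, and then converting the "every nontrivial Pauli is caught with high probability" combinatorial statement about the purity-testing code family into the clean trace-distance bound demanded by an information-theoretic unforgeability definition. This requires care in (a) showing the pad indeed randomizes the adversary's effective attack into a Pauli mixture conditioned on the adversary's side information, and (b) a hybrid/trace-distance argument handling the event that the adversary applies the identity (which must be allowed and simply returns $\phi$) versus a nontrivial correctable-looking error. Since the statement is quoted from \cite{BCG02}, I would lean on their purity-testing lemma as a black box for the combinatorial core and focus the written proof on assembling the pieces and verifying the parameter count $(n+2\lambda)$-bit key, $(n+\lambda)$-qubit ciphertext; all remaining steps (correctness, definitions) are routine.
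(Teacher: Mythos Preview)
The paper does not give a proof of this lemma: it is stated in the preliminaries as a known result imported from \cite{BCG02}, with no accompanying argument. Your proposal is therefore not competing against anything in the paper; what you have sketched --- quantum one-time pad on top of a purity-testing stabilizer-code family, the Pauli-twirl reduction of an arbitrary adversarial channel to a random Pauli attack, and detection via the ancilla syndrome measurement --- is precisely the construction and proof strategy of \cite{BCG02}, so in that sense your plan is correct and aligned with the cited source.

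One genuine gap worth flagging: the specific parameter ``$(n+2\lambda)$-bit key'' does \emph{not} fall out of the BCG02 construction, and your attempt to ``fold the code index into the analysis'' to reach it is hand-waving that cannot succeed. A full quantum one-time pad on the $(n+\lambda)$-qubit codeword already costs $2(n+\lambda)$ key bits, and more fundamentally any information-theoretically secure encryption of an $n$-qubit state requires at least $2n$ bits of key; since BCG02 also show that secure quantum authentication implies encryption, an unforgeable one-time \textsf{QAE} with only $n+2\lambda$ key bits is impossible. The ciphertext length $(n+\lambda)$ is correct and is the only parameter the paper actually uses downstream (see the remark after Construction~\ref{con:money}); the stated key length appears to be a typo or oversimplification in the paper. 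You should not try to reproduce it --- cite the actual BCG02 bound of $2n+O(\lambda)$ key bits, or simply say the key and ciphertext are both $O(n+\lambda)$.
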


\subsection{Cryptography in the BQSM}
\label{sec:def bqs}

In this section, we review major results in the $\BQSM$ based on \cite{BS23}. These results will be useful in constructing the \textsf{BQS} signature scheme. 

We first recall the definition of a \textsf{BQS} one-time program. 

\begin{definition}[\textsf{BQS} One-Time Program \cite{BS23}]
\label{def:BQS one-time}
An algorithm $O$ is a \emph{$(q,\texttt{s})$-\textsf{BQS} one-time compiler} for the class of classical circuits $\mathcal{F}$ if it is QPT and satisfies the following:
\begin{enumerate}
\item (functionality) For any circuit $C\in \mathcal{F}$, the circuit described by $O(C)$ can be used to compute $C$ on a single input $x$ chosen by the evaluator. 
\item For any circuit $C\in \mathcal{F}$, the receiver requires $q$ qmemory to evaluate $O(C)$.
\item (security) For any computationally unbounded adversary $\adv_{\texttt{s}}$ there exists a computationally unbounded simulator $\mathcal{S}_{\texttt{s}}$ such that for any circuit $C\in \mathcal{F}$:
\begin{align*}|\Pr{[\adv_{\texttt{s}}(O(C))=1]}-\Pr{[\mathcal{S}_{\texttt{s}}^{1C}(|0\rangle^{\otimes \lvert C\rvert})=1]}\rvert \leq \negl[\lvert C\rvert] .\end{align*}
\end{enumerate}
\end{definition}

A one-time compiler, denoted as $\ot$, was constructed unconditionally in the $\BQSM$ \cite{BS23}. The details of the construction are not necessary for this work so only the result is given.

\begin{theorem}[\cite{BS23}]
\label{thm:otp}
$\mathcal{O}_{\texttt{s}}$ is a disappearing and unclonable information-theoretically secure $(0,\texttt{s})$ \textsf{BQS} one-time compiler for the class of polynomial classical circuits. 
\end{theorem}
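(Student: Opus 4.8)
The plan is to recover this statement by exhibiting a concrete compiler $\ot$ and checking the three requirements of Definition~\ref{def:BQS one-time} directly in the $\BQSM$. The natural template is the classical one-time-program construction of Goldwasser, Kalai, and Rothblum: a one-time program for $C$ is a garbled circuit together with one \emph{one-time memory} (OTM) per input wire, where the OTM for wire $j$ stores the two garbled-label pairs $(L_j^0,L_j^1)$ and releases exactly the one indexed by the evaluator's choice bit. The only missing ingredient classically is a non-interactive, information-theoretically secure OTM, and this is precisely what the $\BQSM$ supplies through a non-interactive flavour of $\otqot{}$ built from conjugate (BB84) coding: the sender transmits blocks of BB84 states, the evaluator measures each block in the basis dictated by its choice bit, and after the memory bound applies an adversary is forced — by an entropic uncertainty relation — to have irrecoverably lost information about one of the two stored labels.

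First I would fix the compiler: on input $C$, sample a garbling of $C$ together with, for each input wire, a non-interactive BQS-OTM encoding the two labels; then $\ot(C)$ consists of the classical garbled tables plus the quantum OTM registers. The honest evaluator on input $x$ simply measures the OTM registers according to the bits of $x$ — never needing to store a qubit, which yields requirement~2 with $q=0$ — and then runs the garbled-circuit evaluation. Functionality (requirement~1) then follows from correctness of the OTMs together with correctness of the garbling.

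The security argument (requirement~3) is the heart of the proof. Given a memory-bounded $\advs$, I would build $\mathcal{S}_\s$ as follows: run $\advs$ on a simulated transcript; at the moment the memory bound applies, use a min-entropy-splitting argument in combination with Lemma~\ref{chain} and privacy amplification (Theorem~\ref{privacy amplification}) — in exactly the spirit of Lemma~\ref{Raz 2} — to extract from $\advs$'s residual $\s$-qubit state a single ``effective choice string'' $x^*$ such that, conditioned on it, $\advs$ has only negligible information about the complementary label of every OTM; then let $\mathcal{S}_\s$ query its oracle on $x^*$, produce a garbling consistent with the single value $C(x^*)$, and finish the simulation. A hybrid over the input wires, switching each unopened label to uniform, shows the real and simulated views are $\negl[\lvert C\rvert]$-close, and the same residual-uncertainty bound gives the disappearing and unclonability properties, since after the bound $\advs$'s state cannot be expanded back into a second valid evaluation nor into a second copy of the program.

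I expect the main obstacle to be making this reduction go through for \emph{general} polynomial-size circuits information-theoretically: Yao-style garbling is only statistically secure for formulas/$\mathrm{NC}^1$, so one cannot black-box a classical garbling scheme under the information-theoretic promise of the theorem. The fix — and the technically delicate part — is to push the ``one-time'' structure into the quantum layer at \emph{every} gate rather than only at the input wires, so that an adversary attempting to evaluate two distinct inputs must defer a basis choice across a point where the memory bound applies; controlling the adaptivity of $\advs$'s measurement choices across the resulting many blocks, and assembling the per-block uncertainty bounds into a single simulator, is where the real work lies. This is exactly the analysis carried out in~\cite{BS23}, which we invoke here as a black box.
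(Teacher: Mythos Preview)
The paper does not prove this theorem. It is quoted from \cite{BS23} and explicitly imported as a black box: the sentence immediately preceding the statement says ``The details of the construction are not necessary for this work so only the result is given.'' There is no proof in the present paper to compare your proposal against.

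Your sketch is a reasonable high-level outline of the kind of argument one expects in \cite{BS23} (garbled circuits plus BQS-OTMs from conjugate coding, with an entropic extraction step to pin the adversary to a single effective input), and you correctly flag the $\mathrm{NC}^1$-versus-polynomial-circuits issue with information-theoretic garbling. But none of this belongs here: for the purposes of the present paper, Theorem~\ref{thm:otp} is a citation, not a result to be proved, and your last sentence (``we invoke here as a black box'') is in fact the entire content the paper supplies. If you intend to reproduce a proof, you should be working from \cite{BS23} directly rather than from this paper.
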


Here, disappearing means the one-time program can no longer be evaluated after the transmission ends. Unclonability means the one-time program cannot be split into two pieces that can both be used to learn an evaluation. The reader is referred to \cite{BS23} for more detailed definitions. 

A similar, more powerful, notion known as \textsf{BQS} program broadcast was also introduced in \cite{BS23}.

\begin{definition}[\textsf{BQS} Program Broadcast]
\label{def BQS program broadcast}
A \emph{$( q,\texttt{s}, k)$-\textsf{BQS} program broadcast} for the class of circuits $\mathcal{C}$ consists of the following QPT algorithms:
\begin{enumerate}
\item $\textsf{KeyGen}(1^\lambda,t_{\textnormal{end}}):$ Outputs a classical key $\ek$. 
\item $\textsf{Br}(1^\s,\ek,C):$ Outputs a quantum transmission $O_C$ for the circuit $C\in \mathcal{C}$ during broadcast time (before $ t_{\textnormal{end}}$). Outputs $\ek$ after broadcast time.
\item $\textsf{Eval}(\langle O_C, \ek\rangle, x):$ Outputs an evaluation $y$ on input $x$ from the transmission $\langle O_C, \ek\rangle$ using $q$ qmemory.
\end{enumerate}
Correctness requires that for any circuit $C\in \mathcal{C}$ and input $x$,
\begin{align*} \Pr{\left[
\begin{tabular}{c|c}
 \multirow{2}{*}{$\textsf{Eval}(\langle O_C, \ek\rangle, x)=C(x)\ $} &  $\ek\ \leftarrow \textsf{KeyGen}(1^\lambda,t_{\textnormal{end}})$ \\ 
 & $O_C\ \leftarrow \textsf{Br}(1^\s,\ek,C)$\\
 \end{tabular}\right]} \geq 1-\negl[\lambda] .
\end{align*}
Security requires that for any (computationally unbounded) adversary $\mathcal{A}_{\texttt{s}}$, there exists a (computationally unbounded) simulator $\mathcal{S}_{\texttt{s}}$ such that for any circuit $C\in \mathcal{C}$, and $\ek\leftarrow \textsf{KeyGen}(1^\lambda,t_{\textnormal{end}})$,
\begin{align*}|Pr[\mathcal{A}_{\texttt{s}}^{\textsf{Br}(1^\s,\ek,C)}(|0\rangle)=1]-Pr[\mathcal{S}_{\texttt{s}}^{kC}(|0\rangle^{\otimes \lvert C\rvert})=1]\rvert \leq \negl[\lambda].\end{align*}
\end{definition}

A program broadcaster, denoted $\Pi_{\textsf{Br}}$, was also constructed unconditionally in the $\BQSM$ \cite{BS23}. Let $\mathcal{C}_{m}$ be the class of polynomial classical circuits with $m$-bit outputs. 

\begin{theorem}[\cite{BS23}]
\label{thm:program broadcast}
$\Pi_{\textsf{Br}}$ is an information-theoretically secure $( 12m, \texttt{s},\frac{\texttt{s}}{2m})$-\textsf{BQS} program broadcast for the class $\mathcal{C}_{m}$ as long as $2^{-m}$ is negligible with respect to the security parameter.
\end{theorem}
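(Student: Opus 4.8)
The plan is to realize $\Pi_{\textsf{Br}}$ as a conjugate-coding broadcast that reuses the $\BQSM$ machinery behind the one-time compiler of Theorem~\ref{thm:otp}. To broadcast a circuit $C\in\mathcal{C}_m$, the broadcaster samples $x\leftarrow\{0,1\}^N$ and $\theta\leftarrow\{+,\times\}^N$ with $N=\mathrm{poly}(\lambda)$, partitions the positions into $\mathrm{poly}(\lambda)$ disjoint $12m$-bit blocks, and during the broadcast window streams the BB84 state $\ket{x}_\theta$ together with, for each block $j$, a one-time-program-like component $O^{(j)}_C$ that, combined with the correct privacy-amplified value of block $j$, yields one evaluation of $C$ on a point $z$ of the evaluator's choice, and otherwise hides $C$ — such components being realizable information-theoretically in the $\BQSM$ exactly as in Theorem~\ref{thm:otp}. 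After $t_{\textnormal{end}}$ the broadcaster releases the classical key $\ek\coloneqq(\theta,\ \text{block partition},\ \text{two-universal hash seeds})$. An honest evaluator tunes in, stores one fresh block ($12m$ qubits — matching the statement), and once $\ek$ is public measures in the revealed bases, keeps the $\approx 6m$ sifted positions, privacy-amplifies to an $m$-bit value, and uses it to run $O^{(j)}_C$ on its chosen $z$. Correctness follows from concentration of the sifting step, the correctness of the one-time-program component (Theorem~\ref{thm:otp}), and the hashing step, each failing with probability $2^{-\Omega(m)}$ — which is why the hypothesis that $2^{-m}$ is negligible is needed.

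For security, fix a computationally unbounded $\mathcal{A}_\s$; when the memory bound applies at the end of the broadcast, its state collapses to a register $E$ on at most $\s$ qubits that may be entangled across all blocks. The first and central step is an entropy-accounting argument: conditioned on $E$ and on $\theta$ (public afterwards), a BB84 uncertainty relation together with the chain rule (Lemma~\ref{chain}) applied block by block shows that the number of blocks $j$ for which $\mathcal{A}_\s$'s residual smooth min-entropy about the $12m$-bit block value drops below $\approx 3m$ is at most $\s/(2m)=k$, since ``claiming'' each such block costs at least $2m$ qubits out of the $\s$-qubit budget. For every remaining (``honest'') block, privacy amplification (Theorem~\ref{privacy amplification}) — or, in the linear-hash instantiation of the blocks, Lemma~\ref{Raz 2} — makes the block's privacy-amplified value $2^{-\Omega(m)}$-close to uniform and independent of $(E,\ek)$, so by the security of the one-time-program component $O^{(j)}_C$ reveals nothing about $C$ in $\mathcal{A}_\s$'s view.

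The simulator $\mathcal{S}_\s$ then proceeds as follows. It picks the broadcast randomness, $x$, and $\theta$ itself, runs $\mathcal{A}_\s$ as a black box, and replaces each per-block component $O^{(j)}_C$ by the corresponding information-theoretic one-time-program simulator from Definition~\ref{def:BQS one-time}; for at most $k$ of these — the blocks $\mathcal{A}_\s$ can actually decode, which $\mathcal{S}_\s$ identifies using its unbounded power from the entropy bound — it answers that simulator's single query by a \emph{lazy} query to its own oracle $C$, deferring the choice of $z$ to the moment $\mathcal{A}_\s$ supplies it. Indistinguishability is proved by a hybrid over the $\mathrm{poly}(\lambda)$ blocks: on ``honest'' blocks the replacement is justified by the one-time-program security of Definition~\ref{def:BQS one-time} (each component is invoked at most once, on an input independent of its near-uniform hidden key), and on the $\le k$ decodable blocks by the exact simulation together with the single oracle call; the $\mathrm{poly}(\lambda)$ per-hybrid gaps, each $2^{-\Omega(m)}$, sum to a negligible quantity, again using that $2^{-m}$ is negligible. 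This matches the security clause of Definition~\ref{def BQS program broadcast} with $k=\s/(2m)$, and the functionality and qmemory clauses are precisely the correctness discussion above.

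I expect the hard part to be the block-by-block entropy accounting of the second paragraph. Because $E$ may be entangled with all blocks at once, the blocks are not independent bounded-storage channels, so the bound ``at most $k=\s/(2m)$ decodable blocks'' has to be extracted by interleaving the chain rule (Lemma~\ref{chain}) with the smoothing parameter so that the $2m$-per-block charge is genuinely additive while the residual smooth min-entropy on the remaining blocks stays large enough to apply privacy amplification (Theorem~\ref{privacy amplification}); making the smoothing and the privacy-amplification error compose cleanly across $\mathrm{poly}(\lambda)$ blocks is the delicate point. A secondary subtlety is justifying the simulator's deferral of the adaptively chosen evaluation point — i.e., that ``holding a decodable block'' is information-theoretically worth no more than one lazy oracle call to $C$ — which is exactly the purpose of the disappearing, unclonable one-time compiler of Theorem~\ref{thm:otp}.
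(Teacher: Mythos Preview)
The paper does not prove this statement: Theorem~\ref{thm:program broadcast} is quoted verbatim from \cite{BS23} with no accompanying argument, so there is no proof in the present paper to compare your proposal against. Your sketch is a plausible reconstruction of what the construction and proof in \cite{BS23} might look like, but its correctness cannot be assessed against this paper, and you should consult \cite{BS23} directly for the actual argument.
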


\subsection{\textsf{QROM} and \textsf{TLP}}

In the \emph{random oracle model} \cite{BR93}, a hash function $H$ is assumed to return truly random values, and all parties are given access to an oracle of $H$. This model aims to achieve security using the randomness of $H$. In practice, $H$ is replaced with a suitable hash function. However, a quantum adversary can evaluate a hash function on a superposition of inputs. This motivated the introduction of the \emph{quantum random oracle model} ($\QROM$) \cite{BD11}, where users can query the oracle on a superposition of inputs. The reader is referred to \cite{BD11} for more details. 

The important point for our purposes is that the $\QROM$ allows for the creation of time-lock puzzles which were first introduced in \cite{RSW96}. We briefly define this notion based on \cite{BGJ16} and the reader is referred to this work for a detailed definition. 

\begin{definition}[Time-Lock Puzzle \cite{BGJ16}]\label{def:TLP}
   A $\TLP$ with gap $\epsilon<1$ consists of the following pair of algorithms:
    \begin{enumerate}
        \item $\textsf{Puzzle.Gen}(1^\lambda,t, s):$ Generates a classical puzzle $Z$ encrypting the solution {$s\in \{0,1\}^\lambda$} with difficulty parameter $t$.
        \item $\textsf{Puzzle.Sol}(Z):$ Outputs a solution $s\in  \{0,1\}^\lambda$ for the puzzle $Z$. 
    \end{enumerate}
    Correctness requires that for any solution $s$ and puzzle $Z\leftarrow \textsf{Puzzle.Gen}(1^\lambda,t, s)$, $\textsf{Puzzle.Sol}(Z)$ outputs $s$ with probability $1-\negl[\lambda]$. 
    
    Security requires that there exists a polynomial $\tilde{t}$ such that for every polynomial $t\geq \tilde{t}$ and every QPT $\adv$ of depth $\textsf{dep}(\adv)\leq t^\epsilon(\lambda)$, there exists negligible function $\mu$ such that for every $\lambda \in \mathbb{N}^+$ and pair of distinct solutions $s_0,s_1\in \{0,1\}^\lambda$:
    
\begin{align*} \Pr{\left[
\begin{tabular}{c|c}
 \multirow{2}{*}{$\adv(Z_b)=b\ $} &   $b\leftarrow \{0,1\}$ \\ 
  & $Z_b\leftarrow \textsf{Puzzle.Gen}(1^\lambda,t, s_b)$\\
 \end{tabular}\right]} \leq \frac{1}{2}+\mu(\lambda).
\end{align*}
\end{definition}
{
The security condition essentially states that if the puzzle is constructed with difficulty parameter $t$, then the puzzle cannot be deciphered in depth $t^\epsilon$.}

$\TLP$s were constructed classically from non-parallelizing languages and OWFs \cite{BGJ16}. The same construction can be adapted to the post-quantum setting, as noted in \cite{L23}, assuming post-quantum non-parallelizing languages and $\OWF$s. Post-quantum non-parallelizing languages were constructed in the $\QROM$ in \cite{CFH21} giving the following result.

\begin{lemma}[\cite{BGJ16,CFH21}]\label{lem:TLP}
Let $\epsilon <1$. $\TLP$s with gap $\epsilon$ exist assuming $\OWF$s in the $\QROM$. 
\end{lemma}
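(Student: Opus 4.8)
The plan is to prove this by composing the two cited results, treating \cite{BGJ16} as a generic transformation and \cite{CFH21} as supplying its missing ingredient. Recall from \cite{BGJ16} that there is a black-box construction which, given (i) a non-parallelizing language $L$ --- i.e. a language together with an instance sampler and an honest decision procedure of sequential depth $t$ such that no adversary of depth $t^{\epsilon'}$ decides $L$ correctly, for some gap $\epsilon' < 1$ --- and (ii) a one-way function, outputs a $\TLP$ with any gap $\epsilon < \epsilon'$: at a high level one uses the depth-$t$ computation only to lock a short seed, expands the seed with a PRG, one-time-pads the $\lambda$-bit solution $s$, and uses commitments and PRGs for the remaining scaffolding. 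As noted in \cite{L23}, this construction and its security reduction are entirely oblivious and relativizing, so they carry over verbatim to the $\QROM$ and to quantum adversaries provided the building blocks are post-quantum: post-quantum OWFs yield post-quantum PRGs and commitments, and the only place adversarial depth enters the argument is through the non-parallelizing assumption on $L$.

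It therefore suffices to exhibit, relative to a quantum-accessible random oracle $H$, a non-parallelizing language with gap arbitrarily close to $1$; this is exactly \cite{CFH21}. Concretely, take the hash chain $H^{(t)} \coloneqq H \circ \cdots \circ H$ ($t$ times): it is computable in depth $O(t)$ by querying $H$ one link at a time, whereas \cite{CFH21} show, via a query-depth lower bound for the quantum random oracle, that any quantum algorithm with quantum oracle access to $H$ and circuit depth $o(t)$ outputs $H^{(t)}(x)$ with only negligible probability on a uniformly random $x$. Hence the language $L_H \coloneqq \{(x,y,1^t) : y = H^{(t)}(x)\}$ is non-parallelizing with a gap $\epsilon'$ that can be taken as close to $1$ as we wish, and it comes with the trivial instance-plus-witness sampler that \cite{BGJ16} requires (sample $x$, walk the chain).

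Composing: instantiate the \cite{BGJ16} transformation with $L_H$ and with OWF-based post-quantum PRGs and commitments (OWFs themselves being available in the $\QROM$). Correctness is immediate from correctness of both components. For security, fix any $\epsilon < 1$, pick $\epsilon'$ with $\epsilon < \epsilon' < 1$ as above, and suppose some QPT adversary $\adv$ with $\textsf{dep}(\adv) \leq t^{\epsilon}(\lambda)$ distinguishes puzzles encoding $s_0$ from puzzles encoding $s_1$ with non-negligible advantage. The \cite{BGJ16} reduction --- which runs in depth polynomially related to $\textsf{dep}(\adv)$ and, crucially, never re-executes the depth-$t$ sequential computation itself --- converts $\adv$ into a low-depth quantum algorithm that either inverts the OWF or decides $L_H$; the first contradicts post-quantum one-wayness, and the second contradicts the \cite{CFH21} depth lower bound in the $\QROM$ since $\textsf{dep} \ll t^{\epsilon'}$. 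As $\epsilon < 1$ was arbitrary, the lemma follows.

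The main obstacle I anticipate is precisely the depth bookkeeping at this interface: one must verify that nothing in \cite{BGJ16}'s reduction --- nor in the simulation of the PRG and commitment components, nor in answering the adversary's (possibly quantum) oracle queries --- secretly incurs depth close to $t$, for otherwise the reduction could not be run against a depth-$t^{\epsilon}$ adversary, and one must confirm that ``depth'' in the $\QROM$ is the correct, query-inclusive notion throughout. This is the point flagged by \cite{L23}; once it is granted, the rest is a routine composition of the two cited theorems.
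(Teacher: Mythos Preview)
Your proposal is correct and follows exactly the approach the paper takes: the paper does not give a formal proof of this lemma but simply observes, in the paragraph preceding it, that the \cite{BGJ16} construction of $\TLP$s from non-parallelizing languages and OWFs carries over post-quantumly (as noted in \cite{L23}), and that \cite{CFH21} supplies the required post-quantum non-parallelizing language in the $\QROM$. Your write-up is a faithful expansion of precisely this composition, with the depth-bookkeeping caveat you flag being the same subtlety the paper offloads to \cite{L23}.
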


For the rest of the work, we use $\TLP$s with polynomial $\tilde{t}=0$ in order to simplify notation. Also, we informally say ``a puzzle $Z$ requires $t$ time to solve'' if $Z\leftarrow \textsf{Puzzle.Gen}(1^\lambda,t, s)$ is constructed with difficulty parameter $t$. 

\section{Time-Dependent Signatures}

In this section, we introduce the notion of $\TD$ signatures for quantum messages and present a construction that is secure in the $\QROM$ assuming the existence of $\OWF$s. 

\subsection{Definitions}
\label{sec:TD sig def}

We introduce the notion of $\TD$ signatures with {verification keys that do not change with time}. In this setting, the current time is denoted by $\ttt$ and it is assumed that all parties have access to the current time up to a small error.  

\begin{definition}[Time-Dependent Signatures]
A \emph{$\TD$ signature scheme} $\Pi$ over quantum message space $\hildd{Q}$ consists of the following algorithms: 
\begin{itemize}
    \item $\textsf{KeyGen}(1^\lambda)$: Outputs $(\sk,\vk)$, where $\sk$ is a secret key and $\vk$ is a verification key.
    \item $\textsf{Sign}(\sk,\ttt, \phi):$ Outputs a signature $\sigma$ for $\phi \in \hildd{Q}$ based on the time $\ttt$ using $\sk$. 
    \item $\textsf{Verify}(\vk, \ttt, \sigma')$: Verifies the signature ${\sigma}'$ using $\vk$ and time $\ttt$. Correspondingly outputs a message $\phi'$ or $\perp$. 
\end{itemize}
\end{definition}

\begin{definition}[Correctness]
A $\TD$ signature scheme is \emph{correct} if: for any times $T$ and $T'\leq T+\delta/2$ and $(\sk,\vk)\leftarrow \textsf{KeyGen}(1^\lambda)$: \begin{align*}\| \textsf{Verify}(\vk,T', \textsf{Sign}(\sk,T, \cdot)) - \textsf{id}(\cdot) \|\leq \negl[\lambda] .
 \end{align*}
\end{definition}

We provide a security notion for quantum unforgeability (QUF) that is similar to the security definition for a one-time quantum message authentication introduced in \cite{BCG02}. Essentially, we want the forgery submitted by the adversary to be either invalid or the message extracted from the forgery to be the ``same as'' a state submitted to the signature oracle. By ``same as'' we require that the fidelity between the two states is high (close to 1).

\smallskip \noindent\fbox{%
    \parbox{\textwidth}{%
\textbf{Experiment} $\textsf{TD-Sign}^{\text{QUF}}_{\Pi, \adv}({\lambda})$:
\begin{enumerate}
    \item Experiment $\mathcal{E}$ samples $(\sk,\vk)\leftarrow \textsf{KeyGen}(1^\lambda)$.
    \item Adversary outputs $\tilde{\sigma} \leftarrow \adv^{\ket{\textsf{Sign}(\sk,\ttt, \cdot)}}(\vk)$ at some time $T$. 
    Let $\phi_1,...,\phi_p$ denote the density matrices of the quantum messages $\adv$ submitted to the signature oracle (may be entangled). 
    \item $\mathcal{E}$ computes $ \textsf{Verify}(\vk,T, \tilde{\sigma})$ and obtains either $\perp$ or a state with density matrix $\tilde{\phi}$.
    \item If the outcome is $\perp$, then the experiment is terminated and the output is 0.
    \item Otherwise, let $F\coloneqq \max_{i\in [p]}\delta_F(\phi_i,\tilde{\phi})$ be the maximum fidelity.
    \item $\mathcal{E}$ flips a biased coin that ``produces'' head with probability $F$ and tails with probability $1-F$.
   \item The output of the experiment is $1$ if the coin toss outcome is tails. Otherwise, the output is 0.
\end{enumerate}}}
\smallskip

{Security requires that there is a negligible chance that the coin toss outcome is tails. This translates to the requirement that there is a negligible chance that the state $\tilde{\phi}$ is not very close to one of the submitted states $\phi_1,...,\phi_p$ with respect to the fidelity measure. }

\begin{definition}
    A $\TD$ quantum signature scheme $\Pi$ is \emph{quantum unforgeable ({QUF})}, if for any QPT $\adv$,
    \begin{align*} 
        \Pr [\textsf{TD-Sign}^{\textit{QUF}}_{\Pi, \adv}({\lambda})=1]\leq \negl[\lambda].
    \end{align*}
\end{definition}

%Note that the quantum unforgeability experiments discussed in Sec.~\ref{sec:unfor} are similar except there are no public-keys involved and verification is performed using the secret key. 

Our $\TD$ signatures also satisfy a ``disappearing'' or ``expiration'' property. Advantageously, this property implies that an adversary cannot produce a signature for a message $\phi$ even if it has received a signature of $\phi$ earlier! We present the following experiment to capture this intuition. 

\smallskip \noindent\fbox{%
    \parbox{\textwidth}{%
\textbf{Experiment} $\textsf{TD-Sign}^{\textsf{Dis}}_{\Pi, \adv}({\lambda})$:
\begin{enumerate}
    \item Experiment $\mathcal{E}$ samples $(\sk,\vk)\leftarrow \textsf{KeyGen}(1^\lambda)$.
    \item Adversary gets $\vk$ and quantum oracle access to $\textsf{Sign}(\sk,\ttt, \cdot)$.
    \item $\adv$ outputs 1 at some time $T$ and its query access is revoked. 
    \item $\adv$ outputs $\tilde{\sigma}$.     
    \item $\mathcal{E}$ computes $ \textsf{Verify}(\nk,T+1, \tilde{\sigma})$. 
    \item The output of the experiment is 0 if the result is $\perp$ and 1 otherwise.
\end{enumerate}}}
\smallskip

\begin{definition}
    A $\TD$ quantum signature scheme $\Pi$ is \emph{disappearing} if for any QPT $\adv$,
    \begin{align*} 
         \Pr[\textsf{\TD-Sign}^{\textsf{Dis}}_{\Pi, \adv}({\lambda})=1]\leq \negl[\lambda].
    \end{align*}
\end{definition}

\subsection{Construction}

We present a scheme for $\TD$ signatures that satisfies disappearing and QUF security assuming $\TLP$s and digital signatures on classical messages. Recall, that both these assumptions can be based on $\OWF$s in the $\QROM$ (Lemmas \ref{lem:ds from owf} \& \ref{lem:TLP}). 

\begin{construct}
\label{con:TD sig CV}
{\small Let $\Pi_{\textnormal{1QAE}}\coloneqq (\textsf{1QGen}, \textsf{1QEnc},\textsf{1QDec})$ be the algorithms of a one-time \textsf{QAE} scheme on $n$-qubit messages. Let $\Pi_{\textnormal{CS}}\coloneqq ({\textsf{CS.Gen}}, {\textsf{CS.Sign}},{\textsf{CS.Verify}})$ be the algorithms for a digital signature scheme on classical messages. Let $(\textsf{Puzzle.Gen},\textsf{Puzzle.Sol})$ be the algorithms of a $\TLP$ scheme. The construction for a $\TD$ signature scheme on $n$-qubit messages is as follows:
\begin{itemize}
    \item $\textsf{KeyGen}(1^\lambda):$ Sample $(\sk,\vk) \leftarrow {\textsf{CS.Gen}}(1^\lambda)$. 
 
\item $\textsf{Sign}(\sk,\ttt, \phi):$ 
\begin{enumerate}
    \item Sample $\nk\leftarrow \textsf{1QGen}(1^\lambda)$. 
    \item Compute $\rho \leftarrow \textsf{1QEnc}_{\nk}(\phi)$.
    \item Classically sign $(\ttt, \nk)$ i.e. $\textsf{sig}\leftarrow {\textsf{CS.Sign}}(\sk, (\ttt,\nk))$. 
    \item Generate $\TLP$ puzzle $Z\leftarrow \textsf{Puzzle.Gen}(1^\lambda, 1,(\ttt, \nk, \textsf{sig}))$.
\end{enumerate}
Output $\sigma \coloneqq (\rho, Z)$.

\item $\textsf{Verify}(\vk,\ttt, \sigma'):$ 
\begin{enumerate}
    \item Interpret $\sigma'$ as $(\rho',Z')$ and take note of the current time as $T$. 
    \item Compute $(T', \nk', \textsf{sig}')\leftarrow \textsf{Puzzle.Sol}(Z').$
    \item If $T' +\delta/2\leq T$ or $\textsf{CS.Verify}(\vk,(T', \nk'), \textsf{sig}')=0 $, then output $\perp$.
    \item Otherwise, output $\textsf{1QDec}_{\nk'}(\rho')$.
\end{enumerate}
\end{itemize}}
\end{construct}

\begin{theorem}
    Construction \ref{con:TD sig CV} is a QUF and disappearing $\TD$ signature scheme assuming the existence of secure $\TLP$s and EUF digital signatures on classical messages. 
\end{theorem}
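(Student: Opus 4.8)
The plan is to establish the three required properties—correctness, disappearing security, and QUF security—separately, in roughly increasing order of difficulty. Correctness is the easy warm-up: given $\sigma = (\rho, Z)$ produced by $\textsf{Sign}(\sk, T, \phi)$ and a reception time $T' \le T + 0.5$, the verifier solves $Z$ to recover $(T, \nk, \textsf{sig})$ by correctness of the $\TLP$, checks that $T' - 0.5 \le T$ (which holds since $T' \le T + 0.5$ means $T' - 0.5 \le T$), verifies $\textsf{sig}$ by correctness of $\Pi_{\textnormal{CS}}$, and outputs $\textsf{1QDec}_{\nk}(\rho)$, which is negligibly close to $\phi$ by correctness of $\Pi_{\textnormal{1QAE}}$. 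A union bound over these negligible error events gives the claim.

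For the \emph{disappearing} property, I would argue that at the time $T+1$ at which $\mathcal{E}$ runs $\textsf{Verify}$, any signature $\tilde\sigma = (\tilde\rho, \tilde Z)$ the adversary outputs must either fail $\TLP$ extraction, fail the classical signature check, or yield a recovered time $T'$ with $T' + 0.5 \le T + 1$, i.e. $T' \le T + 0.5$. The key point: for verification \emph{not} to return $\perp$, we need a valid $\textsf{CS.Sign}$ signature on some pair $(T', \nk')$ with $T' > T + 0.5$. Since the adversary's signing-oracle queries were all made before time $T$, and each query at time $t \le T$ returns a puzzle whose embedded classical signature is on a pair with first coordinate $t \le T < T' $... wait—actually I must be careful: the signature oracle $\textsf{Sign}(\sk,\ttt,\cdot)$ at a query time $\le T$ produces $\textsf{CS.Sign}(\sk,(t,\nk))$ with $t \le T$. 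To forge verification at time $T+1$, the adversary needs a classical signature on $(T',\nk')$ with $T' > T+0.5$, hence $T' \ne t$ for every queried $t$, so $(T',\nk')$ was never signed by the classical oracle. Thus a successful disappearing-adversary yields an EUF forgery against $\Pi_{\textnormal{CS}}$; a reduction that simulates the $\TD$ signing oracle using its classical signing oracle (it can generate $\nk$, encrypt, and build the puzzle itself) and outputs $((T',\nk'),\textsf{sig}')$ extracted from $\tilde Z$ breaks EUF. This contradicts Lemma on digital signatures, so the disappearing advantage is negligible.

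For \emph{QUF security}, I would proceed by a hybrid / reduction argument combining the $\TLP$ hiding property with the unforgeability of the one-time \textsf{QAE}. Consider an adversary that outputs $\tilde\sigma = (\tilde\rho, \tilde Z)$ at time $T$ and causes the coin to land tails with non-negligible probability—meaning $\textsf{Verify}(\vk,T,\tilde\sigma)$ returns some $\tilde\phi$ that is far in fidelity from every submitted $\phi_i$. Verification succeeds only if the extracted $(T',\nk',\textsf{sig}')$ has $\textsf{sig}'$ valid and $T' + 0.5 > T$. Split into two cases. \emph{Case A:} $(T',\nk')$ differs from every $(t_j, \nk_j)$ signed by the oracle—then, as above, we extract a classical EUF forgery. \emph{Case B:} $(T',\nk') = (t_j,\nk_j)$ for some oracle query $j$; then $\tilde\rho$ together with the fact that $\textsf{1QDec}_{\nk_j}(\tilde\rho) = \tilde\phi$ is far from $\phi_j$ constitutes a forgery against the one-time \textsf{QAE} scheme under key $\nk_j$. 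The subtlety here—and what I expect to be the \textbf{main obstacle}—is that the one-time \textsf{QAE} key $\nk_j$ is only information-theoretically hidden when its \emph{only} appearance is inside the puzzle $Z_j$; we must invoke $\TLP$ hiding (Lemma on $\TLP$) to argue that, within the QPT adversary's running time (which is less than the puzzle's difficulty $t = 1$, suitably scaled), the adversary's view is computationally independent of $\nk_j$, so it cannot have learned enough about $\nk_j$ to forge $\tilde\rho$. Formally, I would build a reduction: guess the index $j$ (polynomial loss), replace the real puzzle $Z_j$ with a puzzle encrypting a dummy value via $\TLP$ hiding, and then the residual adversary is an attacker against the one-time \textsf{QAE} that only sees $\textsf{1QEnc}_{\nk_j}(\phi_j)$ with $\nk_j$ freshly sampled and otherwise unused—contradicting Lemma on one-time \textsf{QAE}. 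Care must be taken that the $\TLP$ security game is non-adaptive in the puzzle solutions, which matches since the solutions $(t,\nk,\textsf{sig})$ can be fixed before the puzzle is generated, and that the reduction runs in low depth relative to $t$, which is why the paper's convention of scaling difficulty against the CPU-time bound is invoked.
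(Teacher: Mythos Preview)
Your proposal takes essentially the same route as the paper: split into the case where the extracted pair $(T',\nk')$ is fresh (reduce to EUF of $\Pi_{\textnormal{CS}}$) and the case where it equals some oracle-signed $(T_j,\nk_j)$ (invoke $\TLP$ hiding to decouple $\nk_j$ from the adversary's view, then reduce to unforgeability of $\Pi_{\textnormal{1QAE}}$). The paper packages this as a sequence of hybrids $\Hy_0,\dots,\Hy_3$ rather than a direct case split, and replaces the puzzle contents for \emph{all} late queries simultaneously (encoding the query index in the dummy so the modified verifier can look up the right stored key) instead of guessing a single index $j$, but the substance is the same. Your disappearing argument also matches the paper's, which likewise boils down to: an accepted forgery at time $T+1$ needs a valid classical signature on a pair with first coordinate $>T+0.5$, whereas every oracle-signed pair has first coordinate $\le T$, so success yields an EUF forgery.

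There is one point you gloss over that is in fact the crux of why the scheme works. You justify the $\TLP$ step by saying it holds ``within the QPT adversary's running time (which is less than the puzzle's difficulty $t=1$, suitably scaled)''. That is not the right reason: the adversary's \emph{total} running time may span many time units, and nothing stops it from solving puzzles received in early queries. What actually saves you is the verification time check. Since the forgery is verified at time $T$ and rejected unless $T'+0.5>T$, in Case~B we have $T_j=T'>T-0.5$, so the $j$-th query was made less than half a time unit before the forgery is output, while the puzzle needs a full unit to open. This is precisely why the paper's $\Hy_3$ only replaces puzzles for queries made at time $\ge T-0.5$. Without making this observation explicit, your $\TLP$ reduction does not go through: you cannot bound the distinguisher's depth between receiving $Z_j$ and producing its output.
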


\begin{proof}
We first prove QUF security through a sequence of hybrids. We let hybrid $\textsf{H}_0$ be the standard experiment, but adapted to our construction for an easier discussion. 

\smallskip \noindent\fbox{%
    \parbox{\textwidth}{%
\textbf{Hybrid} $\Hy_0$ (standard experiment):
\begin{enumerate}
    \item $\mathcal{E}$ samples $(\sk,\vk)\leftarrow \textsf{KeyGen}(1^\lambda)$.
    \item Adversary outputs $\tilde{\sigma}\leftarrow \adv^{\ket{\textsf{Sign}(\sk,\ttt, \cdot)}}(\vk)$ at time $T$. 
    \item Let $\phi_1,...,\phi_p$ denote the density matrices of the quantum messages $\adv$ submitted to the signature oracle and let $(\rho_i,Z_i)_{i\in [p]}$ denote the responses.
    \item $\mathcal{E}$ verifies the forgery as follows:
\begin{enumerate}
    \item Interpret $\tilde{\sigma}$ as $(\tilde{\rho},\tilde{Z})$. 
    \item Compute $(\tilde{T}, \tilde{\nk}, \tilde{\textsf{sig}})\leftarrow \textsf{Puzzle.Sol}(\tilde{Z}).$
    \item If $\tilde{T} +\delta/2\leq T$ or $\textsf{CS.Verify}(\vk,(\tilde{T}, \tilde{\nk}), \tilde{\textsf{sig}})=\perp$, then output $\perp$.
    \item Otherwise, compute $\tilde{\phi}\leftarrow \textsf{1QDec}_{\tilde{\nk}}(\tilde{\rho})$.
\end{enumerate}
     \item Let $F\coloneqq \max_{i\in [p]}\delta_F(\phi_i,\tilde{\phi})$ be the maximum fidelity.
    \item $\mathcal{E}$ flips a biased coin that ``produces'' head with probability $F$ and tails with probability $1-F$.
   \item The output of the experiment is $1$ if the coin toss outcome is tails. Otherwise, the output is 0.
\end{enumerate}}}
\smallskip

In the next hybrid, the experiment rejects any forgery which includes a signature of a new message under the classical signature scheme. 

\smallskip \noindent\fbox{%
    \parbox{\textwidth}{%
\textbf{Hybrid} $\Hy_1$:
\begin{enumerate}
    \item Steps 1-3 same as $\Hy_0$.
\setcounter{enumi}{3}
\item $\mathcal{E}$ verifies the forgery as follows:
\begin{enumerate}
    \item Interpret $\tilde{\sigma}$ as $(\tilde{\rho},\tilde{Z})$. 
    \item Compute $(\tilde{T}, \tilde{\nk}, \tilde{\textsf{sig}})\leftarrow \textsf{Puzzle.Sol}(\tilde{Z}).$
    \item Compute $(T_i, \nk_i, \textsf{sig}_i)\leftarrow \textsf{Puzzle.Sol}(Z_i)$ for all $i\in [p]$.
    \item If $\tilde{T} +\delta/2\leq T$ or $(\tilde{T}, \tilde{\nk})\notin (T_i,\nk_i)_{i\in [p]}$, then output $\perp$.
    \item Otherwise, compute $\tilde{\phi}\leftarrow \textsf{1QDec}_{\tilde{\nk}}(\tilde{\rho})$.
\end{enumerate}
     \item Steps 5-7 same as $\Hy_0$.
\end{enumerate}}}
\smallskip

Hybrid $\Hy_2$ is the same as hybrid $\Hy_1$ except we discard the classical signatures. The new approach to sign is as follows:

$\textsf{Sign}_{\Hy_2}(\sk,\ttt, \phi)$:
\begin{enumerate}
    \item Sample $\nk\leftarrow \textsf{1QGen}(1^\lambda)$. 
    \item Compute $\rho \leftarrow \textsf{1QEnc}_{\nk}(\phi)$.
    \item Generate $\TLP$ puzzle $Z\leftarrow \textsf{Puzzle.Gen}(1^\lambda, 1,(\ttt, \nk))$.
    \item Output $\sigma \coloneqq (\rho, Z)$.
\end{enumerate}

Next, hybrid $\textsf{H}_3$ is the same as $\textsf{H}_2$, except we change the signing and verification algorithms again but only for messages close to time $T$. Specifically, the algorithms are as follows:

\begin{itemize}
    \item $\textsf{KeyGen}_{\Hy_3}(1^\lambda):$ Sample $(\sk,\vk) \leftarrow {\textsf{CS.Gen}}(1^\lambda)$. 
\item $\textsf{Sign}_{\Hy_3}(\sk,\ttt, \phi_i)$: If $\ttt< T-\delta/2$, then output $\textsf{Sign}_{\Hy_2}(\sk,\ttt, \phi_i)$. Otherwise:
\begin{enumerate}
    \item Sample $\nk_i\leftarrow \textsf{1QGen}(1^\lambda)$. 
    \item Store $\nk_i$. 
    \item Compute $\rho_i \leftarrow \textsf{1QEnc}_{\nk_i}(\phi_i)$.
    \item Generate $\TLP$ puzzle $Z_i\leftarrow \textsf{Puzzle.Gen}(1^\lambda, 1,(i,0^{\lvert \nk_i\rvert}))$.
    \item Output $(\rho_i, Z_i)$.
\end{enumerate}
\item $\textsf{Verify}_{\Hy_3}(\vk,\ttt, \sigma'):$ 
\begin{enumerate}
\item If $\ttt <T$, output $\textsf{Verify}_{\Hy_2}(\vk,\ttt, \sigma').$ 
    \item Otherwise, interpret $\tilde{\sigma}$ as $(\tilde{\rho},\tilde{Z})$. 
    \item Compute $\textsf{Puzzle.Sol}(\tilde{Z}).$
    \item If the solution is not of the form $(j,0^*)$ for some $j\in [p]$, then output $\perp.$
    \item Otherwise, obtain $\nk_j$ from storage and output $\tilde{\phi}\leftarrow \textsf{1QDec}_{{\nk}_j}(\tilde{\rho})$.
\end{enumerate}
\end{itemize}

\begin{claim}
    $\Pr[\textsf{H}_3=1]\leq \negl[\lambda]$.
\end{claim}

\begin{proof}
In this hybrid, the experiment verifies the forgery $\tilde{\sigma}$ as follows. It first interprets $\tilde{\sigma}$ as $(\tilde{\rho},\tilde{Z})$. Then, it computes $\textsf{Puzzle.Sol}(\tilde{Z})$ and if the solution is not of the form $(j,0^*)$ for some $j\in [p]$, then it outputs $\perp.$ In the case this condition is satisfied, then the experiment runs $ \textsf{1QDec}_{{\nk}_j}(\tilde{\rho})$. Note that the only information available regarding $\nk_j$ is the authenticated state $\rho_j$ from the $j^{\textsf{th}}$ query to the signature oracle. Hence, if an adversary manages to produce a fresh forgery, then $\adv$ can easily be converted into an a QPT adversary that breaks QUF of $\Pi_{1QAE}$.
\qed
\end{proof}

\begin{claim}
    No QPT adversary can distinguish hybrids $\Hy_3$ and $\Hy_2$ with non-negligible advantage.
\end{claim}

\begin{proof}
The only difference between these hybrids is the information contained in the $\TLP$s produced after time $T-\delta/2$. The $\TLP$s require an hour to solve and $\adv$ submits its output at time $T$. Hence, by the security of $\TLP$s, no adversary can distinguish between these hybrids.
\qed
\end{proof}

\begin{claim}
    The output of $\Hy_2$ and $\Hy_1$ are indistinguishable.
\end{claim}
\begin{proof}
There is no difference between the output of these hybrids as the classical signature scheme used in $\Hy_1$ is redundant.
\qed
\end{proof}

\begin{claim}
    No QPT adversary can distinguish hybrids $\Hy_1$ and $\Hy_0$ with non-negligible advantage.
\end{claim}
\begin{proof}
The only difference between these hybrids is that in $\Hy_0$, the output of the experiment is 0 if $\textsf{CS.Verify}(\vk,(\tilde{T}, \tilde{\nk}), \tilde{\textsf{sig}})=\perp$, whereas in $\Hy_1$, this condition is replaced with $(\tilde{T}, \tilde{\nk})\notin (T_i,\nk_i)_{i\in [p]}$. Hence, if $\adv$ can distinguish between these hybrids, then with non-negligible probability $\adv$'s forgery satisfies $(\tilde{T}, \tilde{\nk})\notin (T_i,\nk_i)_{i\in [p]}$ and $\textsf{CS.Verify}(\vk,(\tilde{T}, \tilde{\nk}), \tilde{\textsf{sig}})=1$. Such an algorithm can easily be converted into a QPT algorithm that breaks the unforgeability of $\Pi_{\textnormal{CS}}$.
\qed
\end{proof}

The past three claims imply that $\Pr[\Hy_0=1]\leq \negl[\lambda]$. 

Disappearing security is similar but easier to show. In the disappearing experiment $\textsf{TD-Sign}^{\textsf{Dis}}_{\Pi, \adv}({\lambda})$, $\adv$ submits the forgery after a time delay of 1 from the submission of any signature query i.e. $T\geq T_i+1$ for any $i\in [p]$. However, we showed that to pass verification, (whp) there exists $j\in [p]$ such that $\tilde{T}=T_j$ which means $T\geq \tilde{T}+1$. So (whp) the forgery does not pass verification and the output of the experiment is 0. 
\qed
\end{proof}

\section{Time-Dependent Signatures with Dynamic Verification Keys}

In this section, we first define $\TdVK$ signatures which are $\TD$ signatures but where the verification key varies with time. Then, we provide a construction for this primitive assuming $\OWF$s. 

\subsection{Definitions}
\label{sec:sig def}

\begin{definition}[Time-Dependent Signatures with Dynamic Verification Keys]
A \emph{$\TdVK$ signature scheme} over quantum message space $\hildd{Q}$ starting at time $t_0$ with interval $\delta$ consists of the following algorithms: 
\begin{itemize}
    \item $\textsf{KeyGen}(1^\lambda)$: Outputs a time-independent secret key $\sk$.
    \item $\textsf{vkAnnounce}(\sk,\ttt):$ {If time $ \ttt =(i+1)\delta +t_0$ for some $i\in \mathbb{N}$}, outputs a verification key $\vk_i$ using the secret key $\sk$. 
    \item $\textsf{Sign}(\sk,\ttt, \phi):$ Outputs a signature $\sigma$ for $\phi \in \hildd{Q}$ based on the time $\ttt$ using $\sk$. 
    \item $\textsf{Verify}(\vk_i, \sigma')$: Verifies the signature ${\sigma}'$ using $\vk_i$ and correspondingly outputs a message $\phi'$ or $\perp$. 
\end{itemize}
\end{definition}

Henceforth, we let $t_i\coloneqq i\delta+t_0$ and we generally set $\delta=1$ and $t_0=0$ without loss of generality. 

\begin{definition}[Correctness]
A $\TdVK$ signature scheme is \emph{correct} if: for any $i\in\mathbb{N}$, time $\ttt\in [t_i,t_{i+1})$, $\sk\leftarrow \textsf{KeyGen}(1^\lambda)$ and $\vk_i\leftarrow \textsf{vkAnnounce}(\sk,t_{i+1})$ we have:
 \begin{align*}\| \textsf{Verify}(\vk_i,\textsf{Sign}(\sk,\ttt,\cdot)) - \textsf{id}(\cdot) \|\leq \negl[\lambda] .
 \end{align*}
\end{definition}

Security requires that any forgery submitted is close to a signature produced by the signing oracle in the same time frame as the forgery. Note that we assume that the verification keys $\vk_i$ can be distributed authentically. This can be ensured through the use of any digital signature scheme on classical messages under a time-independent verification key. 

\smallskip \noindent\fbox{%
    \parbox{\textwidth}{%
\textbf{Experiment} $\textsf{TDDVK-Sign}^{\text{QUF}}_{\Pi, \adv}({\lambda},i)$:
\begin{enumerate}
   \item Experiment $\mathcal{E}$ samples $\sk\leftarrow \textsf{KeyGen}(1^\lambda)$. 
   \item For all $j< i$, $\mathcal{E}$ generates $\vk_j\leftarrow \textsf{vkAnnounce}(\sk,t_{j+1})$.     
   \item Adversary outputs $\tilde{\sigma} \leftarrow \adv^{\ket{\textsf{Sign}(\sk,\cdot, \cdot)}}((\vk_j)_{j< i})$. 
    Let $(T_1,\phi_1),...,(T_p,\phi_p)$ denote the subset of messages submitted to the signing oracle which satisfy $t_i\leq T_k<t_{i+1}$ for $k\in [p]$. 
    \item $\mathcal{E}$ generates $\vk_i \leftarrow \textsf{vkAnnounce}(\sk,t_{i+1})$.
    \item $\mathcal{E}$ computes $\tilde{\phi}\leftarrow  \textsf{Verify}(\vk_i, \tilde{\sigma})$.
    \item If the outcome is $\perp$, then the experiment is terminated and the outcome is 0.
    \item Otherwise, let $F\coloneqq \max_{i\in [p]}\delta_F(\phi_i,\tilde{\phi})$ be the maximum fidelity.
    \item $\mathcal{E}$ flips a biased coin that ``produces'' head with probability $F$ and tails with probability $1-F$.
   \item The output of the experiment is $1$ if the coin toss outcome is tails. Otherwise, the output is 0.
\end{enumerate}}}
\smallskip

\begin{definition}
    A $\TdVK$ quantum signature scheme $\Pi$ is \emph{quantum unforgeable ({QUF})}, if for any QPT $\adv$ and $i\in \mathbb{N}$,
    \begin{align*} 
        \Pr [\textsf{TDDVK-Sign}^{\textit{QUF}}_{\Pi, \adv}({\lambda},i)=1]\leq \negl[\lambda].
    \end{align*}
\end{definition}

\subsection{One-Time \textsf{TD-DVK} Signature Scheme}
\label{sec:one-time sig}

We first present a very simple information-theoretic $\TdVK$ one-time signature scheme. The basic idea is to let the verification key be the same as the signing key but with a \emph{delay}. In the next section, we extend it to allow for many signatures using $\OWF$s.

{In this scheme, a single message can be signed until a certain time, which we denote $t_{\textnormal{end}}$, chosen by the signer. After this time, the verification key is announced and no new signatures can be securely generated. }While such a construction seems trivial, it has practical and useful applications as shown in Sec.~\ref{sec:pure-state}. In particular, it allows for the authenticated distribution of pure-state quantum keys in $\QPKE$ schemes answering one of the open questions posed in an earlier work \cite{KMN23}. 

\begin{construct}
\label{con:one-time sig}
{\small Let $\Pi_{\textnormal{QAE}}\coloneqq (\textsf{1QGen}, \textsf{1QEnc},\textsf{1QDec})$ be the algorithms for a symmetric one-time \textsf{QAE} scheme on $n$-qubit messages. The construction for a one-time $\TdVK$ signature scheme on $n$-qubit messages with end time $t_{\textnormal{end}}$ is as follows:
\begin{itemize}
    \item $\textsf{KeyGen}(1^\lambda):$ Sample $\nk \leftarrow \textsf{1QGen}(1^\lambda)$. 
     \item $\textsf{vkAnnounce}(\nk, \ttt):$ If $\ttt=t_{\textnormal{end}}$, then announce $\vk\coloneqq (t_{\textnormal{end}},\nk)$. 
\item $\textsf{Sign}(\nk,\phi):$ Output state $\sigma \leftarrow \textsf{1QEnc}_{\nk}(\phi).$
\item $\textsf{Verify}(\vk, \sigma'):$ Interpret $\vk$ as $(t_{\textnormal{end}},\nk)$. If $\sigma'$ is received after time $t_{\textnormal{end}}$, then output $\perp$. Otherwise, output $\textsf{1QDec}_{\nk}(\sigma').$
\end{itemize}}
\end{construct}

\subsection{Many-Time \textsf{TD-DVK} Signature Scheme}

In this section, we build a (many-time) $\TdVK$ signature scheme from $\OWF$s. In a time frame $[t_i,t_{i+1})$, messages are authenticated with a symmetric \textsf{QAE} scheme using a key $\nk_i$ which is generated from the secret key and a pseudorandom function. Then, $\nk_i$ is revealed at time $t_{i+1}$ allowing receivers to verify transmissions received in the specified time frame. Following this, new messages are signed with $\nk_{i+1}$, and the process is repeated.

Recall, the existence of $\OWF$s implies the existence of \textsf{pq-PRF}s and a QUF \textsf{QAE} scheme (Lemma \ref{lem:que}).

\begin{construct}
\label{con:Sig}
{\small Let $\Pi_{\textnormal{QAE}}\coloneqq (\textsf{QGen},\textsf{QEnc},\textsf{QDec})$ be the algorithms of a QUF \textsf{QAE} scheme on $n$-qubit messages where the key generation algorithm makes $p$ coin tosses. Let $F:\{0,1\}^\lambda \times \{0,1\}^n\rightarrow \{0,1\}^p$ be a \textsf{pq-PRF}. The construction for $\TdVK$ signatures on $n$-qubit messages is as follows:
\begin{itemize}
    \item $\textsf{KeyGen}(1^\lambda):$ Sample $\sk \leftarrow \{0,1\}^\lambda$. 
     \item $\textsf{vkAnnounce}(\sk, \ttt):$ If $\ttt=t_{i+1}$ for some $i\in \mathbb{N}$, then compute $y_i\coloneqq F(\sk,i)$ and announce $\vk_i\coloneqq (t_i,t_{i+1},y_i)$. 
\item $\textsf{Sign}(\sk,\ttt, \phi):$ Let $j\in \mathbb{N}$ be such that $\ttt\in [t_{j},t_{j+1})$. 
\begin{enumerate}
    \item Compute $y_j\coloneqq F(\sk,j)$.
    \item Generate $\nk_j\leftarrow \textsf{QGen}(1^\lambda;y_j)$.
    \item Output $\sigma \leftarrow \textsf{QEnc}_{\nk_j}(\phi).$
\end{enumerate}

\item $\textsf{Verify}(\vk_i, \sigma'):$ 
\begin{enumerate}
    \item Interpret $\vk_i$ as $(t_i,t_{i+1},y_i)$. 
    \item Generate $\nk_i\leftarrow \textsf{QGen}(1^\lambda;y_i)$. 
    \item Output $\textsf{QDec}_{\nk_i}(\sigma').$
\end{enumerate} 
\end{itemize}}
\end{construct}

\begin{theorem}
    Construction \ref{con:Sig} is a quantum unforgeable $\TdVK$ signature scheme assuming the existence of a QUF \textsf{QAE} and \textsf{pq-PRF}s. 
\end{theorem}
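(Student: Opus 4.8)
The plan is to prove QUF security of Construction~\ref{con:Sig} via a short sequence of hybrids that strips away the pseudorandomness and then reduces to the unforgeability of the underlying \textsf{QAE}. Fix the target index $i$ in the experiment $\textsf{TDDVK-Sign}^{\text{QUF}}_{\Pi, \adv}({\lambda},i)$. First I would argue that, without loss of generality, we may assume the adversary makes no signing queries at times outside the single interval $[t_i,t_{i+1})$: queries in earlier or later intervals $[t_j,t_{j+1})$ with $j\neq i$ are answered using the independent key $\nk_j\leftarrow\textsf{QGen}(1^\lambda;F(\sk,j))$, and for $j<i$ the adversary anyway learns $\vk_j=(t_j,t_{j+1},F(\sk,j))$ outright, so $\adv$ can simulate all of those signatures itself. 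Only the key $\nk_i$ (equivalently the seed $y_i=F(\sk,i)$) must remain hidden up to the moment the forgery is output — and in the experiment it is hidden, since $\vk_i$ is only computed in step~4, after $\adv$ has committed to $\tilde\sigma$.

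The first real hybrid, $\Hy_1$, replaces the pseudorandom values $y_j\coloneqq F(\sk,j)$ used across all relevant indices by truly uniform strings $y_j\leftarrow\{0,1\}^p$, independently for each $j$. Indistinguishability from the real experiment ($\Hy_0$) follows from the security of the \textsf{pq-PRF} $F$: any QPT distinguisher between the two hybrids yields a QPT adversary against $F$ by simulating the whole experiment with oracle access to $F(\sk,\cdot)$ versus a random function. (Since $\adv$ is QPT it touches only polynomially many indices; the reduction queries those.) The second hybrid, $\Hy_2$, then replaces each $\nk_j\leftarrow\textsf{QGen}(1^\lambda;y_j)$ by $\nk_j\leftarrow\textsf{QGen}(1^\lambda)$, i.e. freshly sampled \textsf{QAE} keys — this is a syntactic identity once the $y_j$ are uniform, since $\textsf{QGen}(1^\lambda)$ is by definition $\textsf{QGen}(1^\lambda;U_p)$.

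In $\Hy_2$ the only place the key $\nk_i$ is used is (a) to answer the $p$ signing queries $\phi_1,\dots,\phi_p$ made in the interval $[t_i,t_{i+1})$, producing $\textsf{QEnc}_{\nk_i}(\phi_k)$, and (b) to run $\textsf{QDec}_{\nk_i}(\tilde\sigma)$ during verification; $\nk_i$ is sampled fresh and independent of everything else. This is exactly the setup of the QUF experiment for the \textsf{QAE} scheme $\Pi_{\textnormal{QAE}}$ (Lemma~\ref{lem:que}), whose winning condition — the decryption of the submitted ciphertext is neither $\perp$ nor close in fidelity to one of the queried plaintexts — matches the coin-toss/fidelity condition of our experiment verbatim. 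So a QPT adversary winning $\Hy_2$ with non-negligible probability is turned into one breaking QUF of $\Pi_{\textnormal{QAE}}$: it forwards the $[t_i,t_{i+1})$-queries to its own \textsf{QAE} signing oracle, simulates all other intervals itself using the fresh keys it sampled, and outputs $\tilde\sigma$ as its forgery. Chaining the three bounds gives $\Pr[\Hy_0=1]\leq\negl[\lambda]$.

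The step I expect to require the most care is the reduction to \textsf{QAE} unforgeability — specifically checking that the bookkeeping between the two experiments lines up: that the set of ``allowed'' plaintexts $\{\phi_k\}$ the \textsf{QAE} challenger remembers is precisely the set our experiment compares against, and that the fidelity-based coin toss in our definition is the same notion as in the \textsf{AGM18} QUF game (which is why the definitions in Sec.~\ref{sec:sig def} were phrased to mirror \cite{AGM18,BCG02}). A minor subtlety worth a sentence is that \textsf{QEnc} and \textsf{QDec} are quantum channels and the adversary's queries may be entangled with its internal register, so the reduction must forward registers rather than classical descriptions; but since the \textsf{QAE} oracle is itself quantum-accessible this is immediate. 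Everything else — the PRF hop and the identity hop — is routine.
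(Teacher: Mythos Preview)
Your proposal is correct and follows essentially the same approach as the paper's own proof: a PRF hybrid to make the per-interval seeds uniform, followed by a direct reduction to the QUF security of the underlying \textsf{QAE}. Your write-up is considerably more detailed than the paper's terse paragraph (you spell out the syntactic identity hop $\Hy_1\to\Hy_2$ and the handling of queries at indices $j\neq i$, and you flag the quantum-register forwarding subtlety), but there is no substantive difference in strategy.
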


\begin{proof}
    Fix $i\in \mathbb{N}$ and consider the unforgeability experiment $\textsf{TDVK-Sign}^{\text{QUF-Forge}}_{\Pi, \adv}({\lambda},i)$. $\adv$ is given oracle access to $\textsf{Sign}(\sk,\cdot,\cdot)$ and tries to output a signature that is verified with key $\vk_i$, specifically by using $\nk_i=F(\sk,i)$. 
    
    By the security of \textsf{pq-PRF}s, we can switch to a hybrid where the keys $(\nk_j)_{j\leq i }$ are chosen uniformly at random. In particular, $\adv$ cannot distinguish $\nk_i$ from random and must forge a signature that is verified with key $\nk_i$. $\adv$ has oracle access to $\textsf{QEnc}_{\nk_i}(\cdot)$ and $\textsf{QDec}_{\nk_i}(\cdot)$ but this is the only information $\adv$ can receive related to $\nk_i$. If the output of the experiment is 1 with non-negligible probability, then $\adv$ can easily be converted into an adversary that brakes the QUF-security of $\Pi_{\textnormal{QAE}}$. 
    \qed 
\end{proof}

It seems that Construction \ref{con:Sig} is insecure after time $t_{2^n}$ since the keys start repeating. However, this issue can easily be resolved using the pseudorandom function. {For instance, for any integer $p$, by letting $\nk_{i+2^{pn}}\coloneqq F^p(\sk,i)$, where $F^p$ denotes $p$ applications of $F$, the keys are unpredictable at any time. In other words, we can keep producing seemingly random keys indefinitely. }

\section{Signatures in the Bounded Quantum Storage Model}
\label{sec:bqsm sig}
In this section, we present a signature scheme for quantum messages that is information-theoretically secure against adversaries with bounded quantum memory. We note that signing and verification algorithms in this setting do not depend on time. The public-keys are mixed-states and are certified with classical certification keys.

\subsection{Definitions}
\label{sec:bqsm sec}

We define the notion of \textsf{BQS} signatures on quantum messages. Such constructions are built against adversaries $\adv_{\s}$ with qmemory bound $\s$. The public keys are distributed for a limited time only. In fact, this is unavoidable for information-theoretic security as noted in \cite{BS23}. We note that the signature scheme on classical messages in the bounded \emph{classical} storage model suffers from this limitation as well \cite{DQW22}. 

\begin{definition}[\textsf{BQS} Signature Scheme]
A \emph{\textsf{BQS} signature scheme} $\Pi$ over quantum message space $\hildd{Q}$ against adversaries with memory bound $\s$, with key distribution end time $t_{\textnormal{end}}$, consists of the following algorithms: 
\begin{itemize}
    \item $\textsf{KeyGen}(1^\lambda,1^\s)$: Outputs a secret key $\sk$.
    \item $\textsf{ckAnnounce}(\sk,\ttt)$: Outputs a classical certification key $\ck$ at time $t_{\textnormal{end}}$.
    \item $\textsf{pkSend}(1^\s,\sk):$ Outputs a quantum public-key $\rho$.
    \item $\textsf{vkReceive}(\rho,\ck):$ Processes $\rho$ using $\ck$ and outputs a classical verification key $\vk$.
    \item $\textsf{Sign}(\sk,\phi):$ Outputs a signature $\sigma$ for $\phi \in \hildd{Q}$ using $\sk$. 
    \item $\textsf{Verify}(\vk, \sigma')$: Outputs a message $\phi'$ or $\perp$ using $\vk$. 
\end{itemize}
\end{definition}

\begin{definition}[Correctness]
A \textsf{BQS} signature scheme $\Pi$ is \emph{correct} if for any $\sk\leftarrow \textsf{KeyGen}(1^\lambda,1^\s)$, $\rho \leftarrow \textsf{pkSend}(1^\s,\sk)$, $\ck\leftarrow \textsf{ckAnnounce}(\sk,t_{\textnormal{end}})$, and $\vk\leftarrow \textsf{vkReceive}(\rho,\ck)$:
 \begin{align*}\| \textsf{Verify}(\vk,\textsf{Sign}(\sk,\cdot)) - \textsf{id}(\cdot) \|\leq \negl[\lambda] .
 \end{align*}
\end{definition}

We introduce a security notion for \textsf{BQS} signature schemes which allows the adversary unrestricted access to the signing oracle and the verification oracle under the verification key that the experiment extracts. We also give the adversary the ability to tamper with the quantum public-keys, meaning the adversary provides the quantum key which the experiment uses to extract its verification key. We call our notion \emph{quantum unforgeability under chosen key attack (BQS-QUCK)}.

\smallskip \noindent\fbox{%
    \parbox{\textwidth}{%
\textbf{Experiment} $\textsf{BQS-Sign}^{\text{QUCK}}_{\Pi, \adv}({\lambda},\s)$:
\begin{enumerate}
    \item Experiment $\mathcal{E}$ samples $\sk\leftarrow \textsf{KeyGen}(1^\lambda,1^\s)$.
    \item $\tilde{\rho}\leftarrow \adv^{\textsf{pkSend}(1^\s,\sk)}$.
    \item $\mathcal{E}$ announces certification key $\ck\leftarrow \textsf{ckAnnounce}(\sk,t_{\textnormal{end}})$.
    \item $\mathcal{E}$ extracts key $\nk\leftarrow \textsf{vkReceive}(\tilde{\rho},\ck)$.
    \item $\tilde{\sigma} \leftarrow \adv^{\ket{\textsf{Sign}(\sk,\cdot)},\ket{\textsf{Verify}(\nk,\cdot)}}$. 
    Let $\phi_1,...,\phi_p$ denote the density matrices of the quantum messages $\adv$ submitted to the signature oracle. 
    \item $\mathcal{E}$ computes $\tilde{\phi}\leftarrow \textsf{Verify}(\nk,\tilde{\sigma})$.
    \item If $\tilde{\phi}=\perp$, then the experiment is terminated and the outcome is 0.
    \item Otherwise, let $F\coloneqq \max_{i\in [p]}\delta_F(\phi_i,\tilde{\phi})$ be the maximum fidelity.
    \item $\mathcal{E}$ flips a biased coin that ``produces'' head with probability $F$ and tails with probability $1-F$.
   \item The output of the experiment is $1$ if the coin toss outcome is tails. Otherwise, the output is 0.
\end{enumerate}}}
\smallskip

\begin{definition}
A \textsf{BQS} signature scheme $\Pi$ satisfies \emph{quantum unforgeability under chosen key attack (BQS-QUCK)} if for $\s\in \mathbb{N}$ and any (computationally-unbounded) adversary $\advs$,
    \begin{align*}
         \Pr[\textsf{BQS-Sign}^{\textit{QUCK}}_{\Pi, \advs}({\lambda},\s)=1] \leq \negl[\lambda].
    \end{align*}
\end{definition}

\subsection{Construction}

In this section, we show how to sign quantum messages in the $\BQSM$. The quantum public-keys consist of a program broadcast of an appropriate function. A subset of the function image is revealed after the broadcast to allow users to certify their evaluations. The signature of a message consists of two one-time programs followed by an \textsf{QAE} of the message state. The one-time programs reveal the necessary information to verify the state only when evaluated on inputs corresponding to correct evaluations of the broadcasted program. The scheme requires the honest user to have $O(l^2)$ qmemory where $l$ is the size of the quantum messages to be signed. 

\begin{construct}
\label{con:cca}
{\small Let $\Pi_{\textnormal{1QAE}}=(\textsf{1QGen},\textsf{1QEnc}, \textsf{1QDec})$ be the algorithms of a one-time \textsf{QAE} scheme over $l$-qubit messages where the key generation makes $\ell$ coin tosses. Let $\Pi_{\textsf{Br}}=(\textsf{KeyGen},\textsf{Br},\textsf{Eval})$ be the algorithms for a program broadcast. Let $t_{\textnormal{end}}$ denote the time when key distribution phase ends. The construction for \textsf{BQS} signature scheme over $l$-qubit messages is as follows:

\begin{itemize}
   \item $\textsf{KeyGen}(1^\lambda,1^\texttt{s})$: 
   \begin{enumerate}
       \item Let $m=100\ell^2$ and $n= \max(\lceil \frac{\texttt{s}}{2m}\rceil+1,m)$. 
       \item Choose uniformly at random a $  m  \times n$ binary matrix $M$ and define $P(x)\coloneqq M\cdot x$. 
       \item Sample an evaluation key $\ek\leftarrow \textsf{KeyGen}(1^\lambda,t_{\textnormal{end}})$.
       \item Output $\sk=(P,\ek)$
   \end{enumerate} 

    \item $\textsf{pkSend}(1^\s, \sk)$:
    Broadcast the program $P$, i.e. output $\rho \leftarrow \textsf{Br}(1^\s,\ek,P)$. 

    \item $\textsf{ckAnnounce}( \sk,\ttt)$: 
    \begin{enumerate}
        \item If $\ttt\neq t_{\textnormal{end}}$, then output $\perp$.
        \item Otherwise, choose a random subset $S$ of $[m]$ of size $\frac{m}{2} $ and hash function $H\leftarrow \mathbf{H}_{m/2,m/8}$.
        \item Let $P_S$ be the matrix consisting of rows in $P$ whose index is in $S$. 
        \item Output $\ck\coloneqq (t_{\textnormal{end}}, S, P_S,H)$.
    \end{enumerate}  

    \item $\textsf{vkReceive}( \ck,\rho,\ttt)$:
    Process the public and certification keys as follows:
    \begin{enumerate}
        \item Interpret $\ck$ as $(t_{\textnormal{end}}, S, P_S,H)$. 
        \item If $\rho$ is received after $t_{\textnormal{end}}$, then output $\perp$.
        \item Sample $v\leftarrow \{0,1\}^n$. 
        \item Compute $\textsf{Eval}(\rho, v)$ and let $P'(v)$ denote the output.
        \item If the entries in $P'(v)$ whose index is in $S$ do not agree with $P_S(v)$, then output $\perp$.
        \item Otherwise, let $\overline{S}$ be the complement of $S$ and let $P_v\coloneqq H(P_{\overline{S}}'(v))$. 
        \item Output $\nk_v\coloneqq (v,P_v)$.
    \end{enumerate}
    
    \item $\textsf{Sign}( \sk, \phi)$: 
    \begin{enumerate}
        \item Sample $\nk\leftarrow \textsf{1QGen}(1^\lambda)$ and $a,b,a',b',a'',b''\leftarrow \{0,1\}^{\ell}$.
        \item Define $f(x) \coloneqq ax+b \mod{2^{\ell}}$ and similarly define $f'$ and $f''$. 
        \item Let $F(x)\coloneqq (f(x),f'(x),f''(x))$ and let $P_x\coloneqq H(P_{\overline{S}}(x))$.
        \item Construct a program $S_{P,F,\nk}$ which takes as input $(x,y)\in \{0,1\}^{n,2\ell} $ and does as follows:
        \begin{enumerate}
            \item Compute $P_x$ and interpret the resulting string as an encoding for $M_x\in \{0,1\}^{2\ell \times 2\ell}$, $M_x'\in \{0,1\}^{3\ell \times 3\ell}$, $z_x\in \{0,1\}^{2\ell}$ and $z_x'\in \{0,1\}^{3\ell}$.
            \item If $y=M_x\cdot (t\|f(t))+z_x$ for some $t\in \{0,1\}^{\ell}$, then output $(M_x'\cdot (f(t)\|f'(t)\|f''(t))+z'_x,f'(t)\nk+f''(t), \nk)$.
            \item Output $\perp$ otherwise.
        \end{enumerate}
        \item Output the signature state $\sigma \leftarrow  \langle \mathcal{O}_{\texttt{s}}(F),\mathcal{O}_{\texttt{s}}(S_{P,F,\nk}),\textsf{1QEnc}_{\nk}(\phi )\rangle$ sent in sequence. 
    \end{enumerate}

\item $\textsf{Verify}(\nk_v, \sigma)$: 
\begin{enumerate}
    \item Interpret $\nk_v$ as $(v,Y_v)$.
    \item Check the signature is of the correct form and output $\perp $ if it is not. 
    \item Evaluate the first one-time program on a random input $t\leftarrow \{0,1\}^\ell$ and let $(f(t),f'(t),f''(t))$ denote the result. 
    \item Use $Y_v$ to deduce $(M_v,M_v',z_v,z_v')$ and evaluate the second program on $(v, M_v \cdot (t\|f(t))+z_v)$.
    \item If the output is $(M_v'\cdot (f(t)\|f'(t)\|f''(t))+z'_v ,f'(t)\nk'+f''(t),\nk')$ for some $\nk'\in \{0,1\}^\ell$, then output $\textsf{1QDec}_{\nk'}(|\overline{c}\rangle)$ where $|\overline{c}\rangle$ is the final state in $\sigma$ and otherwise output $\perp$. 
\end{enumerate}
  
\end{itemize}}\end{construct}

\remark{In the $\textsf{Verify}$ algorithm, the signature is checked to be of the correct form. More explicitly, this means the signature is checked to be of the form $\langle \mathcal{O}_{\texttt{s}}(F),\mathcal{O}_{\texttt{s}}(S),\overline{c}\rangle$ for $\mathbf{NC}^1$ functions ${F}:\{0,1\}^\ell \rightarrow \{0,1\}^{\ell,\ell,\ell}$ and ${S}: \{0,1\}^{n,2\ell} \rightarrow  \{0,1\}^{3\ell,\ell,\ell}$. If a query submitted to the verification oracle is not of this form, then it is rejected. One-time programs satisfy this form of receiver security as shown in \cite{BS23}.}

{
{We briefly give an intuition behind the security of the scheme before presenting the formal proof. Intuitively, the signing algorithm consists of two one-time programs and an encryption of the message. The first one-time program is independent of the secret key and consists of random linear functions, and the second only returns non-$\bot$ outputs on inputs that involve the secret key. Therefore, it seems that the  adversary, not knowing the secret key, cannot learn anything from these one-time programs. The problem is that the adversary  may take a valid signature and edit in some way so that it signs another message. Intuitively, this attack is thwarted using this dual one-time program structure.  Notice that $F$ consists of  linear functions so the evaluation on any input is indistinguishable from random when given a single evaluation of another input. In the scheme, $F$ is sampled fresh at random for every signature and only a single one-time program of $F$ is provided. In other words, an adversary can learn a single evaluation and all other evaluations of $F$ are indistinguishable from random. In the second one-time program, for valid inputs, we essentially perform an authentication of a randomly sampled key $\nk$ along with an output from the first one-time program using a part of the secret key. The key $\nk$ is used to perform an authenticated encryption of the quantum message. It is difficult for the adversary to meaningfully edit the first one-time program (without failing verification) since its output is authenticated in the second one-time program. At the same time, it is difficult to edit the output of the second one-time program since the adversary only has knowledge of a single evaluation of $F$. In essence, the two one-time programs are interdependent in a way that makes it difficult to meaningfully modify either of them. }
}

\begin{theorem}
\label{construction 1}
Construction \ref{con:cca} is a quantum unforgeable \textsf{BQS} signature scheme.
\end{theorem}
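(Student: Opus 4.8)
The plan is to establish BQS-QUCK security through a sequence of hybrids, isolating the two distinct threats the adversary can mount: (1) producing a fresh signature on a message never queried, or (2) tampering with a valid signature of $\phi$ to produce a signature whose decrypted message is far (in fidelity) from all queried states. The core insight is that to pass verification under $\nk_v = (v, P_v)$, the forged signature's second one-time program, when evaluated on $(v, M_v\cdot(t\|f(t))+z_v)$, must output the triple $(M_v'\cdot(f(t)\|f'(t)\|f''(t))+z_v', f'(t)\nk'+f''(t), \nk')$. By the simulation-based security of one-time programs (Theorem~\ref{thm:otp}), the adversary's view of each signature reduces to a single oracle evaluation of the programs $F$ and $S_{P,F,\nk}$; and by the security of the program broadcast (Theorem~\ref{thm:program broadcast}) together with the memory bound $\s$, the adversary learns at most $\frac{\s}{2m}$ evaluations of $P$, hence fewer than $n$ rows' worth of linear constraints on $M$. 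The honest verifier's point $v$ is (whp) outside the span of the adversary's queried points, so by Lemma~\ref{Raz 2} the adversary cannot predict $P_v = H(P_{\overline S}(v))$, and in particular cannot predict $M_v, M_v', z_v, z_v'$.

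First I would replace every invocation of $\mathcal{O}_{\texttt{s}}$ in the signing oracle and in $\textsf{pkSend}$ by the simulators guaranteed by Theorems~\ref{thm:otp} and~\ref{thm:program broadcast}; this is where the qmemory bound $\s$ enters, collapsing the adversary's total knowledge of $P$ to at most $k=\frac{\s}{2m}$ input-output pairs of $P$ — which, since $n \geq \lceil \s/2m\rceil + 1$, is fewer than $n$ and hence leaves $P_v$ unpredictable for a fresh $v$. Next, in the certification step, the $\frac{m}{2}$ revealed rows $P_S$ force any accepted tampered public key $\tilde\rho$ to be consistent with the true $P$ on those rows (else $\textsf{vkReceive}$ outputs $\perp$ whp), and then privacy amplification via $H$ (Theorem~\ref{privacy amplification}) makes $P_v = H(P_{\overline S}(v))$ close to uniform given the adversary's residual information, since the min-entropy of $P_{\overline S}(v)$ conditioned on the adversary's state exceeds $m/8$ (using the chain rule, Lemma~\ref{chain}, to account for the leaked constraints). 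Then I would argue: for threat (1), a fresh forgery requires the inner program $S$ evaluation to reveal a key $\nk'$ the adversary never received $\textsf{1QEnc}_{\nk'}$-ciphertexts under in a way tied to $f, f', f''$ — but $f', f''$ are hidden inside $\mathcal{O}_{\texttt{s}}(F)$, so $f'(t)\nk' + f''(t)$ acts as a one-time MAC/pad and the adversary cannot produce the correct tag on a value it didn't get from an honest signature; for threat (2), editing a valid signature of $\phi_i$ to change the decrypted message would require forging under the one-time \textsf{QAE} scheme $\Pi_{\textnormal{1QAE}}$, contradicting its information-theoretic QUF security (Lemma~\ref{lem:one-time enc}).

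The structure is: Hybrid $\Hy_0$ is the real experiment; $\Hy_1$ swaps one-time programs and the broadcast for simulators; $\Hy_2$ replaces $P_v$ (and the $M_v, M_v', z_v, z_v'$ derived from it) with uniformly random values, justified by Lemma~\ref{Raz 2} part 1 plus privacy amplification; in $\Hy_2$ the inner program $S_{P,F,\nk}$ can only be made to accept on input $(v,\cdot)$ with negligible probability unless the adversary forwards (a re-linearized version of) an honest signature, at which point the $f'(t)\nk + f''(t)$ term and the structure of $M_x'\cdot(f(t)\|f'(t)\|f''(t)) + z_x'$ — with $f', f''$ information-theoretically hidden by $\mathcal{O}_{\texttt{s}}(F)$ — pin down $\nk' = \nk$, the genuine key, reducing the whole game to QUF of $\Pi_{\textnormal{1QAE}}$. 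I would also need to check the verification-oracle queries don't help: since $\textsf{Verify}(\nk_v,\cdot)$ only ever uses $\nk_v$ and the honest signatures' structure, and one-time programs enforce receiver security (the \textbf{remark} after the construction), malformed queries are rejected and well-formed ones leak nothing new about $\nk$ beyond what a \textsf{QAE} decryption oracle gives.

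The main obstacle I anticipate is the interlocking algebraic structure in step 4(b) of $\textsf{Sign}$ and step 5 of $\textsf{Verify}$: one must show rigorously that the only way a (possibly mauled, re-randomized, partially re-programmed) second one-time program can produce an output of the exact accepted form $(M_v'\cdot(f(t)\|f'(t)\|f''(t)) + z_v',\, f'(t)\nk' + f''(t),\, \nk')$ on the honest verifier's challenge $(t, v)$ — for a value $t$ the verifier picks uniformly at random and an input $M_v\cdot(t\|f(t)) + z_v$ that itself depends on the unpredictable $M_v, z_v$ — is to have effectively copied an honest signature's inner program verbatim on a point it genuinely knows, forcing $\nk' = \nk$. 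This requires carefully combining Lemma~\ref{Raz 2} (unpredictability of $M_v, M_v', z_v, z_v'$ on fresh $v$), the one-time-program simulation bound (the adversary sees $S_{P,F,\nk}$ only as one black-box evaluation), and the affine-hiding of $f, f', f''$ to rule out the adversary synthesizing a fresh consistent triple; I expect this to be the technically heaviest part of the argument, and the place where the specific choices $m = 100\ell^2$ and the three nested affine functions $f, f', f''$ are actually used.
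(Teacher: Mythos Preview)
Your high-level decomposition is close to the paper's: both use program-broadcast security plus Lemma~\ref{Raz 2} to bound the adversary's knowledge of $P(v)$, then privacy amplification to make $P_v$ (and hence $M_v,M_v',z_v,z_v'$) look uniform, and finally reduce to QUF of $\Pi_{\textnormal{1QAE}}$. The certification-step analysis (tampered $\tilde\rho$ either yields $\perp$ or agrees with $P$ on the $S$-rows) is also essentially what the paper does.

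The genuine gap is your treatment of the verification oracle. You write that well-formed verification queries ``leak nothing new about $\nk$ beyond what a \textsf{QAE} decryption oracle gives,'' but this is precisely the hard part and is not true on its face. Each call to $\textsf{Verify}(\nk_v,\cdot)$ \emph{evaluates the adversary's second one-time program on the input $(v,\,M_v\cdot(t_0\|\hat f_0(t_0))+z_v)$} and then checks the output against $M_v'\cdot(\ldots)+z_v'$. So the oracle itself is feeding the adversary evaluations of the affine maps $Q_v(x)=M_v\cdot x+z_v$ and $Q_v'(x)=M_v'\cdot x+z_v'$; a polynomial number of such evaluations at adversarially-influenced points could let the adversary solve for $M_v,M_v',z_v,z_v'$ outright, destroying the hybrid in which you replaced $P_v$ by random. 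Your hybrid $\Hy_2$ therefore does not obviously go through once the verification oracle is available.

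The paper handles this with an inductive argument over the verification queries, which is where most of the work lies. For each query it decomposes the verifier into three pieces and shows: (i) the verifier's random choice $t_0$ means the adversary cannot learn $\hat F_0(t_0)$ (it is hidden inside the adversary's own $\mathcal{O}_{\texttt s}(\hat F_0)$, whose memory bound has already applied); (ii) consequently the single evaluation of $Q_v$ that the oracle uses is at a point the adversary cannot identify, so by Lemma~\ref{Raz 2} it gives the adversary no usable linear constraint on $M_v,z_v$; (iii) the only way the inner program $S_{P,F_i,\nk_i}$ returns non-$\perp$ under simulation is if $f_i(t_0)=\hat f_0(t_0)$, which (whp) singles out a unique signature index and forces $\hat\nk_0=\nk_{i}$ via the $f'(t_0)\nk+f''(t_0)$ tag; and (iv) by \emph{unclonability} of the one-time program, if the verifier extracted this evaluation then the adversary cannot also hold it, so it still has no ciphertext-free information about $\nk_i$. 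Only after establishing inductively that $P_v$ remains indistinguishable from random \emph{through all verification queries} does the paper analyze the final forgery the way you sketched. Your proposal identifies the final-forgery algebra as the obstacle, but the same algebra has to be run for every verification query, and the mechanism that prevents leakage --- randomness of the verifier's $t_0$ hiding $\hat F_0(t_0)$, combined with unclonability of $\mathcal O_{\texttt s}$ --- is the idea you are missing.
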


\begin{proof}
In the $\textsf{BQS-Sign}^{\text{QUCK}}_{\Pi, \adv_\s}({\lambda})$ experiment, the adversary $\advs$ outputs a key $\tilde{\rho}\leftarrow \advs^{\textsf{pkSend}(1^\s,\sk)}(\ck)$ and, then the experiment extracts a key $\nk_v \leftarrow \textsf{vkReceive}(\tilde{\rho},\ck)$ for some random $v$. $\advs $ must send $\tilde{\rho}$ before $t_{\textnormal{end}}$ or else $\nk_v=\perp$. In the case $\tilde{\rho}$ is received before $t_{\textnormal{end}}$, $\mathcal{E}$ obtains an evaluation from the public-key copy which we denote by $\tilde{P}(v)$. Note that with high probability either $\nk_v=\perp$ or $\|\tilde{P}(v)-P(v)\|\leq n/8$. This is because if $\tilde{P}(v)$ differs from $P(v)$ on more than $n/8$ positions, then there is less than $2^{-n/8}$ chance that none of these positions are a subset of $S$. If one of the positions is in $S$, then $\nk_v=\perp$ by definition of $\textsf{vkReceive}$. Hence, we can assume that $\tilde{\rho}$ is received before $t_{\textnormal{end}}$ and that $\|\tilde{P}(v)-P(v)\|\leq n/8$.

By the security of program broadcast (Theorem \ref{thm:program broadcast}), $\advs$'s access to the public-key copies can be simulated with $\frac{\texttt{s}}{2m}$ queries to an oracle of ${P}$. By Lemma \ref{Raz 2}, since $M$ is a matrix of dimension $ m \times n$ and $n> \frac{\texttt{s}}{2m}$, $\advs$ cannot guess the output of $P(x)=M\cdot x$ on a random input except with negligible probability. In other words, if $K_v$ represents the random variable for $P(v)$ from the perspective of $\advs$, then there exists negligible $\epsilon$ such that $H_{\infty}^\epsilon(K_v)= O(m)$. This implies that if $\tilde{K}_v$ represents the random variable for $\tilde{P}(v)$ from the perspective of $\advs$, then there exists negligible $\epsilon'$ such that $H_{\infty}^{\epsilon'}(\tilde{K}_v)\geq O(3m/4)$. Privacy Amplification Theorem \ref{privacy amplification}, then implies that (whp) $\adv$ cannot distinguish $\tilde{P}_v\coloneqq H(\tilde{P}_{\overline{S}}(v))$ from random. 

$\adv_{\texttt{s}}$ gets a polynomial number of signature and verification oracle queries where verification is performed using $\nk_v\coloneqq (v,\tilde{P}_v)$. Denote the signatures produced by the oracle as $\langle \ot({F}_i),\ot({S}_{P,F_i,\nk_i}), |\overline{c}\rangle_i\rangle_{i\in Q_S}$ and denote the verification queries submitted to the oracle as $\langle \ot (\hat{F}_j),\ot(\hat{S}_j),\hat{c}_j\rangle_{j\in Q_V}$ for $\mathbf{NC}^1$ functions $\hat{F}_j:\{0,1\}^\ell \rightarrow \{0,1\}^{\ell,\ell,\ell}$ and $\hat{S}_j: \{0,1\}^{n,2\ell} \rightarrow  \{0,1\}^{3\ell,\ell,\ell}$. If a query submitted to the verification oracle is not of this form, then it is rejected. 

$\advs$ cannot distinguish the key $\tilde{P}_v$ from random with the signature queries alone since such an attack can be simulated with a single oracle query to each program in the set $\{\ot({S}_{P,F_i,\nk_i})\}_{i\in Q_S}$ by the security of one-time programs (Theorem \ref{thm:otp}). Lemma \ref{Raz 2} shows that $\tilde{P}_v$ remains indistinguishable from random given polynomial guesses.

We now consider what $\advs$ can learn from the first verification query $\langle \ot (\hat{F}_0),\ot(\hat{S}_0),\hat{c}_0\rangle$.
Suppose $\advs$ received $\mathcal{Q}_0\coloneqq \langle \ot({F}_i),\ot({S}_{P,F_i,\nk_i}), |\overline{c}\rangle_i\rangle_{i\in [Q_0]}$ from the signature oracle prior to submitting the verification query. $\advs$ may utilize these signatures in the construction of the verification query. 

Note that, the memory bound for $(\ot(F_i))_{i\in [Q_0]}$ must have already applied before the verification query is answered meaning there exists $(\hat{t}_i)_{i\in [Q_0]}$ such that $\advs$ can only distinguish $F_i(x)$ from random on $x=\hat{t}_i$ for all $i\in [Q_0]$ by the security of one-time programs. 

Now let $\textsf{A}_{0}$ be the sub-algorithm of $\advs$ which produces the verification query using the signatures $\mathcal{Q}_0$. 
%Note that we did not include the one-time program $\ot(F_n)$ since by the time $\advs$ sends $\ot(\hat{S})$, the oracle has already obtained $F_n(t)$ which implies that the memory bound on $\advs$ for $\ot(F_n)$ must have already applied. Hence, $\advs$ can use the value $F_n(\hat{t}_n)$ in its construction but the actual one-time program has already disappeared.
When the verification oracle receives the query, it evaluates the first one-time program on a random input say ${t}_0$ to get $\hat{F}_0({t}_0)=(\hat{f}_0(t_0),\hat{f}'_0(t_0),\hat{f}''_0(t_0))$. Next, it evaluates the second one-time program on $\hat{S}_0(v,M_v \cdot (t_0\|\hat{f}_0(t_0))+z_v)$ and gets either an output of the form $(M_v'\cdot (\hat{f}_0(t_0)\|\hat{f}'_0(t_0)\|\hat{f}''_0(t_0))+z_v',\hat{f}'(t_0)\hat{\nk}_0+\hat{f}''(t_0), \hat{\nk}_0)$ for some string $\hat{\nk}_0$ or it rejects and outputs $\perp$. 

The goal is to show that (whp) either the output of the verification query is $\perp$ or that $\advs$ cannot distinguish $\hat{F}_0({t}_0)$ from random. We split the verification procedure performed by the oracle into three parts. Firstly, let $\textsf{Or}_0$ be the sub-algorithm of the oracle which evaluates $\ot(\hat{F}_0)$ on $t_0$ without access to $\tilde{P}_v$. Let $\textsf{Or}_0'$ be the sub-algorithm which evaluates $\ot(\hat{S}_0)$ on $(v,M_v \cdot ({t}_0\|\hat{f}_0({t}_0))+z_v)$ with access only to this input (without access to $\tilde{P}_v$) and outputs the result of the evaluation denoted by $y_0$. Finally, let $\textsf{Or}''_0$ be the sub-algorithm of the oracle with full access to $\tilde{P}_v$ and that checks whether $y_0=(M_v'\cdot (\hat{f}_0(t_0)\|\hat{f}_0'(t_0)\|\hat{f}_0''(t_0))+z_v',\hat{f}'_0(t_0)\hat{\nk}_0+\hat{f}_0''(t_0), \hat{\nk}_0)$ for some key $\hat{\nk}_0$ and correspondingly outputs $\textsf{1QDec}_{\hat{\nk}_0}(|\overline{c}\rangle)$ or $\perp$. We denote this output by $m_0$. Then we can write:
\begin{gather*}
\hat{F}_0(t_0)\leftarrow \textsf{Or}_0\circ \textsf{A}_0(\mathcal{Q}_0) \\
    y_0\leftarrow \textsf{Or}_0'\circ \textsf{A}_0(\mathcal{Q}_0).\\
    m_0 \leftarrow \textsf{Or}''_0(y_0)
\end{gather*} 

Critically, the algorithm $\textsf{A}_0$ is oblivious of $M_v,z_v$ and ${\mathsf{Or}}_0'$ has access to only a single evaluation of the function $Q_v(x)\coloneqq M_v\cdot (x)+z_v$, namely $Q_v({t}_0\|\hat{f}_0({t}_0))$. By the security of one-time programs, the second procedure can be simulated by a simulator $\mathcal{S}_0$ with access to only a single query to each program $(S_{P,F_i,\nk_i})_{i\in [Q_0]}$ and to the value $Q_v({t}_0\|\hat{f}_0({t}_0))$. The query to $S_{P,F_i,\nk_i}$ will yield $\perp$ except with negligible probability unless ${f}_i({t}_0) = \hat{f}_0({t}_0)$ since $\mathcal{S}_0$ cannot guess $Q_v(x)$ on $x\neq {t}_0\|\hat{f}_0(t_0)$. If this condition is not satisfied for any $i\in [Q_0]$, then $\mathcal{S}_0$ will (whp) receive $\perp$ from all its oracle queries. This would mean $\mathcal{S}_0$ cannot guess $Q'_v(x)\coloneqq M_v'\cdot x+z'_v$ on any value and thus the output of $\textsf{Or}''_0$ will be $m_0=\perp$. Note that there is at most a single value $i$ that satisfies this condition (whp) since the parameters of the functions $(F_i)_{i\in [Q_0]}$ are chosen independently at random. 

Assume this condition is satisfied only by the function $F_{n}$ where $n\in [Q_0]$. There is negligible probability that ${t}_0=\hat{t}_{n}$ and so $\advs$ cannot distinguish $F_{n}({t}_0)$ from random. To sum up, (whp) the query to $S_{P,F_i,\nk_i}$ will yield $\perp$ except if $i=n$, in which case $\mathcal{S}_0$ can learn $(Q_v'( {f}_{n}(t_0)\|{f}_{n}'(t_0)\|{f}_{n}''(t_0)),{f}_{n}'(t_0){\nk}_{n}+{f}_{n}''(t_0), {\nk}_{n})$. Thus, the simulator only has a single evaluation of $Q_v'$ and (whp) cannot guess $Q'_v(x)$ on $x\neq {f}_{n}(t_0)\|{f}_{n}'(t_0)\|{f}_{n}''(t_0)$. Hence, we must have either $m_0=\perp$ or $\hat{F}_0(t_0)=F_{n}(t_0)$. In the second case, $\advs$ cannot distinguish $\hat{F}_0(t_0)$ from random which is what we wanted to show. 

Next, we must have (whp) $\hat{\nk}_0=\nk_{n}$ since $\mathcal{S}_0$ cannot guess ${f}_{n}'(t_0)x+{f}_{n}''(t_0)$ on $x\neq \nk_{n}$. By the security and unclonability of one-time programs, since the oracle is able to learn information regarding the evaluation of $S_{P,F_n,\nk_n}$ on $(v,Q_v ({t}_0\|\hat{f}_0({t}_0)))$, the adversary cannot learn any information from the one-time program $\ot(S_{P,F_n,\nk_n})$ and thus cannot authenticate a new message using $\textsf{1QEnc}$ with the key $\nk_n$ by the security of one-time \textsf{QAE} $\Pi_{\textnormal{1QAE}}$.

Overall, the values involved in the verification query are $v$, $Q_v(t_0\|\hat{f}_0(t_0))$, $Q_v' (\hat{f}_0(t_0)\|\hat{f}'_0(t_0)\|\hat{f}''_0(t_0))$, $\hat{f}'(t_0)\hat{\nk}_0+\hat{f}''(t_0)$ and $\hat{\nk}_0$. Even if $\advs$ learns all these values, it does not help in determining the $M_v,M_v',z_v,z_v'$ since $\hat{F}_0({t}_0)$ is indistinguishable from random. 

Notice that the only requirement to apply this argument on a verification query is that $\adv$ cannot distinguish the key $\nk_v$ from random when it submits the query. Hence, this argument can be applied to all verification queries and it can be deduced inductively that $\advs$ cannot distinguish the key from random when it submits the forgery. 

Let $\langle \ot (\hat{F}),\ot(\hat{S}),\hat{c}\rangle$ be the forged signature. By the same arguments used above, if $\hat{\nk}$ is the key output of program $\ot(\hat{S})$ when evaluated by the experiment, then there exists $n'$ such that $\hat{\nk}=\nk_{n'}$ or the signature is invalid. In the former case, we showed that $\advs$ cannot authenticate a new message with $\hat{\nk}$ so either the signature is invalid or ${\hat{c}}\approx_{\negl[\lambda]}\overline{c}$ which is equivalent to the fidelity being close to 1. In either case, the outcome of the experiment is 0 except with negligible probability. 
\qed
\end{proof}

\section{Time-Dependent Quantum Money}

In this section, we present two publicly verifiable quantum money schemes where the money generation and verification algorithms depend on time. The first construction is based on $\TdVK$ signatures with time-dependent verification keys, and the second is based on $\TD$ signatures with time-independent verification key.

\subsection{Definitions}
We first define $\TD$ public-key quantum money. Such a scheme consists of the following algorithms for the bank: $\textsf{KeyGen}$ for generating a secret key, and $\textsf{CoinGen}$ for generating a banknote. For the public user, we define the algorithm $\textsf{CoinVerify}$ which describes how to validate a banknote allowing the public to carry out transactions and payments.

\begin{definition}[$\TD$ Public-Key Quantum Money]
    A \emph{$\TD$ public-key quantum money} scheme consists of the following algorithms for the bank:
    \begin{itemize}
        \item $\textsf{KeyGen}(1^\lambda):$ Outputs a secret key $\sk$ and verification key $\vk$.
        \item $\textsf{CoinGen}(\sk,n,\ttt, v):$ Generates a quantum banknote $\$_v$ worth $v$ dollars with lifespan $n$ based on the time $\ttt$. 
    \end{itemize}
    and the following algorithm for the public user:
    \begin{itemize}
        \item $\textsf{CoinVerify}(\vk,\ttt, v,\$):$ Checks if $\$ $ is a valid banknote worth $v$ dollars using $\vk$ based on the time $\ttt$. Outputs a banknote $\$'$ or $\perp$. 
    \end{itemize}
\end{definition}

We can also allow the verification key to vary with time. This is required for our second construction but not the first. In this case, we use an algorithm $\textsf{vkAnnounce}(\sk,\ttt)$ which describes how to vary the verification key with respect to time. 

\begin{definition}[Correctness]
A $\TD$ public-key quantum money scheme $\Pi$ is \emph{correct} if for any $i,n,v\in\mathbb{N}$, time $T\in [t_i,t_{i+1})$, $(\sk,\vk)\leftarrow \textsf{KeyGen}(1^\lambda)$, and banknote $\$ \leftarrow \textsf{CoinGen}(\sk,n,T,v)$, there is negligible probability (in $\lambda$) that $\$$ gets rejected in $n$ transactions consisting of consecutive applications of $\textsf{CoinVerify}$. More mathematically, correctness requires that:
\begin{align*}
    \|\textsf{CoinVerify}(\vk, T+n, v, &\textsf{CoinVerify}(\vk,T+(n-1),v,... \textsf{CoinVerify}(\vk,T+1,v,\$)...))\\
    &- \perp\|\geq 1-\negl[\lambda].
\end{align*}
\end{definition}

We now define an experiment to test the security of a $\TD$ quantum money scheme. Informally, this experiment requires that an adversary with access to a polynomial number of banknotes cannot generate a set of banknotes of greater value. 

\smallskip \noindent\fbox{%
    \parbox{\textwidth}{%
\textbf{Experiment} $\textsf{TD-PKQM}_{\Pi,\adv}({\lambda},T)$:
\begin{enumerate}
\item Sample $(\sk,\vk)\leftarrow \textsf{KeyGen}(1^\lambda)$.
\item $\adv$ is given $\vk$ and access to an oracle for $\textsf{CoinGen}(\sk,\cdot, \cdot,\cdot )$ which rejects queries for time after $T$. 
\item $\adv$ receives $(\$_{v_i})_{i\in [p]}$ from the oracle, where $v_i$ signifies the value of the banknote.
\item $\adv$ outputs $(c_i,\tilde{\$}_{{c}_i})_{i\in [p']}.$
\item The output of the experiment is 1 if:
\begin{align*}
    \sum_{z: \textsf{CoinVerify}(\vk,T,c_z,\tilde{\$}_{c_z})\neq \perp}c_z> \sum_{j\in [p]}v_j. 
\end{align*}
\item Otherwise, the output is $0$.
\end{enumerate}
}}
    \smallskip  

\begin{definition}[Security]
A $\TdVK$ public-key quantum money scheme $\Pi$ is \emph{secure} if for any QPT adversary $\adv$ and $T\in \mathbb{R}$,
\begin{align*}
    \Pr{[\textsf{TD-PKQM}_{\Pi,{\adv}}({\lambda},T)=1]} &\leq \negl[\lambda].
\end{align*}
\end{definition}

\subsection{Construction for \textsf{TD-DVK}-PKQM}

In this section, we present a scheme for $\TdVK$ public-key quantum money assuming \textsf{pq-PRF}s. Banknotes are quantum states secured with multiple layers of one-time \textsf{QAE}. The bank reveals a classical secret encryption key every second (or some fixed interval $\delta$) allowing the public to validate banknotes by verifying one authentication layer. Eventually, after all the keys used in a banknote's generation have been revealed, the banknote expires. Hence, users must return the banknote to the bank before this occurs for replacement. The scheme is built for money values $v$ such that $v\in \{0,1\}^\lambda$. However, it can easily be generalized to allow for larger values.

\begin{construct}{
\label{con:money}
Let $n,\lambda \in \mathbb{N}$ be security parameters and $p\in \poly[\lambda]$. Let $\Pi_{\textnormal{1QAE}}=(\textsf{1QGen},\textsf{1QEnc},\textsf{1QDec})$ be the algorithms for a one-time \textsf{QAE} scheme on $(n\lambda +1)$-qubit messages such that the encryption of a $m$-qubit message is $(m+\lambda)$-qubits and where the key generation algorithm makes $p$ coin tosses. Let $F:\{0,1\}^\lambda \times \{0,1\}^n\rightarrow \{0,1\}^\lambda$ and $F':\{0,1\}^{\lambda,3\lambda+p} \rightarrow \{0,1\}^{p}$ be \textsf{pq-PRF}s. The construction for a $\TdVK$ quantum public-key money scheme is as follows:
\begin{itemize}
    \item $\textsf{KeyGen}(1^\lambda):$ \textsf{(Bank)} Sample $\sk\leftarrow \{0,1\}^\lambda$. Output $\sk$. 
    \item $\textsf{vkAnnounce}(\sk,\ttt):$ \textsf{(Bank)} If $\ttt=t_{i+1}$ for some $i\in \mathbb{N}$, then compute $\nk_i\coloneqq F(\sk,i)$ and announce $\vk_i\coloneqq (t_i,t_{i+1},\nk_i)$. Otherwise, output $\perp$.
    \item $\textsf{CoinGen}(\sk,n, \ttt,v):$ \textsf{(Bank)} Let $j$ be such that $\ttt \in [t_j,t_{j+1})$. 
    \begin{enumerate}
        \item Sample $y\leftarrow \{0,1\}^{p}$ and set $y_j=y$.
    \item For each $i\in [j:j+n],$ compute:
    \begin{enumerate}        
    \item $\ek_{i}=\textsf{1QGen}(1^\lambda;y_{i})$.
    \item $\nk_i\coloneqq F(\sk,i)$.
    \item $y_{i+1}\coloneqq F'(\nk_{i},i\|v\|t_{j+n}\|y_{i})$.
    \end{enumerate} 
     \item Generate
    \begin{align*}
        \sigma\coloneqq \textsf{1QEnc}_{\ek_{j}}( \textsf{1QEnc}_{\ek_{j+1}}(...\textsf{1QEnc}_{\ek_{j+n}}(\ket{0})...)).
    \end{align*}
    \item Output ${\$}_{v,j,y} \coloneqq (t_j,t_{j+n},v,y_j,\sigma)$.
        \end{enumerate}

    \item $\textsf{CoinVerify}( \vk_i ,v,\$_{v,i,y}):$ 
    \begin{enumerate}
        \item Interpret $\$_{v,i,y}$ as $(t,t',v',y_i,\sigma_i)$ and $\vk_i$ as $(t_i,t_{i+1},\nk_i)$.
        \item If $t\neq t_i$, $t'< t_{i+1}$, or $v'\neq v$, then output $\perp$.
        \item Compute $y_{i+1}\coloneqq F'(\nk_{i},i\|v'\|t'\|y_{i})$.
        \item Compute $\ek_{i+1}=\textsf{1QGen}(1^\lambda;y_{i+1})$.
        \item Compute $\sigma_{i+1} \coloneqq \textsf{1QDec}_{\ek_{i+1}}(\sigma_i)$.
        \item If $\sigma_{i+1}=\perp$, then output $\perp$.
        \item Otherwise, accept banknote and output $\$_{v,i+1,y}\coloneqq  (t_{i+1},t',v,y_{i+1},\sigma_{i+1})$.
    \end{enumerate}
    
\end{itemize}}\end{construct}

%Before proving security, we define a natural relation among banknotes to capture the idea that applying $\textsf{CoinVerify}$ produces related banknotes. In particular, we write $\$' \sim \$''$ for two banknotes $\$'$ and $\$''$ if for some negligible value $\epsilon$, some values $v,j,i\in \mathbb{N}$, and string $y$, we have $\$' \approx_{\epsilon} \$_{v,j,y}$ and $ \$ \approx_{\epsilon}\$_{v,i,y}$.

\begin{theorem}
    Construction \ref{con:money} is a secure $\TdVK$ public-key quantum money scheme assuming the existence of \textsf{pq-PRF}s.
\end{theorem}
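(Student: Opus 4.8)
The plan is to mirror the hybrid structure used for Construction~\ref{con:TD money} and Construction~\ref{con:Sig}, replacing the $\TLP$ layer by the dynamically-announced keys $\nk_i = F(\sk,i)$ and the chained strings $y_i$. Fix $T \in \mathbb{R}$ and let $m$ be such that $T \in [t_m, t_{m+1})$. In hybrid $\Hy_0$ we restate the standard experiment $\textsf{TD-PKQM}_{\Pi,\adv}(\lambda,T)$ adapted to this construction, where $\adv$ gets the announced verification keys $(\vk_i)_{i < m}$ and oracle access to $\textsf{CoinGen}(\sk,\cdot,\cdot,\cdot)$, which refuses queries for times after $T$. In hybrid $\Hy_1$, we add bookkeeping: the bank stores, for every banknote it issues, the tuple $(v, y_j, t_j, t_{j+n})$ together with all derived keys $(\ek_i, \nk_i, y_i)_{i}$ in a table $\mathbf M$, and $\textsf{CoinVerify}$ first checks the submitted banknote's classical header against $\mathbf M$. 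Unlike Construction~\ref{con:TD money}, here the classical header is not signed, so this change is \emph{not} transparent — an adversary could in principle submit a banknote with a fresh tag $y$ that it invented. The justification is instead that for $\textsf{CoinVerify}(\vk_i,v,\$)$ to succeed, the adversary needs $\sigma_i$ to be a valid $\textsf{1QEnc}_{\ek_{i+1}}(\cdot)$-ciphertext under the key $\ek_{i+1} = \textsf{1QGen}(1^\lambda; F'(\nk_i, i\|v\|t'\|y_i))$, and (since $\adv$'s query access to $\textsf{CoinGen}$ is cut off at $T < t_{m+1}$, and each $\nk_i$ for $i \ge m$ is announced only at $t_{i+1}$) the string $F'(\nk_i,\cdot)$ looks uniformly random to $\adv$ at the moment it submits its forgeries; so by QUF of $\Pi_{\textnormal{1QAE}}$ it cannot produce a state accepted under a key $\ek_{i+1}$ it has no ciphertext under, meaning every accepted forgery's chain must trace back to a banknote actually issued by the bank. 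This step replaces the ``$\TLP$ hiding $0^{*}$'' hybrid of Construction~\ref{con:TD money}.

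Next, hybrid $\Hy_2$ replaces the \textsf{pq-PRF}s: first $F(\sk,\cdot)$ is swapped for a truly random function (legitimate since $\sk$ is uniform and never otherwise exposed), making each $\nk_i$ uniform and independent; then, for each $i \ge m$ that is reached during the experiment, $F'(\nk_i,\cdot)$ is swapped for a truly random function $R'_i$ (legitimate by \textsf{pq-PRF} security of $F'$, since $\nk_i$ is now uniform and — crucially — is announced only at time $t_{i+1} > T$, so $\adv$ never sees it before committing its forgeries). Indistinguishability of $\Hy_2$ from $\Hy_1$ follows by a standard hybrid argument over the polynomially many indices $i$ touched. In $\Hy_2$, the keys $\ek_{i}$ for $i > m$ generated inside any banknote's chain are derived from fresh uniform strings $y_{i} = R'_{i-1}(\cdot)$, hence are uniform and independent of $\adv$'s view at forgery time.

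The final claim is $\Pr[\Hy_2 = 1] \le \negl[\lambda]$, and this is where the real counting argument lives. Suppose $\adv$ receives banknotes $(\$_{v_k})_{k \in [p]}$ and outputs accepted forgeries $(c_z, \tilde\$_{c_z})_{z \in \mathcal C}$ with $\sum_{z \in \mathcal C} c_z > \sum_{k \in [p]} v_k$. By $\Hy_1$'s check, each accepted $\tilde\$_{c_z}$ has a classical header matching some issued banknote; group the forgeries by which issued banknote they trace to. If some issued banknote $\$_{v_k}$ (a single authenticated state $\sigma$, one layer of which is $\textsf{1QEnc}_{\ek_{m}}(\cdot)$ — or more precisely the layer unlocked at the verification time $t_m$) is the ancestor of two distinct accepted forgeries that both pass verification at time $T \in [t_m,t_{m+1})$, then verification runs $\textsf{1QDec}_{\ek_{m}}$ on two submitted states, both accepted, while $\adv$ was given exactly one state authenticated under $\ek_{m}$ and (in $\Hy_2$) no other information about $\ek_{m}$; this is precisely a break of the one-time QUF of $\Pi_{\textnormal{1QAE}}$ (cf.\ the analogous claim in Construction~\ref{con:TD money}), contradiction. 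Hence the map from accepted forgeries to ancestor banknotes is injective, whence $\sum_{z \in \mathcal C} c_z \le \sum_{k} v_k$ (values are preserved along the chain because $v$ is bound into the header and re-checked, and into the key-derivation $F'(\nk_i, i\|v\|t'\|y_i)$), so the experiment outputs $0$. The main obstacle is this last step: one must argue carefully that ``one authenticated state in, at most one accepted state out'' holds even though verification is an interactive chain of decryptions rather than a single check, and that the adversary cannot amplify value by, e.g., producing forgeries that pass under keys for \emph{different} time indices — this is handled by observing that at the fixed forgery time $T$ all accepted forgeries are verified against the \emph{same} key $\ek_m$ (the one whose predecessor key $\nk_{m-1}$ has just been announced at $t_m$), and QUF of $\Pi_{\textnormal{1QAE}}$ then bounds the number of accepted outputs per authenticated input by one up to negligible error. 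Combining the three claims gives $\Pr[\Hy_0 = 1] \le \negl[\lambda]$. \qed
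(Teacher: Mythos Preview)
Your approach captures the right ideas but takes a more circuitous route than the paper. The paper's proof uses only two hybrid steps after the base game: first replacing $F(\sk,\cdot)$ by a truly random function (making every $\nk_i$ uniform), then replacing $F'(\nk_n,\cdot)$ by a random function $R'$ at the \emph{single} verification index $n$ (where $T \in [t_n, t_{n+1})$). In the resulting hybrid the header-matching and injectivity arguments are done together: since $R'$ is truly random, any accepted forgery's header $(c_z, \tilde{t}', \tilde{y}_n)$ must coincide with an evaluation the bank actually performed during $\textsf{CoinGen}$ (otherwise the derived key $\ek_{n+1}$ is fresh and one-time unforgeability of $\Pi_{\textnormal{1QAE}}$ kills it), and two accepted forgeries tracing to the same issued banknote would then break one-time unforgeability.

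Your bookkeeping hybrid $\Hy_1$, transplanted from the proof of Construction~\ref{con:TD money}, is the main detour. In that construction the banknote header is classically \emph{signed}, so ``header $\in \mathbf{M}$'' follows from signature unforgeability alone and the hybrid is a clean single-assumption step. Here there is no signature; you correctly notice this, but your justification for $\Hy_0 \approx \Hy_1$ already invokes PRF security of $F$ (to make $\nk_m$ unpredictable from the announced $\nk_j$, $j<m$), PRF security of $F'$, and QUF of $\Pi_{\textnormal{1QAE}}$ in one composite step --- essentially the entire proof. Your subsequent $\Hy_2$ then re-invokes the same PRF assumptions, making the structure redundant rather than modular. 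The paper avoids this by deferring header-matching until \emph{after} the PRF swaps, where it becomes a one-line observation about a random function. The paper also only swaps $F'$ at the single index $n$, not at all $i\ge m$ as you do.

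Two smaller points. First, the verification at time $T\in[t_m,t_{m+1})$ uses $\vk_m$ (containing $\nk_m$, announced only at $t_{m+1}$), and the decryption key is $\ek_{m+1}$ derived via $y_{m+1}=F'(\nk_m,\cdot)$ --- your indices are off by one throughout the last paragraph, and the parenthetical ``whose predecessor key $\nk_{m-1}$ has just been announced'' misidentifies the relevant key. Second, ``all accepted forgeries are verified against the same key $\ek_m$'' is false as written: different headers yield different derived keys. What is true, and what the injectivity argument actually needs, is that forgeries tracing to the \emph{same} issued banknote share a derived key.
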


\begin{proof}
We prove security through a sequence of hybrids. The first hybrid is the standard security experiment with $T\in [t_n,t_{n+1})$. 

\smallskip \noindent\fbox{%
    \parbox{\textwidth}{%
\textbf{Hybrid} $\Hy_0$ (standard experiment):
\begin{enumerate}
\item Sample $\sk\leftarrow \textsf{KeyGen}(1^\lambda)$.
\item For each $j\leq n$, generate $\vk_j\leftarrow \textsf{vkAnnounce}(\sk,t_{j+1})$.     
\item $\adv$ is given $(\vk_j)_{j<n}$ and access to an oracle for $\textsf{CoinGen}(\sk,\cdot, \cdot,\cdot )$ which rejects queries for time after $T$. 
\item $\adv$ receives $(\$_{v_i})_{i\in [p]}$ from the oracle, where $v_i$ signifies the value of the banknote.
\item $\adv$ outputs $(c_i,\tilde{\$}_{{c}_i})_{i\in [p']}.$
\item The output of the experiment is 1 if:
\begin{align*}
     \sum_{z:\textsf{CoinVerify}(\vk_n,c_z,\tilde{\$}_{c_z})\neq \perp}c_z> \sum_{j\in [p]}v_j. 
\end{align*}
\item Otherwise, the output is $0$.
\end{enumerate}
}}
    \smallskip 

Hybrid $\Hy_1$ is the same as $\Hy_0$ except the \textsf{pq-PRF} $F(\sk,\cdot)$ is replaced with a completely random function $R(\cdot)$ on the same domain and co-domain. This only changes the $\textsf{CoinGen}$ procedure, where now the keys $\nk_i=R(i)$ are generated using $R$.

Hybrid $\Hy_2$ is the same as $\Hy_1$, except that the \textsf{pq-PRF} $F'(\nk_n,\cdot)$ is replaced with a random function $R'$. 

It is easy to see that hybrids $\Hy_0$ and $\Hy_1$ are indistinguishable by the security of \textsf{pq-PRF}s. Similarly, $\Hy_1$ and $\Hy_2$ are indistinguishable given that $\nk_n$ is indistinguishable from random as it is generated with a random function $R$, allowing us to replace $F'(\nk_n,\cdot)$ with a random function. As a result, it is sufficient to prove security of $\Hy_2$ which is what we do now. 

Assume there exists a QPT adversary $\adv$ such that $\Pr[\Hy_2=1]$ is non-negligible. Assume, without loss of generality, that $1\in \mathcal{C}$ where $\mathcal{C}\coloneqq \{z:\textsf{CoinVerify}(\vk_n,c_z,\tilde{\$}_{c_z})\neq \perp\}$. Interpret $\tilde{\$}_{c_1}$ as $(t_n,\tilde{t}',c_1,\tilde{y}_n,\tilde{\sigma})$. 

When the experiment runs $\textsf{CoinVerify}( \vk_n,c_1,\tilde{\$}_{c_1})$, it computes $\tilde{y}_{n+1}\coloneqq R'(c_1\|\tilde{t}'\|\tilde{y}_{n})$ and then $\ek_{n+1}=\textsf{1QGen}(1^\lambda;\tilde{y}_{n+1})$. If the bank did not perform the evaluation $R'(c_1\|\tilde{t}'\|\tilde{y}_{n})$ during the generation of banknotes, then it is clear that there is negligible probability that $\tilde{\$}_{c_1}$ passes verification. It is also clear that (whp) this evaluation only occurs at most in a single banknote. Without loss of generality, assume ${\$}_{v_1}$ involves this evaluation. 

Interpret $\$_{v_1}$ as $(t,t',v_1,{y},{\sigma})$. We have $R'(n\|c_1\|\tilde{t}'\|\tilde{y}_{n})=R'(n\|{v}_1\|t'\|{y}_{n})$ which implies (whp) that ${c}_1=v_1$, $\tilde{y}_{n}=y_\ell$ and $t'=\tilde{t}'$. 

As $\ek_{n+1}$ is indistinguishable from random with respect to $\adv$, by the security of $\Pi_{\textnormal{1QAE}}$, $\adv$ cannot produce two states that pass verification with key $\ek_{n+1}$ except with negligible probability. In other words, applying this argument to all banknotes, there is an injective mapping from $(c_z)_{z\in \mathcal{C}}$ to $(v_i)_{i\in [p]}$. This means that $\sum_{z\in [\mathcal{C}]}c_z \leq \sum_{j\in [p]}v_j$ giving a contradiction.
\qed
\end{proof}

\remark{We make a short note regarding the efficiency of the scheme. First of all, the one-time \textsf{QAE} scheme $\Pi_{\textnormal{1QAE}}$ utilized in the scheme which encrypts $m$-qubit messages into $(m+\lambda)$-qubit messages exists unconditionally (Lemma \ref{lem:one-time enc}). Now a quantum banknote consists of $n$ applications of the one-time \textsf{QAE} on a single qubit $\ket{0}$. Hence, the banknote is a $(n\lambda+1)$-qubit state, and so grows linearly with respect to its lifespan. }

\subsection{Construction for \textsf{TD}-PKQM}

In this section, we present a public-key quantum money scheme where the generation and verification algorithms depend on time but, importantly, the verification key is time-independent allowing users to be completely offline. We prove security assuming the existence of $\TLP$s and $\OWF$s. 

\begin{construct}{
\label{con:TD money}
Let $n,\lambda \in \mathbb{N}$ be security parameters and $p\in \poly[\lambda]$. Let $\Pi_{\textnormal{1QAE}}=(\textsf{1QGen},\textsf{1QEnc},\textsf{1QDec})$ be the algorithms of a one-time \textsf{QAE} scheme on $(n\lambda+1)$-qubit messages where the key generation algorithm makes $p$ coin tosses. Let $\Pi_{\textnormal{1E}}=(\textsf{1Gen},\textsf{1Enc},\textsf{1Dec})$ be the algorithms of a one-time encryption scheme on classical messages. Let $\Pi_{\textnormal{CS}}\coloneqq ({\textsf{CS.Gen}}, {\textsf{CS.Sign}},{\textsf{CS.Verify}})$ be the algorithms for a signature scheme on classical messages. Let $(\textsf{Puzzle.Gen},\textsf{Puzzle.Sol})$ be the algorithms of a $\TLP$ scheme. The construction for a $\TD$ quantum public-key money scheme is as follows:
\begin{itemize}
    \item $\textsf{KeyGen}(1^\lambda):$ \textsf{(Bank)} Sample $(\sk,\vk)\leftarrow \textsf{CS.Gen}(1^\lambda)$. Output $(\sk,\vk)$.
    \item $\textsf{CoinGen}(\sk,n, \ttt,v):$ \textsf{(Bank)} Let $j$ be such that $\ttt \in [t_j,t_{j+1})$.
    \begin{enumerate} 
    \item Sample tag $r\leftarrow \{0,1\}^\lambda$ and key $\nk_{j-1}\leftarrow \textsf{1Gen}(1^\lambda)$.
    \item For each $i\in [j:j+n]$:
    \begin{enumerate}
        \item Sample $\ek_{i}=\textsf{1QGen}(1^\lambda)$.
        \item Sample $\nk_{i}=\textsf{1Gen}(1^\lambda)$.
        \item Generate $P_i\leftarrow \textsf{Puzzle.Gen}(1^\lambda,1, (\ek_i, \nk_i)).$
        \item Encrypt $Z_i=\textsf{1Enc}_{\nk_{i-1}}(P_i)$.
        \item Sign $s_i \leftarrow \textsf{CS.Sign}(\sk,(Z_i,v,r,t_{j+n},t_i))$. 
    \end{enumerate}
        \item Let $Z\coloneqq (Z_j,\ldots,Z_{j+n})$ and $S\coloneqq (s_j,\ldots, s_{j+n})$.
     \item Generate
    \begin{align*}
        \sigma\coloneqq \textsf{1QEnc}_{\ek_{j}}( \textsf{1QEnc}_{\ek_{j+1}}(...\textsf{1QEnc}_{\ek_{j+n}}(\ket{0})...)).
    \end{align*}
    \item Output ${\$}_{v,r,j} \coloneqq (t_{j+n},r,v,\nk_{j-1},Z,S,\sigma)$.
        \end{enumerate}

    \item $\textsf{CoinVerify}( \vk ,\ttt, v,\$):$ 
    Let $i\in \mathbb{N}$ be such that $\ttt\in [t_i,t_{i+1})$.
    \begin{enumerate}
        \item Interpret $\$$ as $(t_{n},r',v',\nk',Z',S',\sigma')$.
        \item If $v'\neq v$ or $i+1> n$, then output $\perp$.
        \item For each $k\in [n]$, if $\textsf{CS.Verify}(\vk,(Z_k',v,r',t_{n},t_k), S'_k)=\perp$, then output $\perp.$
        \item Compute $P_i\coloneqq \textsf{1Dec}_{\nk'}(Z'_i)$.
        \item Compute $\textsf{Puzzle.Sol}(P_i)$ and interpret the result as $(\ek_{i},\nk_{i})$.
        \item Compute $\sigma_{i} \coloneqq \textsf{1QDec}_{\ek_{i}}(\sigma')$.
        \item If $\sigma_{i+1}=\perp$, then output $\perp$.
        \item Otherwise, accept the banknote and output $\$_{v,r',i+1}\coloneqq  (t_{n},r',v,\nk_{i},Z',S',\sigma_{i})$.
    \end{enumerate}
\end{itemize}}\end{construct}

%The vector $Z$ generated as part of a banknote acts as a $\TLP$ that only allows the uncovering of $z_i$ after time $t_i$. Such a $\TLP$ can be proven secure in the $\QROM$ as is done in \cite{CFH21}. %Security, then follows similarly to Construction \ref{con:money}.

\begin{theorem}
    Construction \ref{con:TD money} is a secure $\TD$-QM scheme assuming the existence of a EUF classical digital signature scheme and secure $\TLP$s.
\end{theorem}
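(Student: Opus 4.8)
The plan is to dispatch correctness by unwinding definitions and then prove unforgeability by a short hybrid argument that first reduces to the unforgeability of $\Pi_{\textnormal{CS}}$, then to the security of the $\TLP$, and finally to (the unclonability implied by) the security of the one-time $\textsf{QAE}$. For correctness: an honestly generated note presented to $\textsf{CoinVerify}$ at a time in $[t_i,t_{i+1})$ has valid signatures, carries the genuine $\nk_{i-1}$, so that $\textsf{Puzzle.Sol}(\textsf{1Dec}_{\nk_{i-1}}(Z_i))=(\ek_i,\nk_i)$, and a quantum part of the shape $\textsf{1QEnc}_{\ek_i}(\cdots)$ that $\textsf{1QDec}_{\ek_i}$ strips by correctness of $\Pi_{\textnormal{1QAE}}$; iterating shows a fresh note survives $n$ transactions except with negligible probability, each verifier being allowed to begin within its interval and spend the depth-$\delta$ computation needed to solve its puzzle.

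For unforgeability, fix the time parameter $T$ and let $i^\star$ satisfy $T\in[t_{i^\star},t_{i^\star+1})$. \emph{Hybrid $1$.} Replace every call to $\textsf{CS.Verify}$ inside $\textsf{CoinVerify}$ by the test ``this tuple was actually signed by the $\textsf{CoinGen}$ oracle''; a distinguisher immediately yields an EUF forgery against $\Pi_{\textnormal{CS}}$. Each oracle note draws a fresh tag $r\leftarrow\{0,1\}^\lambda$, so all tags are distinct with overwhelming probability; since the tag, the value, the expiry time, the one-time-encrypted puzzles $Z_i$ and the one-time keys $\nk_{i-1}$ are all bound together by the signatures, every forged note that $\textsf{CoinVerify}$ accepts at time $T$ must carry the value $v_\ell$, the puzzle $Z_{i^\star}^{(\ell)}$ and the key $\nk_{i^\star-1}^{(\ell)}$ of a single, uniquely determined oracle note $\ell$; hence the verifier recovers the genuine $(\ek_{i^\star}^{(\ell)},\nk_{i^\star}^{(\ell)})$ and accepts exactly when the forged quantum register decrypts to something other than $\perp$ under $\ek_{i^\star}^{(\ell)}$. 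Now let $L$ be the set of oracle notes reached by some accepted forgery and $k_\ell\ge1$ the number of forgeries reaching $\ell$; then $\sum_z c_z=\sum_{\ell\in L}k_\ell v_\ell$ while $\sum_j v_j\ge\sum_{\ell\in L}v_\ell$, so winning forces $k_{\ell^\star}\ge2$ for some $\ell^\star$ --- two accepted forgeries then supply two quantum registers, each decrypting to non-$\perp$ under the one key $\ek_{i^\star}^{(\ell^\star)}$.

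\emph{Hybrid $2$.} The pair $(\ek_{i^\star}^{(\ell^\star)},\nk_{i^\star}^{(\ell^\star)})$ is the solution of $P_{i^\star}^{(\ell^\star)}$, which is one-time-encrypted (hence perfectly hidden) under $\nk_{i^\star-1}^{(\ell^\star)}$, which is the solution of $P_{i^\star-1}^{(\ell^\star)}$, and so on down to the in-the-clear $\nk_{j-1}^{(\ell^\star)}$; because the one-time encryptions chain the puzzles, the adversary must solve them in order, so it obtains $\nk_{i^\star-1}^{(\ell^\star)}$ only around time $t_{i^\star}$ and, as $P_{i^\star}^{(\ell^\star)}$ has difficulty $\delta$ and $T<t_{i^\star+1}$, it cannot solve $P_{i^\star}^{(\ell^\star)}$ before it must commit its forgeries. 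Invoking $\TLP$ security puzzle-by-puzzle along this chain (and using the paper's convention that circuit depth equals elapsed real time), I replace the plaintext of $P_{i^\star}^{(\ell^\star)}$ by a dummy and draw $(\ek_{i^\star}^{(\ell^\star)},\nk_{i^\star}^{(\ell^\star)})$ as fresh independent randomness whose only surviving traces are the matching $\textsf{1QEnc}_{\ek_{i^\star}^{(\ell^\star)}}$ layer inside $\sigma^{(\ell^\star)}$ and the perfectly hiding ciphertext $Z_{i^\star+1}^{(\ell^\star)}$. Stripping the already-revealed outer layers of $\sigma^{(\ell^\star)}$ down through level $i^\star-1$, the adversary is left holding a single ciphertext $\textsf{1QEnc}_{\ek_{i^\star}^{(\ell^\star)}}(\psi)$ under a uniformly random, otherwise-unused key, yet (to win) produces a two-register state both parts of which decrypt to non-$\perp$ under that key. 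This contradicts the unclonability entailed by the information-theoretic security of the one-time $\textsf{QAE}$ $\Pi_{\textnormal{1QAE}}$ of \cite{BCG02} --- such a ``splitter'' is precisely an adversary that duplicates an authenticated quantum state --- and a union bound over the polynomially many oracle notes and the single relevant level $i^\star$ finishes the argument.

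The step I expect to be the real obstacle is this last one: one must state and justify an explicit \emph{anti-cloning} property of one-time quantum authenticated encryption (from one ciphertext under an unknown key one cannot produce two registers that both pass decryption), which is about duplication and is not literally the quantum-unforgeability notion used elsewhere, and one must check the scheme of \cite{BCG02} has it. Two subsidiary points also need care: making the Hybrid $2$ reduction to the plain $\TLP$ indistinguishability game precise through a clean chain of intermediate hybrids that peel the nested puzzles one at a time (this is where the ``depth $=$ real time'' idealization enters), and verifying that the data signed by $\Pi_{\textnormal{CS}}$ is rich enough --- in particular that it binds each one-time key $\nk_{i-1}$ --- so that a note tampered only in its $\nk'$ field is rejected, which is exactly what legitimizes the ``genuine $\ek_{i^\star}^{(\ell)}$'' conclusion drawn in Hybrid $1$.
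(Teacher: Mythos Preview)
Your proposal is correct and follows essentially the same three-hybrid route as the paper: first use EUF of $\Pi_{\textnormal{CS}}$ to force every accepted forgery to share its classical part $(Z,v,r,t_{j+n})$ with some oracle note, then use $\TLP$ security (together with the sequential chaining of the puzzles) to replace the level-$i^\star$ puzzle content by a dummy so that $\ek_{i^\star}$ becomes fresh randomness, and finally appeal to the one-time \textsf{QAE} to rule out two accepting registers under that single key. The two subtleties you flag --- that the reduction really needs an explicit \emph{anti-cloning} statement for the one-time \textsf{QAE} rather than the ``one-in-one-out'' unforgeability of \cite{BCG02}, and that $\nk_{i-1}$ is not literally signed --- are genuine and the paper handles them no more carefully than you do: it simply asserts the first (``breaks QUF of $\Pi_{1QAE}$'') and sidesteps the second by having the hybrid-$2$ verifier look up $(\ek_i,\nk_i)$ from storage rather than decrypt $Z_i$ with the forger-supplied $\nk'$, without explicitly arguing that this change is itself undetectable.
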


\begin{proof}
Fix $T\in\mathbb{R}$ and let $m\in \mathbb{N}$ be such that $T\in [t_m,t_{m+1})$. We prove the scheme is secure through a sequence of hybrids. Hybrid $\Hy_0$ is the standard security experiment which we restate here. 

    \smallskip \noindent\fbox{%
    \parbox{\textwidth}{%
\textbf{Hybrid} $\Hy_0$ (standard experiment):
\begin{enumerate}
\item Sample $(\sk,\vk)\leftarrow \textsf{KeyGen}(1^\lambda)$.
\item $\adv$ is given $\vk$ and access to an oracle for $\textsf{CoinGen}(\sk,\cdot, \cdot,\cdot )$ which rejects queries for time after $T$. 
\item $\adv$ receives $(\$_{v_i})_{i\in [p]}$ from the oracle, where $v_i$ signifies the value of the banknote.
\item $\adv$ outputs $(c_i,\tilde{\$}_{{c}_i})_{i\in [p']}.$
\item The output of the experiment is 1 if:
\begin{align*}
    \sum_{z:\textsf{CoinVerify}(\vk,T,c_z,\tilde{\$}_{c_z})\neq \perp }c_z> \sum_{j\in [p]}v_j. 
\end{align*}
\item Otherwise, the output is $0$.
\end{enumerate}
}}
    \smallskip  

Next, hybrid $\Hy_1$ is the same as $\Hy_0$ except that the $\textsf{CoinVerify}$ procedure rejects any coin where the classical part of the banknote was not generated by the bank. To do this, the bank keeps a storage $\textsf{M}$ on information regarding previous banknotes generated. The new algorithms for the bank are as follows:
\begin{itemize}
    \item $\textsf{KeyGen}_{\Hy_1}(1^\lambda):$ Sample $(\sk,\vk)\leftarrow \textsf{CS.Gen}(1^\lambda)$. Output $(\sk,\vk)$.
    \item $\textsf{CoinGen}_{\Hy_1}(\sk,n, \ttt,v):$ Let $j\in \mathbb{N}$ be such that $\ttt \in [t_j,t_{j+1})$.
    \begin{enumerate}
        \item Run $(t_{j+n},r,v,\nk_{j-1},Z,S,\sigma) \leftarrow \textsf{CoinGen}(\sk,n, \ttt,v).$
    \item Store $(Z,v,r,t_{j+n})$ in storage $\mathbf{M}$.
    \item Output $(t_{j+n},r,v,\nk_{j-1},Z,S,\sigma)$.
    \end{enumerate} 

    \item $\textsf{CoinVerify}_{\Hy_1}( \vk ,\ttt, v,\$):$ 
    Let $i\in \mathbb{N}$ be such that $\ttt\in [t_i,t_{i+1})$.
    \begin{enumerate}
        \item Interpret $\$$ as $(t_{j+n},r',v',\nk',Z',S',\sigma')$.
        \item If $(Z',v',r',t_{j+n})\notin \mathbf{M}$, then output $\perp$.
        \item Otherwise, output $\textsf{CoinVerify}( \vk ,\ttt, v,\$).$ 
    \end{enumerate}
\end{itemize}

Next, hybrid $\Hy_2$ is the same as hybrid $\Hy_1$, except we modify the $\TLP$s $P_i$ for $i\geq m$ to instead encrypt $\perp$. The steps written in \textbf{bold} are the difference from $\Hy_1$. The resulting algorithms for the bank are as follows:

\begin{itemize}
    \item $\textsf{KeyGen}_{\Hy_2}(1^\lambda):$ Sample $(\sk,\vk)\leftarrow \textsf{CS.Gen}(1^\lambda)$. Output $(\sk,\vk)$.
    \item $\textsf{CoinGen}_{\Hy_2}(\sk,n, \ttt,v):$ 
    \begin{enumerate}
    \item Let $j$ be such that $\ttt \in [t_j,t_{j+1})$. 
    \item Sample tag $r\leftarrow \{0,1\}^\lambda$ and key $\nk_{j-1}\leftarrow \textsf{1Gen}(1^\lambda)$.
    \item For each $i\in [j:j+n]$:
    \begin{enumerate}
        \item Sample $\ek_{i}=\textsf{1QGen}(1^\lambda)$.
        \item Sample $\nk_{i}=\textsf{1Gen}(1^\lambda)$.
        \item \textbf{If $i\geq m$, generate $P_i\leftarrow \textsf{Puzzle.Gen}(1^\lambda, 1, 0^{\lvert \nk_{i}\rvert +\lvert \ek_{i}\rvert })$.}
        \item Otherwise, $P_i\leftarrow \textsf{Puzzle.Gen}(1^\lambda, 1, (\ek_i, \nk_i)).$
        \item $Z_i=\textsf{1Enc}_{\nk_{i-1}}(P_i)$.
        \item Sign $s_i \leftarrow \textsf{CS.Sign}(\sk,(Z_i,v,r,t_{j+n},t_i))$.
    \end{enumerate}
        \item Let $Z\coloneqq (Z_j,\ldots,Z_{j+n})$ and $S\coloneqq (s_j,\ldots, s_{j+n})$.
    \item \textbf{Store $(Z,v,r,t_{j+n})$ and $(\ek_i,\nk_i)_{i\geq m}$ in storage $\mathbf{M}$.}
     \item Generate
    \begin{align*}
        \sigma\coloneqq \textsf{1QEnc}_{\ek_{j}}( \textsf{1QEnc}_{\ek_{j+1}}(...\textsf{1QEnc}_{\ek_{j+n}}(\ket{0})...)).
    \end{align*}
    \item Output ${\$}_{v,r,j} \coloneqq (t_{j+n},r,v,\nk_{j-1},Z,S,\sigma)$.
        \end{enumerate}

    \item $\textsf{CoinVerify}_{\Hy_2}( \vk ,\ttt, v,\$):$ 
    Let $i\in \mathbb{N}$ be such that $\ttt\in [t_i,t_{i+1})$.
    \begin{enumerate}
        \item Interpret $\$$ as $(t_{j+n},r',v',\nk',Z',S',\sigma')$.
        \item If $v'\neq v$ or $i+1> n$, then output $\perp$.
        \item For each $k\in [n]$, check if $\textsf{CS.Verify}(\vk,(Z_k',v,r',t_{j+n},t_k), S'_k)=\perp$, then output $\perp.$
        \item Compute $P_i\coloneqq \textsf{1Dec}_{\nk'}(Z'_i)$.
        \item Otherwise, compute $\textsf{Puzzle.Sol}(P_i)$ and interpret the result as $(\ek_{i},\nk_{i})$.
        \item Compute $\sigma_{i+1} \coloneqq \textsf{1QDec}_{\ek_{i}}(\sigma')$.
        \item If $\sigma_{i+1}=\perp$, then output $\perp$.
        \item Otherwise, accept the banknote and output $\$_{v,r',i+1}\coloneqq  (t_{j+n},r',v,\nk_{i},Z',S',\sigma_{i+1})$.

             \item Interpret $\$$ as $(t_{n},r',v',\nk',Z',S',\sigma')$.
        \item If $(Z',v',r',t_{j+n})\notin \mathbf{M}$, then output $\perp$.
        \item If $v'\neq v$ or $i+1> n$, then output $\perp$.
        \item For each $k\in [n]$, if $\textsf{CS.Verify}(\vk,(Z_k',v,r',t_{n},t_k), S'_k)=\perp$, then output $\perp.$
        \item Compute $P_i\coloneqq \textsf{1Dec}_{\nk'}(Z'_i)$.
        \item \textbf{If $i\geq m$, then determine $(\ek_{i},\nk_{i})$ from $\mathbf{M}.$}
        \item Otherwise, compute $\textsf{Puzzle.Sol}(P_i)$ and interpret the result as $(\ek_{i},\nk_{i})$.
        \item Compute $\sigma_{i} \coloneqq \textsf{1QDec}_{\ek_{i}}(\sigma')$.
        \item If $\sigma_{i+1}=\perp$, then output $\perp$.
        \item Otherwise, accept the banknote and output $\$_{v,r',i+1}\coloneqq  (t_{n},r',v,\nk_{i},Z',S',\sigma_{i})$.
    \end{enumerate}
\end{itemize}

\begin{claim}
    $\Pr[\Hy_2=1]\leq \negl[\lambda]$. 
\end{claim}
\begin{proof}
Assume that there exists a QPT adversary $\adv$ such that $\Pr[\Hy_2=1]$ is non-negligible. Interpret the banknote $\$_{v_1}$ generated by the bank in $\Hy_2$ as $(t,r,v_1,\nk,Z,S,\sigma)$. Let $\mathcal{C}\coloneqq \{z:\textsf{CoinVerify}(\vk,T,c_z,\tilde{\$}_{c_z})\neq \perp\}$ and assume without loss of generality that $1\in \mathcal{C}$. Consider the corresponding banknote $\tilde{\$}_{c_1}$ submitted by $\adv$. Interpret $\tilde{\$}_{c_1}$ as $(\tilde{t},\tilde{r},c_1,\tilde{\nk},\tilde{Z},\tilde{S},\tilde{\sigma})$. 

By our assumption $(\tilde{Z},c_1,\tilde{r},\tilde{t})\in \mathbf{M}$. Hence, without loss of generality, assume $(\tilde{Z},c_1,\tilde{r},\tilde{t})=(Z,v_1,r,t)$. Since verification is performed at time $T\in [t_m,t_{m+1})$, the experiment retrieves $(\ek_{1,m},\nk_{1,m})$ from storage $\mathbf{M}$ where $(\ek_{1,m},\nk_{1,m})$ correspond to the keys sampled in the generation of $\$_{v_1}$. The experiment performs verification by running $\textsf{1QDec}_{\ek_{1,m}}(\tilde{\sigma})$.

If each forgery with index in $\mathcal{C}$ has a unique tag, then this would imply that there is an injective mapping from the values $(c_i)_{i\in \mathcal{C}}$ to $(v_i)_{i\in [p]}$, giving $\sum_{z\in \mathcal{C}}c_z\leq \sum_{j\in [p]}v_j$ and an experiment outcome of 0. 

Hence, assume without loss of generality that $\adv$ submits another forgery in $\mathcal{C}$ with tag $r$. Then $\textsf{1QDec}_{\ek_{1,m}}$ is performed with the same key $\ek_{1,m}$. If both forgeries pass verification, then this would mean that $\adv$ is given one state, namely $\sigma$, that is authenticated with $\ek_{1,m}$, and produces two states that pass verification with $\ek_{1,m}$. Notice that $\adv$ is not given any other information regarding $\ek_{1,m}$. Hence, $\adv$ can easily be converted into a QPT algorithm that breaks QUF of $\Pi_{1QAE}$ giving a contradiction.
    \qed
\end{proof}

\begin{claim}
No QPT adversary can distinguish between hybrids $\Hy_2$ and $\Hy_1$ with non-negligible advantage. 
\end{claim}

\begin{proof}
The only difference between $\Hy_2$ and $\Hy_1$ is that for $i\geq m$, the $\TLP$ $P_i$ encrypting $(\ek_i,\nk_i)$ is replaced with $\TLP$ encrypting $0^{\lvert \ek_i\rvert +\lvert\nk_i\rvert}$. 

If $\adv$ receives a banknote at time $t_j$, then $\adv$ requires $1$ hour to unravel $P_{j}$ and deduce $(\ek_j,\nk_j)$ by the security of $\TLP$s. Only after this step can $\adv$ decrypt $Z_{j+1}$ to retrieve $P_{j+1}$. Then, $\adv$ again requires 1 hour to decrypt $P_{j+1}$. As $\adv$ submits its forgery at time $T\in [t_m,t_{m+1})$, it cannot distinguish a $\TLP$ $P_i$ hiding $(\ek_i,\nk_i)$ from a $\TLP$ hiding $0^{\lvert \ek_i\rvert +\lvert\nk_i\rvert}$ for any $i\geq m$. 

Specifically, if an adversary can distinguish the two hybrids with non-negligible advantage, then this enables the construction of an algorithm that can distinguish whether $P_i$ is hiding $(\ek_i,\nk_i)$ or $0^{\lvert \ek_i\rvert +\lvert\nk_i\rvert}$ for $i\geq m$ in time less than $t_{m+1}$ contradicting the security of $\TLP$s. 
\qed
\end{proof}

\begin{claim}
No QPT adversary can distinguish between hybrids $\Hy_1$ and $\Hy_0$ with non-negligible advantage. 
\end{claim}

\begin{proof}
The only difference between $\Hy_1$ and $\Hy_0$ is that the experiment rejects any forgeries where the classical part has not been generated by the bank. In particular, if a forgery is interpreted as $(t_{j+n},r',v',\nk',Z',S',\sigma')$, then in $\Hy_1$, we must have $(Z',v',r',t_{j+n})\in \mathbf{M}$. 

However, note that $(Z',v',r',t_{j+n})$ is signed under the scheme $\Pi_{CS}$. Hence, if $\adv$ can distinguish between these hybrids with non-negligible advantage, then, with non-negligible probability, $\adv$ outputs a banknote where $(Z',v',r',t_{j+n})\notin \mathbf{M}$, along with a valid signature of $(Z',v',r',t_{j+n})$. Such an adversary can be converted into a QPT algorithm that breaks the unforgeability of $\Pi_{CS}$ using standard reductions.
\qed
\end{proof}

To sum up, we have deduced that hybrid $\Pr[\Hy_2=1]\leq \negl[\lambda]$, no adversary can distinguish $\Hy_2$ from $\Hy_1$ with non-negligible advantage, and no adversary can distinguish $\Hy_1$ from $\Hy_0$ with non-negligible advantage. Hence, $\Pr[\Hy_0=1]\leq \negl[\lambda]$.
\qed
\end{proof}

\section{Public Key Encryption with Authenticated Quantum Public-Keys}
\label{sec:QPKE}

In this section, we first define security for quantum public key encryption to take into account attacks that tamper with the quantum public-keys. Following this, we present two constructions that satisfy this notion of security by applying our many-time and one-time $\TdVK$ signature schemes to authenticate the public keys.  

\subsection{Definitions}
\label{sec:my def}
We first recall the algorithms for a $\QPKE$ scheme \cite{GC01, MY22, BS23} where the public-key is a quantum state. Due to the unclonability of quantum states, multiple copies of the key must be created and distributed to each user individually -- this is the general approach taken in quantum public-key works \cite{GC01, MY22, BS23}. Thus, we include an algorithm $\textsf{pkSend}$ which generates a public-key copy, and an algorithm $\textsf{ekReceive}$ which describes how to process a public-key copy and extract a classical encryption key. The public-key copies are certified with a certification key that may vary with time -- hence, we include an algorithm $\textsf{ckAnnounce}$ to describe how to announce the certification keys. 

In this work, we only deal with the encryption of classical messages. This is because any public-key encryption scheme on classical messages can be adapted to quantum messages with a simple modification as shown in \cite{AGM21}. 

\begin{definition}[Quantum Public Key Encryption]
A \emph{quantum public encryption scheme} $\Pi$ over classical message space $\hildd{M}$ consists of the following algorithms: 
\begin{itemize}
    \item $\textsf{KeyGen}(1^\lambda):$ Outputs a classical secret key $\sk$. 
    \item $\textsf{pkSend}(\sk,\ttt):$ Output a quantum public-key $\rho$ using the secret key and depending on the time $\ttt$.
    \item $\textsf{ckAnnounce}(\sk,\ttt):$ If $\ttt=t_{i+1}$ for some $i\in \mathbb{N}$, then outputs a classical certification key $\ck_i$ using the secret key $\sk$.
    \item $\textsf{ekReceive}(\rho,\ck_i):$ Extract a classical encryption key $\nk$ from $\rho$ and $\ck_i$. 
    \item $\textsf{Enc}(\nk,\mu):$ Outputs a classical ciphertext $\textsf{c}$ for $\mu \in \hildd{M}$ using $\nk$.
    \item $\textsf{Dec}(\sk, \textsf{c})$: Outputs a message $\mu'$ by decrypting $\textsf{c}$ using the secret key $\sk$. 
\end{itemize}
\end{definition}

\begin{definition}[Correctness]
A $\QPKE$ scheme $\Pi$ is \emph{correct} if for any message $\mu \in \hildd{M}$, $i\in\mathbb{N}$, and time $\ttt\in [t_i,t_{i+1})$,
\begin{align*} \Pr{\left[
\begin{tabular}{c|c}
 \multirow{5}{*}{$\textsf{Dec}(\sk,\textsf{c})=\mu\ $} &   $\sk\ \leftarrow \textsf{KeyGen}(1^\lambda)$ \\ 
  & $\rho\ \leftarrow \textsf{pkSend}(\sk,\ttt)$\\
  &$\ck_i\leftarrow \textsf{ckAnnounce}(\sk,t_{i+1})$\\
 & $\nk\ \leftarrow \textsf{ekReceive}(\rho,\ck_i)$\\
 & $\textsf{c}\ \leftarrow \textsf{Enc}(\nk,\mu)$\\
 \end{tabular}\right]} \geq 1-\negl[\lambda] .
\end{align*}
\end{definition}

We present an experiment to test a notion we call \emph{indistinguishability against adaptive chosen key and ciphertext attack} (IND-qCKCA2) which strengthens the well-known indistinguishability against adaptive chosen ciphertext attack (IND-qCCA2) \cite{BZ133,GSM20}. In our experiment, the adversary can receive a polynomial number of public-key copies and has quantum access to the encryption and decryption oracles (except on the challenge ciphertext) as in the standard experiment. However, in our case, the adversary is additionally given full control over the distribution of the quantum public-keys and can choose the quantum key used for generating the challenge ciphertext. The certification key is assumed to be securely announced and can be used for validating the quantum public-key. 

\smallskip \noindent\fbox{%
    \parbox{\textwidth}{%
\textbf{Experiment} $\textsf{QPKE}^{\text{IND-qCKCA2}}_{\Pi,\adv}({\lambda},j)$:
\begin{enumerate}
    \item Experiment samples classical secret $\sk\leftarrow \textsf{KeyGen}(1^\lambda)$, a bit $b\leftarrow  \{0,1\}$, and creates an empty list $\mathcal{C}$.
    \item For each $i\leq j$, experiment generates $\ck_i\leftarrow \textsf{ckAnnounce}(\sk,t_{i+1})$.
    \item $\tilde{\rho}\leftarrow \adv^{\textsf{pkSend}(\sk)}((\ck_i)_{i< j})$. 
    \item Experiment extracts encryption key $\nk\leftarrow \textsf{ekReceive}(\tilde{\rho},\ck_j)$.
    \item $\adv$ is allowed to make 3 types of queries:
    \begin{enumerate}
        \item \textbf{Challenge queries:} $\adv$ sends two messages $m_0,m_1$ and experiment responds with $\textsf{c} \leftarrow \textsf{Enc}(\nk,m_b)$. Experiment adds $\textsf{c}$ to $\mathcal{C}$.
        \item \textbf{Encryption queries:} Experiment chooses randomness $r$ and encrypt messages in a superposition using $r$ as randomness:
        \begin{align*}
            \sum_{s,m,c}\alpha_{s,m,c} \ket{s}\ket{m,c} \rightarrow \sum_{s,m,c}\alpha_{s,m,c}\ket{s} \ket{m,c \oplus \textsf{Enc}(\nk,m;r)} 
        \end{align*}
        \item \textbf{Decryption queries:} Experiment decrypts all ciphertexts in superposition except the challenge ciphertexts:
        \begin{align*}
            \sum_{s,m,c}\alpha_{s,m,c} \ket{s}\ket{c,m} \rightarrow \sum_{s,m,c}\alpha_{s,c,m} \ket{s}\ket{c,m \oplus f(c)} 
        \end{align*}
        where 
        \begin{align*}
\begin{split}
f(c)  = \begin{cases} 
\textsf{Dec}(\sk,c) & {c}\notin \mathcal{C}\\
\perp & {c}\in \mathcal{C}.\\ 
\end{cases}
\end{split}        \end{align*}
    \end{enumerate}
    \item $\adv$ outputs a guess $b'$.
    \item The output of the experiment is $1$ if $b=b'$ and 0 otherwise. 
    \end{enumerate}}}
    \smallskip  

\begin{definition}[Security]
A $\QPKE$ scheme $\Pi$ is IND-qCKCA2 if for any QPT adversary $\adv$,
\begin{align*}
    \Pr{[\textsf{QPKE}^{\textit{IND-qCKCA2}}_{\Pi,{\adv}}({\lambda})=1]} &\leq \frac{1}{2}+\negl[\lambda].
\end{align*}
\end{definition}

We can also define weaker notions such as IND-qCKPA and IND-qCKCA1 security in a natural way by restricting access to the decryption oracle. We can also weaken the security definition by allowing only classical oracle access to the encryption and decryption functionalities. We denote the resulting notions by IND-CKCA2, IND-CKCA1, and IND-CKPA.

\subsection{Construction with Mixed-State Keys}

We now construct a qCKCA2-secure $\QPKE$ scheme with an unrestricted distribution of mixed-state keys. The quantum public-key is a state encoding an evaluation of a \textsf{pq-PRF} and is signed using a $\TdVK$ signature scheme. Each user can learn a random evaluation of the \textsf{pq-PRF} which can be used as a secret encryption key. Messages are then encrypted under any qCCA2-secure symmetric encryption scheme. Recall that the existence of $\OWF$s implies the existence of a \textsf{pq-PRF}s and a qCCA2-secure symmetric encryption scheme (Lemma \ref{lem:sym}). 

\begin{construct}
\label{con:Asy}
{\small Let $\Pi_{\textnormal{SE}}=(\textsf{SE.Gen},\textsf{SE.Enc},\textsf{SE.Dec})$ be the algorithms for an IND-qCCA2-secure symmetric encryption scheme on $n$-bit messages such that key generation makes $p<\lambda/2$ coin tosses. Let $\textbf{H}_{\lambda,p}$ be a two-universal class of hash functions and let $m$ denote its size. Let $F:\{0,1\}^{\lambda}\times \{0,1\}^n\rightarrow \{0,1\}^\lambda$, $F':\{0,1\}^{\lambda}\times \{0,1\}^\lambda \rightarrow \{0,1\}^{\lambda}$, $F'':\{0,1\}^{\lambda}\times \{0,1\}^n\rightarrow \{+,\times \}^{2\lambda}$ and $F''':\{0,1\}^{\lambda}\times \{0,1\}^n\rightarrow \{0,1\}^{\lg m}$ be \textsf{pq-PRF}s. The construction for $\QPKE$ on $n$-bit messages is as follows:
\begin{itemize}
    \item $\textsf{KeyGen}(1^\lambda):$ Sample $\nk,\nk',\nk'',\nk'''\leftarrow \{0,1\}^\lambda$ and output $\sk\coloneqq (\nk,\nk',\nk'',\nk''')$. 

\item $\textsf{pkSend}(\sk,\ttt):$ Let $i\in \mathbb{N}$ be such that $\ttt \in [t_i,t_{i+1})$.
\begin{enumerate}
    \item Set $\nk_i'\coloneqq F(\nk',i)$, $\nk_i''\coloneqq F(\nk'',i)$ and choose hash function $H_{i}$ from the set $ \textbf{H}_{\lambda,p}$ based on $\nk_i'''\coloneqq F'''(\nk''',i)$. (Note for the final step, $\nk_i'''$ is a string of length $\lg(m)$ and thus can be interpreted to encode a choice of hash function from the set $\textbf{H}_{\lambda,p}$.)
    \item Choose $x\leftarrow \{0,1\}^{n}$ and let $y_x\coloneqq F(\nk,x)$, $s_{x,i}\coloneqq F'(\nk_i',y_x)$, and $\theta_{x,i}\coloneqq F''(\nk_i'',x)$.
    \item Output $\ket{\pk_{i,x}}\coloneqq  |x\rangle \otimes  |y_x\|s_{x,i} \rangle_{\theta_{x,i}}$.
\end{enumerate} 

\item $\textsf{ckAnnounce}(\sk,\ttt):$ If $\ttt=t_{i+1}$ for some $i\in \mathbb{N}$, announce $\ck_i\coloneqq (t_i,t_{i+1},\nk_i',\nk_i'',\nk_i''')$.

\item $\textsf{ekReceive}(\rho, \ck_{j}):$ 
\begin{enumerate}
    \item Interpret $\ck_j$ as $(t_j,t_{j+1},\nk_j',\nk_j'',\nk_j''')$.
    \item Measure the first register of $\rho$ in computational basis to obtain say $v$. 
    \item Measure the second register in basis $F''(\nk_j'',v)$ to obtain say $(y,s)$. 
    \item Determine $H_j$ from $\nk_j'''$.
    \item Compute $\ek_v\coloneqq \textsf{SE.Gen}(1^\lambda;H_j(y_v)).$
    \item If $s=F'(\nk_j',y)$, then output $\nk_v\coloneqq (v,j,\ek_v)$.
    \item Otherwise, output $\perp$.
\end{enumerate}

\item $\textsf{Enc}(\nk_v,m):$ 
\begin{enumerate}
    \item If $\nk_v=\perp$, then output $\perp$.
    \item Interpret $\nk_v$ as $(v,j,\ek_v)$.
    \item Output $(v, j, \textsf{SE.Enc}(\ek_v, m))$.
\end{enumerate}

\item $\textsf{Dec}(\sk,\textsf{ct}):$ 
\begin{enumerate}
    \item Interpret $\textsf{ct}$ as $(v,j,{c})$. 
    \item If ${c}=\perp$, then output $\perp$. 
    \item Otherwise, compute $F(\nk, v)=y_v$ and $F'''(\nk''',j)$ to determine $H_j$.
    \item Compute $\ek_v=\textsf{SE.Gen}(1^\lambda;H_j(y_v)).$
    \item Output $\textsf{SE.Dec}(\ek_v,c)$. 
\end{enumerate} 
\end{itemize}}
\end{construct}

\begin{theorem}
    Construction \ref{con:Asy} is a qCKCA2-secure $\QPKE$ scheme assuming the existence of a qCCA2-secure symmetric encryption and \textsf{pq-PRF}s. 
\end{theorem}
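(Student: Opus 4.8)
\emph{Proof plan.} The plan is to prove IND-qCKCA2 security through a short hybrid sequence that strips away the pseudorandom structure, exhibits the extracted encryption key $\ek_v$ as an essentially uniform $\Pi_{\textnormal{SE}}$ key about which $\adv$ has no information, and then reduces to the IND-qCCA2 security of $\Pi_{\textnormal{SE}}$ (Lemma~\ref{lem:sym}). Throughout we may assume that $\textsf{ekReceive}(\tilde\rho,\ck_j)$ in step~4 of the experiment does not return $\perp$: otherwise $\textsf{Enc}(\perp,\cdot)=\perp$, every oracle answer is independent of $b$, and $\adv$ has zero advantage.

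\textbf{Hybrid~1 (derandomize).} I would replace, one PRF at a time, every pseudorandom value computed under key material $\adv$ never sees: $F(\nk,\cdot)$ by a uniform random function $g$ (used consistently in $\textsf{pkSend}$, $\textsf{ekReceive}$ and $\textsf{Dec}$, so that $y_x=g(x)$ everywhere); $F(\nk',j),F(\nk'',j),F'''(\nk''',j)$ by uniform strings, hence $F'(\nk_j',\cdot)$ and $F''(\nk_j'',\cdot)$ by uniform random functions $h'$ and $\theta$, and $H_j$ by a uniform element of $\textbf{H}_{\lambda,p}$. Each replacement is indistinguishable by \textsf{pq-PRF} security, since $\adv$ only receives $\ck_i$ for $i<j$, which reveal $\nk_i',\nk_i'',\nk_i'''$ for $i<j$ — values independent of the index-$j$ material once the outer PRFs are random.

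\textbf{Hybrid~2 (consistent extraction, then fresh key).} First I would show that, conditioned on acceptance, the measured value $y$ equals $g(v)$ except with probability $O(2^{-\lambda})$, so that the challenge key $\textsf{SE.Gen}(1^\lambda;H_j(y))$ coincides with $\textsf{SE.Gen}(1^\lambda;H_j(g(v)))$ — the key $\textsf{Dec}$ uses on a ciphertext with prefix $(v,j)$. Acceptance requires $s=h'(y)$ with $h'$ a random function hidden from $\adv$, and the only copies carrying a valid pair are the fresh time-$j$ public keys $\ket{\pk_{j,x}}=\ket{x}\otimes\ket{g(x)\|h'(g(x))}_{\theta_{x,j}}$ (with pairwise distinct $x$'s w.h.p.); steering the basis-$\theta_{v,j}$ measurement onto a pair $(y,h'(y))$ with $y\neq g(v)$ would amount to guessing a fresh value of $h'$. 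Then I would replace $\ek_v=\textsf{SE.Gen}(1^\lambda;H_j(g(v)))$ by an independent uniform key. The danger is that $\adv$ retains a register $E$ entangled with its time-$j$ copies, so that the $\textsf{ekReceive}$ measurement in the hidden basis $\theta_{v,j}$ (never disclosed, since $\ck_j$ is withheld) leaves $E$ correlated with $g(v)$. I would rule this out with the monogamy-of-entanglement bound (Theorem~\ref{thm:mono}), casting the interaction as an instance of $\textsf{G}_{\textsf{BB84}}$ whose basis is never announced to $\adv$: no strategy can make the $\textsf{ekReceive}$ check accept \emph{and} keep $E$ more than negligibly informative about $g(v)$. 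Folding in the classical side information ($v$, $s$, the $\ck_i$'s, and the oracle transcripts) via the chain rule (Lemma~\ref{chain}) then gives $H^{\epsilon}_{\infty}(g(v)\mid E,\mathrm{view})\geq\lambda-o(\lambda)$ conditioned on acceptance, for negligible $\epsilon$; since $\Pi_{\textnormal{SE}}$'s key generation uses only $p<\lambda/2$ coins, privacy amplification (Theorem~\ref{privacy amplification}) with the independent uniform hash $H_j$ makes $H_j(g(v))$, hence $\ek_v$, statistically close to uniform and independent of $\adv$'s state.

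\textbf{Reduction and main obstacle.} With $\ek_v$ now a fresh uniform key unknown to $\adv$, the residual game is precisely the IND-qCCA2 experiment for $\Pi_{\textnormal{SE}}$: challenge queries become $\textsf{SE.Enc}(\ek_v,m_b)$; (quantum) encryption queries and (quantum) decryption queries on non-challenge ciphertexts with prefix $(v,j)$ are forwarded to the $\textsf{SE.Enc}(\ek_v,\cdot)$ and $\textsf{SE.Dec}(\ek_v,\cdot)$ oracles; all remaining decryption queries (prefix $(v',j')\neq(v,j)$) and bookkeeping are simulated internally, since the reduction holds the scheme's entire secret state except $\ek_v$, together with the list $\mathcal C$. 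An $\adv$ with non-negligible advantage against Construction~\ref{con:Asy} would therefore break the IND-qCCA2 security of $\Pi_{\textnormal{SE}}$, contradicting Lemma~\ref{lem:sym}. I expect essentially all of the difficulty to lie in Hybrid~2: turning the chosen-key attack — where $\adv$, not the experiment, prepares the BB84 register and may interleave quantum encryption and decryption queries — into a clean reduction to the monogamy game, and bounding the smooth conditional min-entropy tightly enough for privacy amplification to apply.
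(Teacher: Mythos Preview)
Your proposal is correct and follows essentially the same route as the paper: derandomize the PRFs (the paper splits this into two hybrids $\Hy_1,\Hy_2$ while you combine them), then use the monogamy-of-entanglement bound together with privacy amplification to argue that the extracted $\Pi_{\textnormal{SE}}$ key is close to uniform and independent of $\adv$'s state, and finally reduce to IND-qCCA2 of $\Pi_{\textnormal{SE}}$. Your extra care about consistent extraction ($y=g(v)$) and the explicit chain-rule step fill in details the paper leaves implicit, but the argument is the same.
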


\begin{proof}
We prove security through a sequence of hybrids. Hybrid $\Hy_0$ is the standard experiment adapted to our construction:

\smallskip \noindent\fbox{%
    \parbox{\textwidth}{%
\textbf{Hybrid} $\Hy_0$ (standard experiment):
\begin{enumerate}
    \item Experiment samples classical secret and certification key $\sk\leftarrow \textsf{KeyGen}(1^\lambda)$, a bit $b\leftarrow  \{0,1\}$, and creates an empty list $\mathcal{C}$.
    \item For each $i\leq j$, experiment generates $\ck_i\leftarrow \textsf{ckAnnounce}(\sk,t_{i+1})$.
    \item $\tilde{\rho}\leftarrow \adv^{\textsf{pkSend}(\sk)}((\ck_i)_{i< j})$. Let $(\ket{\pk_{n_i, x_i}})_{i\in p}$ denote the public-key copies received.
    \item Experiment extracts encryption key $\nk\leftarrow \textsf{ekReceive}(\tilde{\rho},\ck_j)$.
    \item $\adv$ is allowed to make 3 types of queries:
    \begin{enumerate}
        \item \textbf{Challenge queries:} $\adv$ sends two messages $m_0,m_1$ and experiment responds with $\textsf{c} \leftarrow \textsf{Enc}(\nk,m_b)$. Experiment adds $\textsf{c}$ to $\mathcal{C}$.
        \item \textbf{Encryption queries:} Experiment chooses randomness $r$ and encrypt messages in a superposition using $r$ as randomness:
        \begin{align*}
            \sum_{s,m,c}\alpha_{s,m,c} \ket{s}\ket{m,c} \rightarrow \sum_{s,m,c}\alpha_{s,m,c}\ket{s} \ket{m,c \oplus \textsf{Enc}(\nk,m;r)} 
        \end{align*}
        \item \textbf{Decryption queries:} Experiment decrypts all ciphertexts in superposition except the challenge ciphertexts:
        \begin{align*}
            \sum_{s,m,c}\alpha_{s,m,c} \ket{s}\ket{c,m} \rightarrow \sum_{s,m,c}\alpha_{s,c,m} \ket{s}\ket{c,m \oplus f(c)} 
        \end{align*}
        where 
        \begin{align*}
\begin{split}
f(c)  = \begin{cases} 
\textsf{Dec}(\sk,c) & \textsf{c}\notin \mathcal{C}\\
\perp & \textsf{c}\in \mathcal{C}.\\ 
\end{cases}
\end{split}        \end{align*}
    \end{enumerate}
    \item $\adv$ outputs a guess $b'$.
    \item The output of the experiment is $1$ if $b=b'$ and 0 otherwise. 
    \end{enumerate}}}
    \smallskip  

\textbf{Hybrid} $\Hy_1$: This is the same as $\Hy_0$ except $F(\nk,\cdot)$, $F(\nk',\cdot)$, $F(\nk'',\cdot)$ and $F'''(\nk''', \cdot)$ are replaced with completely random functions with the same domain and co-domain. 

\textbf{Hybrid} $\Hy_2$: This hybrid is the same as $\Hy_1$ except the functions $F'(\nk_j',\cdot)$ and $F''(\nk_j'',\cdot)$ are replaced with completely random functions $R'$ and $R''$ on the same domain and co-domain.

\begin{claim}
For any QPT adversary, 
\begin{align*}
    \Pr[\Hy_2=1]\leq \negl[\lambda].
\end{align*}
\end{claim}

\begin{proof}
Assume there exists a QPT adversary $\adv$ such that $\Pr[\Hy_2=1]$ is non-negligible. In $\Hy_2$, the adversary gets public-key copies $(\ket{\pk_{n_i, x_i}})_{i\in p}$ and outputs $\tilde{\rho}$. The experiment then extracts a subkey ${\nk}\leftarrow \textsf{ekReceive}(\tilde{\rho},\ck_j)$ as follows. It measures the first register of $\tilde{\rho}$ in the computational basis to obtain an outcome, which we denote as $\tilde{x}$. Then, it measures the second register in basis $R''(\tilde{x})$ to obtain an outcome, which we denote $(\tilde{y},\tilde{s})$. 

The only way that ${\nk}\neq \perp$ is if $\tilde{s}=R'(\tilde{y})$. Since $R'$ is a random function, (whp) this can only occur if one of the key copies $(\ket{\pk_{n_i, x_i}})_{i\in p}$ contains the value $R'(\tilde{y})$. Note also that this must be a key generated after time $t_j$ as $R'$ is only used in this case. 

Assume, without loss of generality, that the first public-key $\ket{\pk_{n_0,x_0}}=|x_0\rangle \otimes |{y}_0\|R'({y}_0)\rangle_{R''(x_0)}$ satisfies $y_0=\tilde{y}$ and $n_0=j$. In this hybrid, $y_0$ is generated uniformly at random. So (whp) $\ket{\pk_{j,x_0}}$ is the only key that contains any information regarding $y_0$. 

We are now in the same setting as the monogamy of entanglement game (see Sec.~\ref{sec:mon}). Recall, in the monogamy of entanglement game, Alice chooses a random binary string and sends it to Bob and Charlie in a random BB84 basis. In this case, Alice is the part of the experiment that produces the public-keys and sends $|{y}_0\rangle_{R''(x_0)}$, Bob is the adversary and Charlie is the part of the experiment that runs $\textsf{ekReceive}(\tilde{\rho},\ck_j)$. Since the second experiment (Charlie) learns $y_0$, monogamy of entanglement implies that $\adv$ must be uncertain about $y_0$. 

Mathematically, if $Y_0$ is the random variable which represents the distribution of $y_0$ from the perspective of $\adv$, then Theorem \ref{thm:mono} implies that $H_{\infty}(Y_0)\geq \frac{\lambda}{5}$. Then, Privacy Amplification Theorem \ref{privacy amplification} says that $H_j(Y_0)$ is indistinguishable from random, given that $H_j$ is sampled using a random function. Therefore, $\adv$ cannot distinguish the key used for encryption from random but can distinguish the bit $b$ from random. As a result, $\adv$ can, with standard reductions, be converted into a QPT adversary that breaks the IND-qCCA2 security of $\Pi_{\textnormal{QAE}}$, giving a contradiction. %Given $R'(y_0)$, the minimum entropy of $Y_0$ is still higher than $\frac{3\lambda}{20}$ by the chain rule (Lemma \ref{chain}). 
    \qed
\end{proof}

\begin{claim}
No QPT adversary can distinguish between hybrids $\Hy_2$ and $\Hy_1$ with non-negligible advantage.
\end{claim}

\begin{proof}
The only difference between these hybrids is that the functions $F'(\nk_j',\cdot)$ and $F''(\nk_j'',\cdot)$ in $\Hy_1$ are replaced with completely random functions $R'$ and $R''$ in $\Hy_2$. Note that in $\Hy_1$, the keys $\nk_j'$ and $\nk_j''$ are generated using a random function and hence are indistinguishable from random with respect to any adversary. Therefore, by the security of \textsf{pq-PRF}s, no adversary can distinguish between these hybrids.
    \qed
\end{proof}

\begin{claim}
No QPT adversary can distinguish between hybrids $\Hy_1$ and $\Hy_0$ with non-negligible advantage.
\end{claim}

\begin{proof}
The only difference between these hybrids is that $F(\nk,\cdot)$, $F(\nk',\cdot)$, $F(\nk'',\cdot)$ and $F'''(\nk''', \cdot)$ in $\Hy_0$ are replaced with completely random functions in $\Hy_1$. Given that the keys $\nk,\nk',\nk'',\nk'''$ are sampled uniformly at random, no adversary can distinguish between these two hybrids by the security of \textsf{pq-PRF}s.
    \qed
\end{proof}

To sum up, for any QPT adversary, the probability that $\Hy_2$ outputs 1 is negligible, no adversary can distinguish between $\Hy_2$ and $\Hy_1$ with non-negligible advantage, and no adversary can distinguish between $\Hy_1$ and $\Hy_0$ with non-negligible advantage. Therefore, $\Pr[\Hy_0=1]\leq \negl[\lambda]$.
\qed
\end{proof}

Note that the public-keys in Construction \ref{con:Asy} are mixed-state. Other $\QPKE$ constructions also rely on mixed-state keys \cite{KMN23,BS23}, however, it might be desirable to have pure-state keys. Unfortunately, it is not possible to achieve publicly verifiable pure-state public-keys with unrestricted distribution. This is because the impossibility result \cite{AGM21} on standard quantum signatures implies that we cannot sign the quantum public-keys with a time-independent verification key in the standard model. Hence, the verification key must change which implies the signed states must also change. However, this issue can be solved by distributing keys only at the beginning of the protocol as shown in the next section. 

\subsection{Construction with Pure-State Keys}
\label{sec:pure-state}

In this section, we construct a CKCPA-secure $\QPKE$ scheme from PRFSs with pure-state public-keys distributed only at the start of the protocol. The idea is to apply the one-time $\TdVK$ signature scheme (Construction \ref{con:one-time sig}) discussed in Sec.~\ref{sec:one-time sig} to sign the quantum public-keys. Once the certification key is revealed, allowing users to validate the public-keys, new public-keys can no longer be distributed securely so the distribution phase terminates. As the one-time signature scheme is information-theoretically secure, this approach can be applied to any $\QPKE$ scheme with no extra assumptions. 

Our construction is inspired by the $\QPKE$ scheme presented in \cite{GSV23} which is built from PRFSs. Their scheme does not address attacks that tamper with public-keys which is where our contribution is made. Recall, PRFSs are potentially a weaker assumption than $\OWF$s \cite{K21}. 

Unlike the previous construction, users extract a quantum encryption key from the public-key and each encryption requires a new public-key copy. Furthermore, the scheme is defined only for 1-bit messages, however, it can easily be generalized to encrypt a string of bits by encrypting them one by one.

\begin{construct}
\label{con:pure-state}
{\small Let $\{\psi_{k,x}\}_{k,x}$ be a PRFS family with super-logarithmic input-size and let $\Pi_{\textnormal{1QAE}}=(\textsf{1QGen},\textsf{1QEnc},\textsf{1QDec})$ be the algorithms from a one-time \textsf{QAE} scheme. Let $t_{\textnormal{end}}$ be the time that the key distribution ends. The construction for $\QPKE$ is as follows:
\begin{itemize}
    \item $\textsf{KeyGen}(1^\lambda):$ Sample $\nk\leftarrow \{0,1\}^\lambda$ and $\nk'\leftarrow \textsf{1QGen}(1^\lambda)$. Output $\sk\coloneqq (\nk,k')$.
    
    \item $\textsf{pkSend}(\sk,\ttt):$ If $\ttt \leq t_{\textnormal{end}}$, then output $\ket{\pk} \coloneqq \textsf{1QEnc}_{k'}(\frac{1}{\sqrt{2^\lambda}}\sum_{x\in \{0,1\}^\lambda}\ket{x}\ket{\psi_{k,x}})$. Otherwise, output $\perp$.
    
    \item $\textsf{ckAnnounce}(\sk,\ttt):$ If $\ttt=t_{\textnormal{end}}$, then output $\ck\coloneqq (t_{\textnormal{end}},k')$. Otherwise, output $\perp$.

    \item $\textsf{ekReceive}({\rho},\ck, \ttt):$ Interpret $\ck$ as $(t_{\textnormal{end}},k')$. If $\rho$ is received after $t_{\textnormal{end}}$, then output $\perp$. Otherwise, output $\gamma \coloneqq \textsf{1QDec}_{k'}(\rho)$. 

\item $\textsf{Enc}(\gamma,m):$ 
\begin{enumerate}
    \item Measure the first register of $\gamma$ in the computational basis and let $x$ denote the output. Let $\psi$ be the state in the second register.
    \item If $m=0$, then let ${\phi}\coloneqq \psi$ and if $m=1$, then let ${\phi}$ be the maximally mixed state. 
    \item Output $\sigma \coloneqq (x,{\phi})$. 
\end{enumerate} 

\item $\textsf{Dec}(\sk,\sigma):$ 
\begin{enumerate}
    \item Interpret $\sigma$ as $(x,{\phi})$.
    \item If ${\phi}=\ket{\psi_{k,x}}$, then output 0. Otherwise, output 1. 
\end{enumerate} 
\end{itemize}}\end{construct}

\begin{theorem}
    Construction \ref{con:pure-state} is a CKPA-secure $\QPKE$ scheme assuming the existence of PRFS with super-logarithmic input size. 
\end{theorem}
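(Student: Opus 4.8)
The plan is to establish IND-CKPA security of Construction~\ref{con:pure-state} through a short hybrid argument starting from the real IND-CKPA experiment for this scheme, call it $\Hy_0$: the adversary $\adv$ receives polynomially many honest public-key copies $\textsf{1QEnc}_{k'}(\ket{\Psi_k})$, where $\ket{\Psi_k}\coloneqq\frac{1}{\sqrt{2^\lambda}}\sum_{x}\ket{x}\ket{\psi_{k,x}}$, supplies the (possibly tampered) public-key copy on which the experiment runs $\textsf{ekReceive}$ for each encryption or challenge query, and has classical access to the encryption oracle but no decryption oracle. \textbf{Step~1 (remove the tampering).} Take the key-distribution deadline $t_{\textnormal{end}}$ to be the single time at which the certification key $\ck=(t_{\textnormal{end}},k')$ is announced, so that $\adv$ must commit every $\tilde\rho$ before ever learning $k'$. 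Pass to $\Hy_1$, in which, whenever the experiment would compute $\gamma\leftarrow\textsf{1QDec}_{k'}(\tilde\rho)$ and obtain $\gamma\neq\perp$, it instead discards $\gamma$ and answers the corresponding query using a freshly generated honest copy of $\ket{\Psi_k}$, while if $\textsf{1QDec}_{k'}(\tilde\rho)=\perp$ it answers $\perp$ (so $\adv$ gains no information about $b$). The gap $\lvert\Pr[\Hy_0=1]-\Pr[\Hy_1=1]\rvert$ is negligible by the information-theoretic authenticity of the one-time \textsf{QAE} (Lemma~\ref{lem:one-time enc}): every ciphertext $\adv$ ever sees is an encryption of the single fixed state $\ket{\Psi_k}$ under a key that is never revealed to it, so its interaction with $\Pi_{\textnormal{1QAE}}$ reduces to a fixed-message authentication game, and any tampering that escapes detection forces $\textsf{1QDec}_{k'}(\tilde\rho)$ to have fidelity $1-\negl[\lambda]$ with $\ket{\Psi_k}$ (up to a register kept by $\adv$) --- a discrepancy that the computational-basis measurement of the first register inside $\textsf{Enc}$ washes out anyway.

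\textbf{Step~2 (replace the PRFS by Haar).} Let $\Hy_2$ be $\Hy_1$ with the PRFS family $\{\psi_{k,x}\}_x$ replaced, everywhere it occurs --- in the public-key copies handed to $\adv$, in the honest copies introduced in Step~1, and in the encryption and challenge responses --- by a Haar-random family $\{\theta_x\}_x$. Indistinguishability of $\Hy_1$ and $\Hy_2$ follows from quantum-accessible PRFS security: all of $\Hy_1$ can be run given oracle access to $\ket{O_{\textnormal{PRFS}}(k,\cdot)}$, since each copy of $\ket{\Psi_k}$ is produced by one oracle call on the uniform superposition $\frac{1}{\sqrt{2^\lambda}}\sum_x\ket{x}$ and $k'$ is sampled by the reduction itself; replacing the oracle by $\ket{O_{\textnormal{Haar}}(\cdot)}$ produces $\Hy_2$ instead, so a distinguisher between the two hybrids yields a PRFS distinguisher.

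\textbf{Step~3 (conclude in the Haar world).} In $\Hy_2$, every encryption query and every challenge query consumes a fresh copy of $\frac{1}{\sqrt{2^\lambda}}\sum_x\ket{x}\ket{\theta_x}$ and measures its first register, producing an independent uniform input in $\{0,1\}^\lambda$; as the input size is super-logarithmic and only polynomially many queries are made, all these inputs --- in particular the challenge input $x^*$ --- are pairwise distinct with overwhelming probability. Moreover the public-key copies $\adv$ received are one-time \textsf{QAE} ciphertexts under the never-revealed key $k'$, so by the secrecy of $\Pi_{\textnormal{1QAE}}$ they are independent of $\{\theta_x\}_x$ and carry no information about $\theta_{x^*}$. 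Hence $\adv$ holds exactly one copy of the Haar-random pure state $\theta_{x^*}$ and nothing else correlated with it; since the Haar average of $\ket{\theta}\bra{\theta}$ equals the maximally mixed state, the challenge response $(x^*,\theta_{x^*})$ for $m_b=0$ is statistically identical, from $\adv$'s viewpoint, to the response $(x^*,\text{maximally mixed})$ for $m_b=1$. Therefore $\Pr[\Hy_2=1]\le\frac{1}{2}+\negl[\lambda]$, and summing the three bounds gives $\Pr[\Hy_0=1]\le\frac{1}{2}+\negl[\lambda]$.

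\textbf{Main obstacle.} I expect Step~1 to be the delicate point: one must check that the information-theoretic authenticity of the one-time \textsf{QAE} survives $\adv$ observing polynomially many encryptions of the same fixed plaintext under the one key $k'$ (at worst a polynomial security loss, since the plaintext is fixed), and that ``undetectable tampering'' is correctly captured in the fidelity sense even when $\tilde\rho$ is entangled with $\adv$'s internal register --- this is precisely where the delayed reveal of $k'$, i.e.\ the one-time \textsf{TD-DVK} mechanism of Construction~\ref{con:one-time sig}, does the work. The remaining points --- that $\textsf{Enc}$ on a $\perp$ key leaks nothing, and that the input-collision term in Step~3 is negligible by the super-logarithmic input size --- are routine.
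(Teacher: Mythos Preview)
Your hybrid decomposition is different from the paper's, and Step~3 has a genuine gap.

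The paper does \emph{not} replace the whole PRFS family by Haar. Instead, for each $y$ it considers a \emph{single-point} hybrid $\textbf{H}_y$ in which only $\ket{\psi_{k,y}}$ is replaced by a fresh Haar state $\ket{\phi^{\textbf{H}_y}}$. After using $\Pi_{\textnormal{1QAE}}$ authenticity to force $\tilde\rho\approx\ket{\pk}$ (your Step~1), the experiment measures and obtains some $x'$. The paper then looks at $\textbf{H}_{x'}$ and argues: the adversary's stored state $R$ in the \emph{real} experiment is trivially independent of $\ket{\phi^{\textbf{H}_{x'}}}$ (that state simply does not appear there), and by PRFS indistinguishability the same holds approximately for the adversary's state $R^{\textbf{H}_{x'}}$ in the hybrid. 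This is what lets the paper conclude that the challenge state is hidden even though the adversary still holds (possibly decrypted) public-key copies.

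Your Step~3, by contrast, asserts that the public-key copies ``carry no information about $\theta_{x^*}$'' because $k'$ is ``never revealed''. But in Step~1 you yourself state that $\ck=(t_{\textnormal{end}},k')$ is \emph{announced}; the whole point of the construction is that users (and hence the adversary) learn $k'$ after $t_{\textnormal{end}}$. Once $k'$ is known, the adversary can decrypt every stored public-key copy and obtain polynomially many copies of $\frac{1}{\sqrt{2^\lambda}}\sum_x\ket{x}\ket{\theta_x}$, each of which \emph{does} carry $\theta_{x^*}$ in superposition. The secrecy of $\Pi_{\textnormal{1QAE}}$ gives you nothing here. To finish along your route you would have to prove directly that polynomially many copies of the Haar superposition state do not help distinguish $(x^*,\theta_{x^*})$ from $(x^*,\text{mixed})$ for a fresh uniform $x^*$ --- which is essentially the security statement of the underlying GSV23 scheme, not something that falls out of $\Pi_{\textnormal{1QAE}}$. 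The paper's single-point hybrid is precisely the device that sidesteps this: by introducing only \emph{one} fresh Haar state and comparing to the real experiment where it is absent, independence is immediate rather than something to be argued.
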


\begin{proof}
    We first introduce the following hybrid for $y\in \{0,1\}^\lambda$:
    
   $\textbf{H}_{y}:$ This is the same as the standard experiment except the public-keys are generated differently. The experiment chooses a random Haar state $\ket{\phi^{\text{H}_y}}$ and sends public-keys of the form $\ket{\pk^{\text{H}_y}}\coloneqq \textsf{1QEnc}_{k'}(\frac{1}{\sqrt{2^\lambda}}\sum_{x\neq y}(\ket{x}\ket{\psi_{k,x}})+\ket{y}\ket{\phi^{\text{H}_y}})$ whenever requested. 
   
   For any $y\in \{0,1\}^\lambda$, the only difference between $\ket{\pk^{\text{H}_y}}$ and $ \ket{\pk}$ is that the state $\ket{\psi_{k,y}}$ in the public-key is replaced with $\ket{\phi^{\text{H}_y}}$. Since $\ket{\psi_{k,y}}$ is indistinguishable from a random Haar state (by the security of PRFS), $\adv$ cannot distinguish between the public-keys in $\textbf{H}_y$ and in the standard experiment. 

   In the standard security experiment, $\adv$ submits a forged key, denoted by $\tilde{\rho}$, which is used by the experiment to generate the challenge ciphertext. By the security one-time authentication scheme $\Pi_{\textnormal{1QAE}}$ (Lemma \ref{lem:one-time enc}), (whp) either $\tilde{\rho}\approx_{\negl[\lambda]} \ket{\pk}$ or the key extracted by the experiment $\textsf{ekReceive}(\tilde{\rho},\ck)$ is $\perp$. Only the first case needs to be studied since in the second case, the output of any encryption is $\perp$ so $\adv$ cannot distinguish between two ciphertexts. 

   Upon receiving $\tilde{\rho}$, the experiment measures the first register to obtain outcome $x'$ and the second register collapses to a state $\tilde{\psi}_{x'}$. Since $\tilde{\rho}\approx_{\negl[\lambda]} \ket{\pk}$, (whp) we have $ {\tilde{\psi}_{x'} }\approx_{\negl[\lambda]} \ket{\psi_{k,x'}}$. Typically $\adv$ cannot distinguish $\ket{\psi_{k,x'}}$ from a random Haar state; however, $\tilde{\psi}_{x'} $ may be entangled with the state stored by $\adv$ which would allow $\adv$ to distinguish it from random. 
   
   Let $R$ be the state stored by $\adv$ when it sends $\tilde{\rho}$ to the experiment and, similarly, let $R^{\text{H}_{x'}}$ denote the state stored by $\adv$ if it was in hybrid $\textbf{H}_{x'}$. Note that $R$ only depends on the adversary and the public-keys since it is produced prior to the encryption of any messages. But $\ket{\pk^{\text{H}_{x'}}} \approx_{\negl[\lambda]} \ket{\pk}$, which implies that $R^{\text{H}_{x'}}\approx_{\negl[\lambda]} R$. As a result, there exists negligible $\epsilon$ such that the state $\ket{\phi^{\text{H}_{x'}}}$ is $\epsilon$-independent of $R^{\text{H}_{x'}}$ as $R^{\text{H}_{x'}}\approx_{\negl[\lambda]} R$ and $R$ is clearly independent of $\ket{\phi^{\text{H}_{x'}}}$.
   
   In hybrid $\textbf{H}_{x'}$, $\adv$ will receive a state close to $\ket{\phi^{\text{H}_{x'}}}$ for the encryption of 0 and a random Haar state for the encryption of 1. Since $\ket{\phi^{\text{H}_{x'}}}$ is $\epsilon$-independent of $R^{\text{H}_{x'}}$, $\adv$ does not learn information regarding $\ket{\phi^{\text{H}_{x'}}}$ from $R^{\text{H}_{x'}}$. Hence, $\adv$ cannot distinguish between $\ket{\phi^{\text{H}_{x'}}}$ and a random Haar state. In other words, $\adv$ cannot distinguish between the encryption of 0 and 1 in $\textbf{H}_{x'}$. As $\adv$ cannot distinguish between the hybrid and the standard experiment, it cannot distinguish the encryptions in the standard experiment as well. 
   \qed
\end{proof}

\printbibliography

\end{document}